\newcommand{\reffig}[1]{Figure~\ref{fig:#1}}
\newcommand{\reflem}[1]{Lemma~\ref{lem:#1}}
\newcommand{\refalg}[1]{Algorithm~\ref{algo:#1}}
\DeclareMathOperator{\poly}{poly}
\newcommand{\restricteddiagonals}{diagonal-restricted}
\newcommand{\enquote}[1]{``#1''}
\newcommand{\abs}[1]{\left\vert#1\right\vert}
\newcommand{\N}{\mathbb{N}}
\newcommand{\I}{\mathscr{I}}
\newcommand{\J}{\mathscr{J}}
\newcommand{\eps}{\varepsilon}
\newcommand{\metricspace}{M}
\newcommand{\distance}{d}
\newcommand{\traversal}{\mathcal{C}}
\newcommand{\df}[2]{D_{\mathcal{F}}\left(#1,#2\right)}
\newcommand{\hd}[2]{D_{\mathcal{H}}\left(#1,#2\right)}
\newcommand{\tloc}{t}
\newcommand{\tlocal}{t\text{-local}}
\newcommand{\fsm}{M}
\newcommand{\firstcurve}{P}
\newcommand{\secondcurve}{Q}
\newcommand{\vertfirstcurve}{p}
\newcommand{\vertsecondcurve}{q}
\newcommand{\numvertfirst}{n}
\newcommand{\numvertsecond}{m}
\newcommand{\optcoupling}{\mathcal{C}}
\newcommand{\n}{\numvertfirst}
\newcommand{\succprob}{\frac{4}{5}}
\definecolor{ashgrey}{rgb}{0.85, 0.85, 0.85}
\newcolumntype{q}{>{\columncolor{pink}}c}
\newcolumntype{g}{>{\columncolor{ashgrey}}c}
\title{Property Testing of Curve Similarity}
\titlerunning{Property Testing of Curve Similarity}
\author{Peyman Afshani}{Department of Computer Science, Aarhus University, Denmark}{peyman@cs.au.dk}{https://orcid.org/0000-0001-6102-0759}{}
\author{Maike Buchin}{Faculty of Computer Science, Ruhr University Bochum, Germany}{maike.buchin@rub.de}{https://orcid.org/0000-0002-3446-4343}{}
\author{Anne Driemel}{Institute for Computer Science, University of Bonn, Germany}{driemel@cs.uni-bonn.de}{https://orcid.org/0000-0002-1943-2589}{Affiliated with Lamarr Institute for Machine Learning and Artificial Intelligence.}
\author{Marena Richter}{Institute for Computer Science, University of Bonn, Germany}{marenarichter@uni-bonn.de}{https://orcid.org/0009-0007-8250-266X}{}
\author{Sampson Wong}{Department of Computer Science, University of Copenhagen, Denmark}{sampson.wong123@gmail.com}{https://orcid.org/0000-0003-3803-3804}{}
\authorrunning{P. Afshani, M. Buchin, A. Driemel, M. Richter, S. Wong}
\keywords{Fréchet distance, Trajectory Analysis, Curve Similarity, Property Testing, Monotonicity Testing}
\renewcommand{\kappa}{t}
\begin{document}

\maketitle

\begin{abstract}
We propose sublinear algorithms for probabilistic testing of the discrete and continuous Fréchet distance---a standard similarity measure for curves. We assume the algorithm is given access to the input curves via a query oracle: a query returns the set of vertices of the curve that lie within a radius $\delta$ of a specified vertex of the other curve.
The goal is to use a small number of queries to determine with constant probability whether the two curves are similar (i.e., their discrete Fréchet distance is at most $\delta$)
or they are \enquote{$\eps$-far} (for $0 < \eps < 2$) from being similar, i.e., more than an $\eps$-fraction of the two curves must be ignored for them to become similar. 
We present two algorithms which are sublinear assuming that the curves are $\kappa$-approximate shortest paths in the ambient metric space, for some $\kappa \ll n$. The first algorithm uses $O(\frac{\kappa}{\eps} \log\frac{\kappa}{\eps})$  queries and is given the value of $\kappa$ in advance. The second algorithm does not have explicit knowledge of the value of $\kappa$ and therefore needs to gain implicit knowledge of the straightness of the input curves through its queries. We show that the discrete Fréchet distance can still be tested using roughly $O(\frac{\kappa^3+\kappa^2 \log n}{\eps})$ queries ignoring logarithmic factors in $\kappa$. 
Our algorithms work in a matrix representation of the input and may be of independent interest to matrix testing. 
Our algorithms use a mild uniform sampling condition that constrains the edge lengths of the curves, similar to a polynomially bounded aspect ratio. Applied to testing the continuous Fréchet distance of $\kappa$-straight curves, our algorithms can be used for $(1+\eps')$-approximate testing using essentially the same bounds as stated above with an additional factor of $\poly(\frac{1}{\eps'})$. 
\end{abstract}

\newpage
\tableofcontents
\newpage

\pagenumbering{arabic} 

\section{Introduction} 

We initiate the study of property testing for measures of curve similarity,
motivated by the need for fast solutions for curve classification and clustering. Thus,
our research lies at the intersection of property testing and computational geometry.  
While property testing is a very broad area, we believe this intersection has received far less attention than
it deserves. We also believe that property testing for well-studied measures such as the Fr\' echet 
distance is especially well-motivated due to its connections to other problems studied on the curves, such as clustering~\cite{DBLP:conf/gis/BrankovicBKNPW20,DBLP:conf/gis/BuchinDLN19}, similarity search~\cite{DBLP:conf/gis/BaldusB17,DBLP:conf/gis/BuchinDDM17,DBLP:conf/gis/DutschV17} and map reconstruction~\cite{DBLP:conf/gis/BuchinBDFJSSSW17,DBLP:conf/gis/BuchinBGHSSSSSW20}.

Typically in property testing, we are given access to a (large) data set and the goal is to very quickly assess whether
the data has a certain property.
The classical notation of correctness is typically not suitable for property testing since
these algorithms
are intended to be randomized algorithms that access
(very small) parts of data via queries.
Instead, a property testing algorithm is considered
correct if it can satisfy the following two conditions, with a probability close
to 1 (e.g., at least $1-\gamma$ probability for some small value of $\gamma$): 
first, if the input has the desired property, the algorithm must return \textit{accept}  
and second, if the input is \enquote{far} from having the property (under some suitable definition of \enquote{far}), the 
algorithm should \textit{reject} the input. 
Our algorithms have \textit{one-sided error}, meaning, they will not produce \textit{false negatives}
and thus rejection means that the input does not have the desired property.

For further reading on property testing, see~\cite{dana.survey, bhattacharyya2022property}. 
Here, we quickly review the main motivation for property testing in general and then we focus on 
computational geometry. 

Property testing algorithms can be useful in many different scenarios.
In particular, if the input is extremely large, it makes sense to obtain a quick
approximate answer before deciding to run more expensive algorithms. 
This is especially useful if the testing algorithm has one-sided error; in that case, if the test returns
\textit{reject}, there is no need to do any extra work. For the Fr\'echet distance, there are applications~\cite{DBLP:conf/gis/BaldusB17,DBLP:conf/gis/BuchinDDM17,DBLP:conf/gis/DutschV17} where negative filters are used to minimize expensive Fr\'echet distance computations; testing may be useful in such cases.
Property testing is also useful if most inputs are expected to not have the desired property, or 
the input distribution is such that each input either has the property or it is \enquote{far} from having that
property.
In addition, the definition of property testing has a strong connection to topics in learning theory such as
\textit{probably approximately correct (PAC)} learners  and this was in fact one of the important motivations
behind its conception~\cite{ron2001property}. 

Another motivation for property testing is when small errors can be tolerated or objects that are close to 
having the desired property are acceptable outcomes; in this view, property
testing is in spirit similar to approximation algorithms where an approximate answer is considered to be an acceptable output. In the context of similarity testing we can also compare it to a partial similarity measure. In fact, our chosen error model will be very close to a partial Fréchet distance~\cite{buchin2009exact}. For more details on motivations to study property testing see~\cite{goldreich.book,dana.survey}.
See also the aforementioned references for more detailed results on property testing as 
surveying the known results on property testing is beyond the scope of this paper.

Computational geometry has a long tradition of using randomization and sampling to speed up algorithmic approaches~\cite{Mulmuley, sariel-book,matousek-book, handbook}. Property testing has received some attention within computational geometry, but is much less explored compared to other areas.
There are fast and efficient testers for many basic geometric properties, such as convexity of a point set~\cite{czumaj2000property},
disjointness of objects~\cite{czumaj2000property}, the weight of Euclidean minimum spanning tree~\cite{ chazelle05,CzumajEFMNRS05,czumaj09}, clustering~\cite{monemizadeh2023}, the Voronoi property of a triangulation~\cite{czumaj2000property}, ray shooting~\cite{chazelle.sublinear,czumaj2000property}, as well as LP-type problems~\cite{DBLP:conf/icalp/EpsteinS20}. 

Unlike in graph property testing, where it is assumed that we can sample a vertex uniformly at random, and query for its neighbors, it seems that, for geometric data, there is no commonly agreed upon query model. Different types of queries are used to access the data in the above cited works, such as range queries, cone nearest neighbor queries,  queries for the bounding box of the data inside a query hyper-rectangle. 
In our paper, we use queries to the free space matrix of the two input curves, see Section~\ref{sec:model} for the precise definition. 

We mention that there is also work on matrix testing~\cite{fischer2001testing}. Compared to our setting, the input size is defined as $n^2$---the number of entries needed to specify the matrix. In our case, we assume the input size to be $n$, the maximum of column and row dimension of the matrix.

\section{Preliminaries}\label{sec:prelims}
Let $(\metricspace,\distance)$ be a metric space. We say a \emph{(discrete) curve} $\firstcurve$ in $(\metricspace,\distance)$ is an ordered point sequence $\langle\vertfirstcurve_1,\ldots,\vertfirstcurve_{\numvertfirst}\rangle$ with $\vertfirstcurve_i\in\metricspace$ for all $i=1,\ldots,\numvertfirst$. 
We call the points of the curve \emph{vertices}.
We denote by $\abs{\firstcurve}$ the number of vertices in $\firstcurve$ and by $\ell(P)$ its \emph{length}, which is defined as $\ell(P)=\sum_{i=1}^{\numvertfirst-1}\distance(\vertfirstcurve_i,\vertfirstcurve_{i+1})$.
The \emph{subcurve} of $\firstcurve$ between $\vertfirstcurve_i$ and $\vertfirstcurve_j$ is denoted by $\firstcurve[i,j]$.
A curve $\firstcurve$ is called \emph{$\kappa$-straight} if for any two vertices $\vertfirstcurve_i$ and $\vertfirstcurve_j$ in $\firstcurve$, we have $\ell(P[i,j])\leq\kappa\cdot\distance(\vertfirstcurve_i,\vertfirstcurve_j)$.
Denote by $[\n]$ the set of integers from $1$ to $\n$ and by $[\n]\times[\n]\subset\N\times\N$ the integer lattice of $\n$ times $\n$ integers.
Given two curves $\firstcurve=\langle\vertfirstcurve_1,\ldots,\vertfirstcurve_{\numvertfirst}\rangle$ and $\secondcurve=\langle\vertsecondcurve_1,\ldots,\vertsecondcurve_{\n}\rangle$, we say that an ordered sequence $\traversal$ of elements in $[\n]\times[\n]$ is a \emph{coupling} of $\firstcurve$ and $\secondcurve$, if it starts in $(1,1)$, ends in $(\n,\n)$ and for any consecutive tuples $(i,j),(i',j')$ in $\traversal$ it holds that {$(i',j')\in\{(i+1,j),(i,j+1),(i+1,j+1)\}$}.
We define the \emph{discrete Fréchet distance}
between $\firstcurve$ and $\secondcurve$ as follows 
\[\df{\firstcurve}{\secondcurve}\coloneqq\min\limits_{\text{coupling }\traversal}\max_{(i,j)\in\traversal}\distance(\vertfirstcurve_i,\vertsecondcurve_j).\]
For brevity, we simply call this the Fréchet distance between $\firstcurve$ and $\secondcurve$. 
One can verify that the Fréchet distance satisfies the triangle inequality.
The \emph{free space matrix} of $\firstcurve$ and $\secondcurve$ with distance value $\delta$ is an $\n\times\n$ matrix $\fsm_{\delta}$, where the $i$-th column corresponds to the vertex $\vertfirstcurve_i$ of $\firstcurve$ and the $j$-th row corresponds to the vertex $\vertsecondcurve_j$ of $\secondcurve$. 

The entry $\fsm_{\delta}[i,j]$ has the value $0$ if $\distance(\vertfirstcurve_i,\vertsecondcurve_j)\leq\delta$ and $1$ otherwise.\footnote{Note we use $0$ and $1$ in switched roles compared to the conventions in the literature on the free space matrix. Our notation makes the cost-function of the path more intuitive.}
A coupling path $\mathcal{C}$ is the path through the $[\n]\times[\n]$ integer lattice defined by the tuples of the coupling $\mathcal{C}$.
We define the \emph{cost} of such a path as $c(\mathcal{C})=\sum_{(i,j)\in\mathcal{C}}\fsm_{\delta}[i,j]$.  
Note that the Fréchet distance between $\firstcurve$ and $\secondcurve$ is at most $\delta$ if and only if there exists a coupling path with cost $0$ from $(1,1)$ to $(\n,\n)$. 

Our analysis is based on a property of the free space matrix. We first define this property and then link the property to a certain class of well-behaved input curves.

\begin{definition}[$\tlocal$]
    \label{definition:t-local}
    Let $\fsm$ be a free space matrix of curves $\firstcurve$ and $\secondcurve$. We say that $\fsm$ is \emph{$\tlocal$} if, for any tuples $(i_1,j_1)$ and $(i_2,j_2)$ with $\fsm[i_1,j_1]=0=\fsm[i_2,j_2]$, it holds that $\abs{i_1-i_2}\leq\tloc\cdot(2+\abs{j_1-j_2})$ and $\abs{j_1-j_2}\leq\tloc\cdot(2+\abs{i_1-i_2})$. 
\end{definition}

See Figure~\ref{fig:t-locality} for a visualization.
A consequence of this property is that zero-entries within one row or within one column cannot be too far away from each other. 
\begin{observation}\label{obs:ones-in-same-row-close}
    Suppose $M$ is $\tlocal$. If we have $\fsm[i,j]=0=\fsm[i,j']$, then we have $\abs{j-j'}\leq2\tloc$. If we have $\fsm[i,j]=0=\fsm[i',j]$, we have $\abs{i-i'}\leq2\tloc$.
\end{observation}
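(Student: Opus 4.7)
The plan is to observe that this statement is an immediate specialization of \refdef{t-local} to the case where the two zero-entries share a column (respectively, a row).

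For the first claim, I would set $(i_1,j_1) = (i,j)$ and $(i_2,j_2) = (i,j')$ in the definition of $\tlocal$. Since the two zero-entries lie in the same column, we have $\abs{i_1-i_2} = 0$. The second inequality in \refdef{t-local} then immediately gives
\[
\abs{j-j'} = \abs{j_1-j_2} \leq \tloc\cdot(2+\abs{i_1-i_2}) = \tloc\cdot(2+0) = 2\tloc.
\]

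The second claim follows symmetrically: take $(i_1,j_1) = (i,j)$ and $(i_2,j_2) = (i',j)$, so that $\abs{j_1-j_2}=0$, and apply the first inequality of \refdef{t-local} to conclude $\abs{i-i'} \leq \tloc \cdot (2+0) = 2\tloc$.

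There is no real obstacle here; the observation is essentially a one-line consequence of plugging equal row (or column) coordinates into the $\tlocal$ condition, and the purpose of stating it separately is merely to have the row/column specialization available by name in later arguments.
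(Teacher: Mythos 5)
Your proof is correct and is exactly the intended argument: the paper states this as an observation without proof precisely because it is the immediate specialization of Definition~\ref{definition:t-local} to two zero-entries sharing a column (so $\abs{i_1-i_2}=0$) or a row (so $\abs{j_1-j_2}=0$). Nothing further is needed.
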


\begin{figure}
    \centering
    \includegraphics[width=0.55\linewidth]{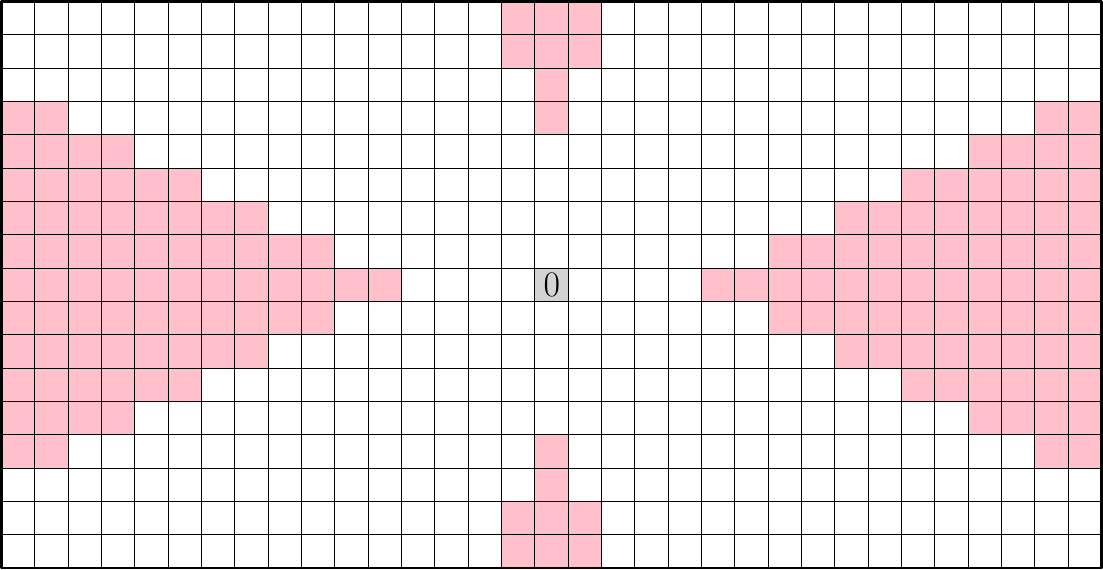}
    \caption{The red cells cannot be zero-entries if the gray cell is a zero-entry and the matrix is $2$-local.}
    \label{fig:t-locality}
\end{figure}

In the following, we argue why we think the $\tloc$-locality assumption is reasonable. Note that $\tloc$ will not exceed $n$ on any input. Moreover, we argue that there are classes of well-behaved curves that satisfy the locality assumption for $t \ll n$. Lemma~\ref{lem:straight} below shows that the free space matrix of approximate shortest paths is $t$-local.
For simplicity, we impose the assumption that the lengths of the edges are bounded by a fixed multiple of $\delta$, the query radius of the Fréchet distance. 
In Section~\ref{sec:largedelta}, we show that our approach can be adapted to work as well if the lengths of the edges are bounded by a constant multiple of any fixed value.

\begin{lemma}\label{lem:straight}
    Let $\firstcurve$ and $\secondcurve$ be $\kappa$-straight curves with edge lengths in $[\delta/\alpha,\alpha\delta]$ for some constant $\alpha\geq1$. Let $\fsm$ be their free space matrix with distance value $\delta$. 
    Then, $\fsm$ is $\mathcal{O}(\kappa)$-local.
\end{lemma}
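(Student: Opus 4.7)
The plan is to bound $|i_1-i_2|$ (and symmetrically $|j_1-j_2|$) using the triangle inequality together with the two structural assumptions on the curves: the lower bound on edge length (which turns an index gap into a length lower bound on the subcurve) and $\kappa$-straightness (which turns a length upper bound on the subcurve into a distance upper bound on its endpoints).

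Concretely, suppose $\fsm[i_1,j_1]=\fsm[i_2,j_2]=0$, so $\distance(\vertfirstcurve_{i_1},\vertsecondcurve_{j_1})\le\delta$ and $\distance(\vertfirstcurve_{i_2},\vertsecondcurve_{j_2})\le\delta$. By the triangle inequality,
\[
\distance(\vertfirstcurve_{i_1},\vertfirstcurve_{i_2})\le 2\delta + \distance(\vertsecondcurve_{j_1},\vertsecondcurve_{j_2}).
\]
Since consecutive vertices on $\secondcurve$ are at distance at most $\alpha\delta$, the right-hand distance along $\secondcurve$ is at most $\lengthpath{\secondcurve[j_1,j_2]}\le\alpha\delta\cdot|j_1-j_2|$, so $\distance(\vertfirstcurve_{i_1},\vertfirstcurve_{i_2})\le\delta(2+\alpha|j_1-j_2|)$. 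Now I invoke $\kappa$-straightness of $\firstcurve$ to bound the length of the subcurve between $\vertfirstcurve_{i_1}$ and $\vertfirstcurve_{i_2}$:
\[
\lengthpath{\firstcurve[i_1,i_2]}\le\kappa\cdot\distance(\vertfirstcurve_{i_1},\vertfirstcurve_{i_2})\le\kappa\delta(2+\alpha|j_1-j_2|).
\]
Finally, the lower bound $\delta/\alpha$ on edge lengths of $\firstcurve$ gives $\lengthpath{\firstcurve[i_1,i_2]}\ge(\delta/\alpha)\cdot|i_1-i_2|$, and combining yields
\[
|i_1-i_2|\le \kappa\alpha(2+\alpha|j_1-j_2|)\le \kappa\alpha^2(2+|j_1-j_2|),
\]
where the last step uses $\alpha\ge 1$. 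The symmetric inequality for $|j_1-j_2|$ follows by swapping the roles of $\firstcurve$ and $\secondcurve$. Setting $t=\kappa\alpha^2=\mathcal{O}(\kappa)$ (since $\alpha$ is constant) establishes $\tlocal$-ness of $\fsm$.

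There is no genuine obstacle here; the only thing one has to be careful about is choreographing the chain of inequalities so that the $+2$ offset in Definition~\ref{definition:t-local} really absorbs the two $\delta$'s coming from the triangle inequality, which is exactly what the $2+\alpha|j_1-j_2|$ term above does. The constant $\alpha\ge 1$ gets swallowed into the $\mathcal{O}(\cdot)$.
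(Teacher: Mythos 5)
Your proof is correct and follows essentially the same route as the paper's: the identical chain of triangle inequality, the upper edge-length bound on $\secondcurve$, $\kappa$-straightness of $\firstcurve$, and the lower edge-length bound on $\firstcurve$, arriving at the same constant $\alpha^2\kappa$. Nothing to add.
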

\begin{proof}
    Let $(i_1,j_1),(i_2,j_2)$ be such that $\fsm[i_1,j_1]=0=\fsm[i_2,j_2]$ and $i_1\leq i_2$.
    So we have $d(p_{i_1},q_{j_1})\leq\delta$ and $d(p_{i_2},q_{j_2})\leq\delta$. We also have $d(q_{j_1},q_{j_2})\leq\ell(Q[j_1,j_2])\leq\alpha\delta\abs{j_1-j_2}$ because of the edge length constraint, where $Q[j_1,j_2]$ is replaced by $Q[j_2,j_1]$ if $j_2<j_1$. Using the edge length constraint and $\kappa$-straightness we derive $\frac{\delta}{\alpha}\abs{i_2-i_1}\leq\ell(P[i_1,i_2])\leq\kappa d(p_{i_1},p_{i_2})$.
    These observations together with the triangle inequality yield
    \[\abs{i_2-i_1}\leq\frac{\alpha\kappa}{\delta}(d(p_{i_1},q_{j_1})+d(q_{j_1},q_{j_2})+d(q_{j_2},p_{i_2}))\leq\alpha^2\kappa(2+\abs{j_2-j_1}).\]
    The other inequality is shown by reversing the roles of $P$ and $Q$.
\end{proof}

\section{Problem definition and results}\label{sec:model}

\subsection{The basic problem}
The problem we study in this paper is the following. Assume we want to determine for two curves $\firstcurve$ and $\secondcurve$, each consisting of $n$ vertices,\footnote{For ease of notation, our analysis assumes the input curves have the same number of vertices. This assumption can be removed.} if their Fréchet distance is at most a given value of $\delta$. We do not have direct access to the input curves. Instead, we have access to an oracle that returns the information in a given row or column  of the $\delta$-free space matrix in the form of a sorted list of indices of zero-entries. We call this a \emph{query} and we want to determine $\df{\firstcurve}{\secondcurve}\leq\delta$ using as few (sublinear in $n$) queries as possible. Note that from the point of view of a data structure setting, our query corresponds to a classical ball range query with a vertex $p$ of one curve and returns the list of vertex indices of the other curve that are contained in the ball of radius $\delta$ centered at $p$. 
 
Our bounds on the number of queries will be probabilistic and will hold under a certain error model. The error model allows for the coupling path to pass through a bounded number of one-entries of the free space matrix. These are entries where the corresponding points are far from each other. In Section~\ref{sec:errormodel}, we discuss alternative error models and how they relate to our chosen error model.

\begin{definition}[$(\varepsilon,\delta)$-far]\label{def:far}
Given two curves $\firstcurve$ and $\secondcurve$ consisting of $n$ vertices each, we say that $\firstcurve$ and~$\secondcurve$ are $(\varepsilon,\delta)$-far from each other if there exists no coupling path from $(1,1)$ to $(n,n)$ in the $\delta$-free space matrix of cost $\varepsilon n$ or less.
\end{definition}

\begin{definition}[Fréchet-tester]
Assume we are given query-access to two curves $\firstcurve$ and $\secondcurve$, and we are given values $\delta > 0$ and $0<\eps<2$. If the two curves have Fréchet distance at most $\delta$, we must return \enquote{yes},
and if they are $(\eps,\delta)$-far from each other w.r.t.\ the Fréchet distance, the algorithm must return \enquote{no}, with probability at least $\succprob$. 
\end{definition}

\subsection{A failed attempt}
A natural way to attack the problem is to cast it as function monotonicity testing in the free space matrix (see also Section 1.6 in \cite{bhattacharyya2022property}). Here, for each column $i$ one has to choose the row $f(i)$ where the coupling path goes through. The challenge here is that there may be $\mathcal{O}(\tloc)$ possible values. The classical function monotonicity testing algorithm works by performing a binary search for the value of $f(i)$ in the array $[f(1),...f(n)]$, and declaring the value to be \enquote{good} if the search ends in position $i$. 
To emulate this approach, one could choose a sleeve of height $\mathcal{O}(\tloc)$ around the zero-entries in the columns of the free space matrix. Thus, testing if a column $i$ is good, now becomes a problem of checking if a coupling path exists through this sleeve in the submatrix corresponding to a small number (i.e. $\mathcal{O}(\tloc\log\n)$) of columns around $i$.
The problem of finding a coupling path is easy to solve. Unfortunately, extending the proof of correctness of the monotone function testing to this case turns out to be surprisingly challenging, and in any case, this would lead to a result that is inferior to the one we present in this paper.

\subsection{Main results}

We present two algorithms for testing whether a pair of input curves have discrete Fr\'echet distance at most a given real value~$\delta$. Assume that the algorithm has query access only to the $\delta$-free space matrix of the input curves and that this matrix is $t$-local (see Definition~\ref{definition:t-local}).
The first algorithm requires knowledge of $t$, and it uses $O(\frac{t}{\eps} \log\frac{t}{\eps})$ queries (Theorem~\ref{thm:frechet-tester-known-t}). Note that this bound is intrinsically independent of $n$ while  $t$ can be at most $n$, by the definition of locality. The second algorithm does not require knowledge of~$t$ in advance (and thus it can be applied to any matrix) and requires
$\mathcal{O}((\tloc^3+\tloc^2\log\n)\frac{\log\log\tloc}{\eps})$
many queries (Theorem~\ref{thm:frechet-tester-unknown-t}).
Using Lemma~\ref{lem:straight} these results directly imply the same bounds for the discrete Fréchet distance of $\tloc$-straight curves with bounded edge lengths. In Section~\ref{sec:ext} we relax the bounded edge length assumption towards a uniform sampling condition (Theorem~\ref{thm:frechet-tester-edge-lengts}).
In Section~\ref{sec:continuous} we show how to apply our algorithms to testing the continuous Fréchet distance. As a result, we obtain essentially the same bounds with respect to $t$-straight curves  while removing the assumption on the edge lengths entirely (Theorem~\ref{thm:frechet-tester-continuous} and Corollary~\ref{cor:final}). Note that for the second Fréchet tester, we assume an aspect ratio that is polynomial in $n$. 

\subsection{Techniques}

Our algorithms use the concept of permeability, which tests whether a specified subcurve has small partial Fréchet distance to any subcurve of the other curve. Concretely, the algorithm samples a block of consecutive columns (or rows) in the $\delta$-free space matrix and checks if there is a coupling path of cost $0$ passing somewhere through this block (see Algorithm~\ref{algo:frechet-tester-exact}).
If the algorithm happens to find a subcurve that does not satisfy this condition, then we know that the global Fréchet distance is larger than the distance parameter $\delta$ and the algorithm can safely return 'no'.
The difficulty lies in proving that after a certain number of random queries, the algorithm can return 'yes' with sufficient certainty. To this end, we show that if the two curves are $(\eps,\delta)$-far under the (discrete) Fréchet distance then the algorithm, which tests subcurves of varying sizes up to roughly $O(\frac{t}{\eps})$, is likely to sample a permeability query that would return 'no'. 

Note that our first algorithm crucially needs to know the value of locality $t$ to guide the process of selecting the subcurves to test. The second algorithm needs to gain this knowledge of $t$ through its queries, since the value of $t$ is not known in advance. For this, we propose  a variant of monotonicity testing (Algorithm~\ref{algo:t-locality-tester}). Combining the output of the locality tester with the first algorithm turns out to be technically challenging, since the output of a property tester always comes with the caveat of unknown input modifications, but using a careful adaptation we manage to utilize similar arguments as in the previous setting (see Algorithm~\ref{algo:combined-frechet-tester}).

Section~\ref{sec:additional} contains further additional results, such as a simple tester for the Hausdorff distance and a simple $t$-approximate Fréchet tester (Algorithm~\ref{algo:hausdorff-tester}). This algorithm  merely tests for columns and rows that contain one-entries only and are therefore somewhat blocked for coupling paths in the free space diagram. These are essentially range-emptiness queries centered on individual vertices. 
The example in Figure~\ref{fig:example-permeability-queries-necessary} shows that such simple queries are not sufficient to test the Fréchet distance exactly.
In the example, all rows and columns are individually permeable, but the Fréchet distance is large. This is why Algorithm~\ref{algo:frechet-tester-exact} tests for permeability of blocks of varying sizes. 

\begin{figure}[h]
    \centering
    \includegraphics[width=0.275\linewidth, page=1]{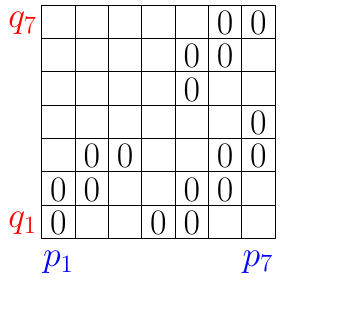}
    \hspace{4em}
    \includegraphics[width=0.275\linewidth, page=2]{example-need-perm-queries.pdf}
    \caption{On the left we have $\fsm_{\delta}$ for the curves on the right. All zero-entries are written. 
    There is no coupling path of cost $0$ in $\fsm_{\delta}$ but there is also no barrier-column or barrier-row.}
    \label{fig:example-permeability-queries-necessary}
\end{figure}

\subsection{Alternative error models}\label{sec:errormodel}

Definition~\ref{def:far} allows the path to flip $\varepsilon n$ one-entries along the coupling path, but we could instead permit other vertex operations (such as vertex deletions) on the curves and thereby change the free space matrix to locally `repair' the coupling path. 
We argue below that our error model is not significantly more powerful than allowing any combination of these vertex operations. 
Concretely, consider the following operations.
\begin{enumerate}
    \item Modification: Modify the coordinates of a vertex of one of the curves. 
    \item Deletion: Delete a vertex from one of the curves.
    \item Insertion: Insert a new vertex into one of the curves at a given position.
\end{enumerate}
Consider a sequence of $k$ one-entries on the optimal coupling path that need to be flipped.
Let $(i,j)$ be the last index before and let $(i',j')$ be the first index after the flipped sequence along the path. We delete the rows with index $i+1$ up to $i'-1$ and the columns with index $j+1$ up to $j'-1$. This uses at most $2k$ deletions and has the same effect that $(i',j')$ is reachable from $(i,j)$ with a coupling path (by using a diagonal step).
Similarly, note that we can simulate vertex deletions using vertex modifications by assigning to a vertex or a group of vertices the coordinates of a neighboring vertex that is not deleted. This has the same effect on the Fr\'echet distance of the input curves as deleting this group of vertices.
Lastly, we can also simulate vertex deletions by inserting a copy of the respective vertex in the other curve at the corresponding position along the coupling, so that the deleted vertex can be matched to the newly inserted vertex. 
Thus, the number of edits made using vertex operations will be the same or less, up to constant factors, when compared to the minimum cost of a coupling path. 

Moreover, we remark that these operations might `destroy' locality of the free space matrix and some of them change the indexing of entries. For these reasons, we prefer to use the cleaner variant of direct edits (flips) in the free space matrix provided by Definition~\ref{def:far}.

\section{Testing the discrete Fréchet distance}\label{sec:frechet-with-known-t}

In this section, we describe and analyze a Fréchet tester under the assumption that the free space matrix is $t$-local for a given value of $t$. 

\subsection{The algorithm}

The idea of Algorithm~\ref{algo:frechet-tester-exact} is to sample a number of columns and rows and check whether there is locally a coupling path of cost zero possible. The following definition classifies when such a (local) path exists.
\begin{definition}[Permeability] We say a block $[i,i']$ of consecutive columns (resp., rows) from index $i$ to index $i'$ is \emph{permeable} if there exists a coupling path of cost zero that starts in column (resp., row) $i$ and ends in column (resp., row) $i'$. 
\end{definition}
If a column (resp. row) is individually not permeable, i.e. it  contains only one-entries, we call it a \emph{barrier-column} (resp. \emph{barrier-row}).
Note that any non-permeable block of rows or columns is a witness to the fact that no global coupling path of cost $0$ exists and that the algorithm can answer \enquote{no}.

The algorithm first tests if the first or last entry (i.e.\ $M[1,1]$ or $M[\n,\n]$) is a one-entry. If not, it queries $\mathcal{O}\left(\frac{\tloc}{\eps}\right)$ randomly sampled columns and rows and checks if any of them is a barrier-column or barrier-row. In Lines~\ref{line:interval-samp-beg}-\ref{line:interval-samp-end} we sample $\mathcal{O}(\frac{t}{2^{i+1}})$ random intervals of length $2^{i+2}$ for each $i$ from $0$ to $\mathcal{O}\left(\log\frac{\tloc}{\eps}\right)$. We then check all corresponding blocks of columns and rows for permeability and if all blocks are permeable, we return \enquote{yes}.

\begin{algorithm}[h]
 \caption{Fréchet-Tester1($M,\tloc,\eps,k=\frac{24t}{\eps},c=4$)}
    \label{algo:frechet-tester-exact}
\begin{enumerate}
\item \textbf{If} $M[1,1]=1$ or $M[\n,\n]=1$ \textbf{then} \textbf{return} \enquote{no}.
\item \textbf{repeat} $\lceil k\rceil$ \textbf{times}:
\item ~~ $j \gets$ sample an index uniformly at random from $[\n]$.
\item ~~ \textbf{if} row $j$ or column $j$ of $\fsm$ is a barrier-column or barrier-row \textbf{then} \textbf{return} \enquote{no}.\label{line:found-zero-col-row}
\item $K\gets\lceil\frac{\eps\n}{32\tloc}\rceil-1$, $\gets\ell=\lceil\frac{128\tloc}{\eps}\rceil$, let $\J$ be a set of intervals and set $\J\gets\emptyset$.
\item \textbf{for} {$i=0,\ldots,\lfloor\log_2\ell\rfloor$} \textbf{do}:\;\label{line:interval-samp-beg}
    {
\item ~~ $I\gets$ sample $\lceil \frac{4c\n}{2^{i+1}K} \rceil$ different indices uniformly at random from $\{0,1,\ldots,\frac{n}{2^{i+1}}-2\}$.
\item ~~ \textbf{for each} $j\in I$ \textbf{do}: add $[j2^{i+1},(j+2)2^{i+1}]$ to $\J$.\label{line:interval-samp-end}
    }
\item \textbf{foreach} $[i,j]\in\J$ \textbf{do}
\item ~~\textbf{if} block $[i,j]$ of consecutive columns is not permeable \textbf{then} \textbf{return} \enquote{no}.\label{line:permeability-query-col}
\item ~~\textbf{if} block $[i,j]$ of consecutive rows is not permeable \textbf{then} \textbf{return} \enquote{no}.\label{line:permeability-query-row}
\item \textbf{return} \enquote{yes}.
\end{enumerate}
\end{algorithm}

To check if a block of $k$ columns (or rows) is permeable, the algorithm first performs $k$ queries to obtain the positions of zero-entries in these columns (or rows), then we build the induced subgraph of the grid for these zero-entries, connect all neighboring zero-entries according to the possible steps of a coupling and then connect all zero-entries of the last column to a sink and all zero-entries of the first column from a source. It remains to check if there is a path from source to sink, which can be done in linear time since the graph is acyclic. The running time is linear in the total number of zero-entries queried.
We call this a \emph{permeability query}.

\subsection{Basic properties of permeability}
For two curves $\firstcurve$ and $\secondcurve$ with $\n$ vertices each, and a $\tloc$-local free space matrix, Algorithm~\ref{algo:frechet-tester-exact} returns \enquote{no} only if it has determined that no coupling path of cost zero from $(1, 1)$ to $(\n,\n)$ can exist (by finding blocks of rows or columns that are not permeable). We need to show that it returns \enquote{yes} correctly with sufficiently high probability. 

For technical reasons, our analysis uses a specific definition of coupling paths that deviates from the one given in Section~\ref{sec:prelims}. We introduce it here and we argue that it suffices to consider these coupling paths only.

\begin{definition}[\restricteddiagonals\ path]\label{def:rdpath}
    A \emph{\restricteddiagonals\ path} $\optcoupling$ through the free space matrix $\fsm$ is a path corresponding to a coupling $\optcoupling$ in which for any consecutive tuples of the form $(i,j),(i+1,j+1)\in\optcoupling$ it holds that $\fsm[i,j]=0=\fsm[i+1,j+1]$.
\end{definition}

In other words, a \restricteddiagonals\ path $\optcoupling$, is allowed to take diagonal steps in the free space matrix only if both entries are zero. Hence, any one-entry visited by $\optcoupling$ must be reached by either a vertical or horizontal step in $\fsm$ and it must be left by a vertical or horizontal step in $\fsm$. This definition, which might seem overly technical at first sight, will simplify a lot of our arguments.
The following observations show that it suffices to consider \restricteddiagonals\ paths instead of all coupling paths in the analysis of a Fréchet tester.

\begin{observation}
    For two curves $P$ and $Q$ with $\n$ vertices each, the following holds:
    \begin{itemize}
        \item [i)] There exists a \restricteddiagonals\ path of cost zero from $\fsm_{\delta}[1,1]$ to $\fsm_{\delta}[\n,\n]$ if and only if there exists a coupling path of cost zero  from $\fsm_{\delta}[1,1]$ to $\fsm_{\delta}[\n,\n]$.
        \item[ii)] If there is no \restricteddiagonals\ path of cost at most $\eps\n$ through $\fsm_{\delta}$ from $\fsm_{\delta}[1,1]$ to $\fsm_{\delta}[\n,\n]$, the curves are $(\frac{\eps}{3},\delta)$-far. See Figure~\ref{fig:restricted-diagonals}.
        \item[iii)] If $P$ and $Q$ are $(\eps,\delta)$-far, there is no \restricteddiagonals\ path of cost at most $\eps\n$ through $\fsm_{\delta}$ from $\fsm_{\delta}[1,1]$ to $\fsm_{\delta}[\n,\n]$.
    \end{itemize}
\end{observation}

\begin{figure}
    \centering
    \includegraphics[width=0.2\linewidth]{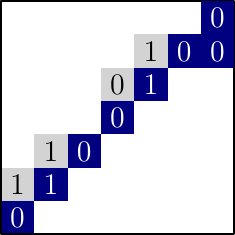}
    \caption{The blue path is a coupling path of cost $2$. By adding the gray entries, we get a \restricteddiagonals\ path of cost $5$.}
    \label{fig:restricted-diagonals}
\end{figure}
Below, we present three structural lemmas that help with the analysis of Algorithm~\ref{algo:frechet-tester-exact}. The first lemma considers a subpath of a \restricteddiagonals\ path $\optcoupling$ that starts and ends in a zero-entry and visits only one-entries in between. We show that the bounding box of the two zero-entries must otherwise be filled with one-entries. Note that this lemma does not hold for general coupling paths of Section~\ref{sec:prelims}. Using it will simplify our analysis and is the main motivation for introducing Definition~\ref{def:rdpath}.

\begin{lemma}[Box of ones]\label{lem:box-of-zeros}
    Let $\optcoupling$ be a \restricteddiagonals\ path of minimum cost. Let $(i,j),(i',j')\in\optcoupling$ be zero-entries such that $\optcoupling$  visits only one-entries between them. Then, for all tuples $(k,l) \in \{(k,l)\mid  i\leq k \leq i', j\leq l\leq j' \} \setminus \{(i,j), (i',j')\}$
    we have that $\fsm[k,l]=1$.
\end{lemma}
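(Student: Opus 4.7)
My plan is to argue by contradiction via a local exchange argument: if some interior lattice point $(k,l)$ in the bounding box were a zero-entry, I could reroute the subpath of $\optcoupling$ between $(i,j)$ and $(i',j')$ through $(k,l)$ and strictly decrease the total cost, violating the minimality of $\optcoupling$.

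Concretely, I would first observe that since $\optcoupling$ is monotone and visits $(i,j)$ before $(i',j')$, we must have $i \leq i'$ and $j \leq j'$. Now suppose for contradiction there exists $(k,l)$ with $i \leq k \leq i'$ and $j \leq l \leq j'$, $(k,l) \notin \{(i,j),(i',j')\}$, and $\fsm[k,l]=0$. Since $i \leq k \leq i'$ and $j \leq l \leq j'$, I can build a monotone Manhattan path $\pi$ from $(i,j)$ to $(k,l)$ (using $k-i$ right-steps and $l-j$ up-steps in any order), followed by a monotone Manhattan path $\pi'$ from $(k,l)$ to $(i',j')$. The concatenation is a valid monotone Manhattan path from $(i,j)$ to $(i',j')$ of the same length $(i'-i)+(j'-j)$ as the original subpath of $\optcoupling$ between these two entries.

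Next I would compare the costs of the old and new subpaths. The original subpath of $\optcoupling$ between $(i,j)$ and $(i',j')$ visits $(i'-i)+(j'-j)+1$ lattice points: the two endpoints (both zero-entries) and $(i'-i)+(j'-j)-1$ intermediate points which, by assumption, are all one-entries; hence it contributes $(i'-i)+(j'-j)-1$ to the cost. The new subpath visits the same number of points; its two endpoints and the interior point $(k,l)$ are all zero-entries, and the remaining $(i'-i)+(j'-j)-2$ points contribute at most $1$ each, for a total of at most $(i'-i)+(j'-j)-2$. Replacing the old subpath in $\optcoupling$ by the new one yields a monotone Manhattan path from $(1,1)$ to $(\n,\n)$ of cost at most $c(\optcoupling)-1$, contradicting the minimality of $\optcoupling$. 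Therefore every such $(k,l)$ must be a one-entry.

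The argument is really just a local rerouting, so the only thing that needs genuine care is the accounting: making sure the two subpaths visit exactly the same number of lattice points (guaranteed by monotonicity of Manhattan paths in a common bounding box) and that the count of zero-entries on the new subpath is strictly larger by at least one. I do not anticipate further obstacles, and the proof requires no use of the $\tloc$-locality property.
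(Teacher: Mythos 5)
Your proof is correct and follows essentially the same local rerouting argument as the paper: the paper likewise replaces the subpath of $\optcoupling$ between $(i,j)$ and $(i',j')$ by a monotone Manhattan path through the hypothetical zero-entry $(k,l)$ and derives a contradiction to minimality. Your version merely spells out the cost accounting more explicitly, which the paper leaves implicit.
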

\begin{proof}
    Since $\optcoupling$ is a \restricteddiagonals\ path, it visits exactly $i'-i+j'-j-1$ one-entries.
    For any entry $(k,l)\in \{(k,l)\mid  i\leq k \leq i', j\leq l\leq j' \} \setminus \{(i,j), (i',j')\}$, there exists a coupling path $\pi$ from $(i,j)$ to $(i',j')$ that passes through $(k,l)$ and makes no diagonal steps. If $\fsm[k,l]=0$, $\pi$ visits at most $i'-i+j'-j-2$ one-entries and we can replace the subpath of $\optcoupling$ from $(i,j)$ to $(i',j')$ by $\pi$, which yields a contradiction to the minimality of $\optcoupling$.
\end{proof}

    \begin{figure}[t]
        \centering
        \includegraphics[width=0.5\linewidth]{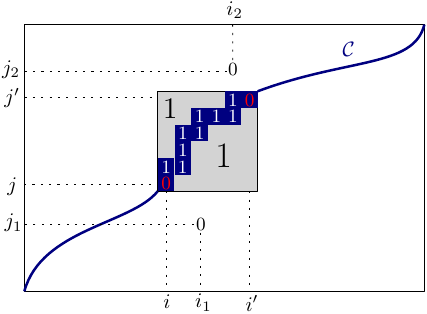}
        \caption{Box of one-entries. Illustration to the proofs of Lemma~\ref{lem:box-of-zeros} and the Barrier Lemma.}
        \label{fig:zero-values-in-deleted-cols-FSM}
    \end{figure}
    
We introduce a definition that describes when two columns or rows satisfy the constraints for $\tloc$-locality.

\begin{definition}\label{def:pass-tloc}
    The columns $i_1$ and $i_2$ \emph{pass $\tloc$-locality} if any two zero-entries $(i_1,j_1)$ and $(i_2,j_2)$ in columns $i_1$ and $i_2$ satisfy $\abs{j_1-j_2}\leq\tloc\cdot(2+\abs{i_1-i_2})$. 
    Similarly the rows $j_1$ and $j_2$ \emph{pass $\tloc$-locality} if any two zero-entries $(i_1,j_1)$ and $(i_2,j_2)$ in rows $j_1$ and $j_2$ satisfy $\abs{i_1-i_2}\leq\tloc\cdot(2+\abs{j_1-j_2})$.
\end{definition}
A column (resp. row) can also pass $\tloc$-locality with itself if $i_1=i_2$ (resp. $j_1=j_2$). 
Intuitively, two columns pass $\tloc$-locality, when all their zero-entries are not too far away vertically.
Note that all columns (resp.\ rows) pass $\tloc$-locality with each other if $\fsm$ is $\tloc$-local.

The second lemma shows that if an optimal \restricteddiagonals\ path contains a long sequence of one-entries, then there must be many barrier-columns and barrier-rows.

\begin{lemma}[Barrier Lemma]\label{lem:barrier-lemma}
    Let $\optcoupling$ be a \restricteddiagonals\ path of minimum cost. Suppose that there are two zero-entries $(i,j),(i',j')\in\optcoupling$ such that $\optcoupling$ visits no zero-entry and at least $4\tloc$ one-entries in between them 
    and suppose that all columns in $[i,i']$ and all rows in $[j,j']$ pass $\tloc$-locality with each other. 
    Then there is a total of at least $\lceil\frac{i'-i+j'-j}{2\tloc}\rceil$ barrier-rows between $j$ and $j'$ and barrier-columns between $i$ and $i'$.
\end{lemma}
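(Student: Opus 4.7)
The plan is to combine the Box-of-Ones structure (Lemma~\ref{lem:box-of-zeros}) with the $t$-locality hypothesis to force many barrier rows and barrier columns inside the rectangle spanned by $(i,j)$ and $(i',j')$.

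First I would invoke Lemma~\ref{lem:box-of-zeros} on the two anchor zeros $(i,j)$ and $(i',j')$: since the intermediate portion of $\optcoupling$ between them visits only one-entries, every entry in $[i,i']\times[j,j']$ other than the two corners is a one. In particular, any zero in a row $l\in(j,j')$ must lie in a column $k<i$ or $k>i'$, and any zero in a column $k\in(i,i')$ must lie in a row $l<j$ or $l>j'$.

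Next I would use the $t$-locality hypothesis to pin down where these outside zeros can possibly live. Suppose row $l\in(j,j')$ contains a zero $(k,l)$ with $k<i$. Pair it with the anchor $(i',j')$: since rows $l$ and $j'$ both lie in $[j,j']$ and hence pass $t$-locality, we obtain $|k-i'|\leq t(2+|l-j'|)$; combined with $|k-i'|\geq (i'-i)+1$, this rearranges to $j'-l\geq (i'-i+1)/t-2$. Symmetrically, a zero in row $l$ at some column $k>i'$ forces $l-j\geq (i'-i+1)/t-2$ by pairing with $(i,j)$. Hence any row $l$ whose distances to both $j$ and $j'$ are strictly below $(i'-i+1)/t-2$ must be a barrier row; the analogous argument over columns $k\in(i,i')$ produces barrier columns whenever both $k-i$ and $i'-k$ lie strictly below $(j'-j+1)/t-2$.

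Finally I would count. The barrier rows identified above form a window around the midpoint of $[j,j']$, the barrier columns form a symmetric window inside $[i,i']$, and combining their sizes with the hypothesis that $\optcoupling$ visits at least $4t$ one-entries between the two anchors (so that $(i'-i)+(j'-j)\geq 4t+1$) should yield the total count $\lceil\frac{i'-i+j'-j}{2t}\rceil$ after careful integer arithmetic.

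The main obstacle is this last counting step in the unbalanced regime, where one of $i'-i$ or $j'-j$ is small and the corresponding window becomes empty, so that neither type of barrier alone reaches the claimed bound. To handle this I plan to invoke Observation~\ref{obs:ones-in-same-row-close}, which forces zeros in a single row (or column) to lie within $2t$ of each other, together with the fact that a middle row cannot host zeros on both sides of the rectangle simultaneously once $i'-i>2t$. Pairing these constraints with the optimality of $\optcoupling$ (which forbids any monotone re-routing of the subpath that would reduce its cost $(i'-i)+(j'-j)-1$) should cover the subpath of length $(i'-i)+(j'-j)+1$ with barriers spaced at most $2t$ apart, yielding the claimed ceiling.
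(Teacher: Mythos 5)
There is a genuine gap in the final counting step, and it is worse than you acknowledge: your windows are empty not only in the unbalanced regime but in essentially every regime, including the balanced one. Your criterion certifies a row $l\in(j,j')$ as a barrier-row only when both $l-j$ and $j'-l$ are strictly below $A\coloneqq(i'-i+1)/\tloc-2$, i.e.\ you obtain roughly $2A-(j'-j)$ barrier-rows. Take the balanced instance $i'-i=j'-j=2\tloc+1$ (which satisfies the hypothesis, as $\optcoupling$ then visits $4\tloc+1$ one-entries between the anchors): here $A=2/\tloc$, so your barrier-row window is empty for every $\tloc\geq1$, and by symmetry so is your barrier-column window, while the lemma demands at least $\lceil(4\tloc+2)/(2\tloc)\rceil=3$ barriers. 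The root cause is that you only ever apply $\tloc$-locality between a candidate zero and one of the \emph{anchors} $(i,j)$, $(i',j')$. This bounds the admissible positions of zeros in a row $l$ in terms of the width $i'-i$, whereas the lemma needs the number of barrier-\emph{rows} to grow like $(i'-i)/\tloc$ and the number of barrier-\emph{columns} to grow like $(j'-j)/\tloc$; your pairing delivers a nonempty barrier-row window only when $i'-i$ exceeds roughly $\tloc(j'-j)/2$, so at most one of your two windows can ever be nonempty. The concluding appeal to Observation~\ref{obs:ones-in-same-row-close} and to barriers "spaced at most $2\tloc$ apart" along the subpath is not substantiated and does not reflect the actual structure: the barriers come in two contiguous strips, not at regular spacing along $\optcoupling$.

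The missing idea is to apply $\tloc$-locality between two extremal \emph{non-anchor} zeros. Let $i_1\leq i'$ be maximal such that $\fsm[i_1,j_1]=0$ for some $j_1\leq j$, and let $i_2\geq i_1$ be minimal such that $\fsm[i_2,j_2]=0$ for some $j_2\geq j'$; the anchors witness $i\leq i_1$ and $i_2\leq i'$. By Lemma~\ref{lem:box-of-zeros} together with the extremality of $i_1$ and $i_2$, every column strictly between $i_1$ and $i_2$ is a barrier-column. Now locality applied to the pair $(i_1,j_1)$, $(i_2,j_2)$ --- whose vertical separation is at least $j'-j$ --- yields $i_2-i_1\geq(j'-j)/\tloc-2$: a contiguous strip of barrier-columns whose width is controlled by the \emph{height} of the box, which is exactly the dependence your argument fails to produce. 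The symmetric argument gives at least $(i'-i)/\tloc-2$ barrier-rows, and the hypothesis $i'-i+j'-j\geq4\tloc$ absorbs the additive $-4$ to give the claimed $\lceil(i'-i+j'-j)/(2\tloc)\rceil$. Your first two steps (the box of ones and the localization of outside zeros) are fine; it is the choice of which pairs of zeros to test against locality that has to change.
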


\begin{proof}
    Let $i_1\leq i'$ be maximal such that $\fsm[i_1,j_1]=0$ for some $j_1\leq j$ and let $i_2\geq i_1$ be minimal such that $\fsm[i_2,j_2]=0$ for some $j_2\geq j'$. Hence, $(i_1,j_1)$ is the rightmost zero-entry \enquote{below} the box of ones from Lemma~\ref{lem:box-of-zeros} and $(i_2,j_2)$ is the leftmost zero-entry to the right of $i_1$ that is \enquote{above} the box of ones. Note that $(i,j)$ is a candidate for $(i_1,j_1)$ and $(i',j')$ is a candidate for $(i_2,j_2)$. Refer to Figure~\ref{fig:zero-values-in-deleted-cols-FSM}. By definition, all columns between $i_1$ and $i_2$ are barrier-columns. We can use that $i_1$ and $i_2$ pass $\tloc$-locality on $(i_1,j_1)$ and $(i_2,j_2)$ to get 
    \[i_2-i_1\geq\frac{j_2-j_1}{\tloc}-2\geq\frac{j'-j}{\tloc}-2.\]
    So there are at least $\frac{j'-j}{\tloc}-2$ barrier-columns between $i$ and $i'$. Analogously, we can prove that there are at least $\frac{i'-i}{\tloc}-2$ barrier-rows between $j$ and $j'$. Thus, in total we get that the number of barrier-columns and barrier-rows in the stated range is at least
    \[\frac{i'-i+j'-j}{\tloc}-4=\frac{i'-i+j'-j}{2\tloc}+\frac{i'-i+j'-j-4\tloc}{2\tloc}\geq\frac{i'-i+j'-j}{2\tloc}.\]
    The last inequality follows since in order to visit at least $4\tloc$ one-entries between $(i,j)$ and $(i',j')$, it must hold that $i'-i+j'-j\geq4\tloc$. The statement of the lemma now follows since the number of rows and columns is an integer.
\end{proof}

The next lemma shows that if an optimal \restricteddiagonals\ path has a long stretch with relatively many one-entries on the path, there cannot be any long coupling path of cost zero in the same sequence of rows or columns of the matrix $\fsm$, implying impermeability.

\begin{lemma}[Impermeability Lemma]\label{lem:impermeability-lemma}
    Let $\optcoupling$ be a \restricteddiagonals\ path through $\fsm$ of minimum cost. Suppose $(i,j),(i',j')\in\optcoupling$ with $j'-j>2\tloc$ (resp. $i'-i>2\tloc$) correspond to zero-entries in $\fsm$ and the subpath of $\optcoupling$ from $(i,j)$ to $(i',j')$ visits at least $4\tloc-1$ one-entries
    and suppose that any column in $[i,i']$ and any row in $[j,j']$ passes $\tloc$-locality with itself. 
    Then, the block of columns $[i,i']$ (resp. rows $[j,j']$) is not permeable.
\end{lemma}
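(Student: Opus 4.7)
The plan is to prove the column case by contradiction (the row case is symmetric). Assume the block of columns $[i,i']$ is permeable, so there exists a monotone Manhattan path $\pi$ of cost zero from some zero-entry $(i,a)$ to some zero-entry $(i',b)$, with $a\le b$ by the monotonicity of $\pi$.

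First I would invoke the $t$-locality of columns $i$ and $i'$ with themselves. Because $(i,a)$ and $(i,j)$ are both zero-entries in column $i$, Observation~\ref{obs:ones-in-same-row-close} applied to column $i$ yields $|a-j|\le 2t$, and analogously $|b-j'|\le 2t$. Combining this with the hypothesis $j'-j>2t$ gives $a\le j+2t<j'$ and $b\ge j'-2t>j$, so $a$ and $b$ lie close to $j$ and $j'$ respectively. This is the key geometric constraint that will make the forthcoming detour cheap.

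Next I would construct an alternative monotone Manhattan path $\mathcal{C}^{\ast}$ from $(1,1)$ to $(\n,\n)$ that detours through $\pi$, and show $c(\mathcal{C}^{\ast})<c(\optcoupling)$, contradicting the minimality of $\optcoupling$. In the central configuration $j\le a$ and $b\le j'$, define $\mathcal{C}^{\ast}$ to coincide with $\optcoupling$ up to $(i,j)$, then ascend in column $i$ to $(i,a)$, traverse $\pi$ to $(i',b)$, ascend in column $i'$ to $(i',j')$, and follow $\optcoupling$ onward from $(i',j')$. The path $\pi$ contributes no ones, and the two ascending segments contain at most $(a-j-1)$ and $(j'-b-1)$ one-entries respectively (the endpoints $(i,j),(i,a),(i',b),(i',j')$ are all zeros). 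Hence on the subpath between $(i,j)$ and $(i',j')$ the modification incurs at most $(a-j)+(j'-b)-2\le 4t-2$ ones, while $\optcoupling$ incurs at least $4t-1$ ones there by hypothesis; the strict saving of at least one yields the contradiction.

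The remaining configurations ($a<j$ or $b>j'$) are handled analogously by shifting the divergence and rejoin points along $\optcoupling$. For instance, when $a<j$ one may diverge from $\optcoupling$ at $(i,a)$ itself (which $\optcoupling$ traverses whenever $a$ lies in its vertical range in column $i$), and the detour over columns $i,i'$ then has length at most $2t-1$, while $\optcoupling$ still pays at least $4t-1$ ones. The main subtlety, which I expect to be the principal technical obstacle, is verifying that $\mathcal{C}^{\ast}$ remains monotone and rejoins $\optcoupling$ correctly in these boundary cases; the $t$-locality of columns $i,i'$ is exactly what guarantees that the detour overhead stays strictly below the $4t-1$ threshold, so that the cost saving is always strictly positive.
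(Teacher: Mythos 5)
Your overall strategy matches the paper's: assume a cost-zero path $\pi$ through the block, splice it into $\optcoupling$ between $(i,j)$ and $(i',j')$, and use the locality bounds to show the detour costs at most $4\tloc-2$ ones, contradicting minimality. Your central case ($j\le a$ and $b\le j'$) is exactly the paper's first case and is handled correctly.

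The gap is in the boundary cases, which is where the real work lies. When $a<j$ you propose to ``diverge from $\optcoupling$ at $(i,a)$ itself,'' but $(i,a)$ need not lie on $\optcoupling$ at all: $\optcoupling$ occupies a contiguous vertical range of rows in column $i$ containing $j$, and $a$ may lie strictly below that range; and even if $(i,a)$ did lie on $\optcoupling$, your claimed detour overhead of ``at most $2\tloc-1$'' is unsubstantiated. The correct fix --- and the reason the lemma's hypothesis also requires every \emph{row} in $[j,j']$ to pass $\tloc$-locality with itself, a hypothesis your argument never invokes --- is different: when $a<j$, the column-$i'$ bound $j'-b\le 2\tloc<j'-j$ forces $b>j$, so $\pi$ must cross row $j$ at some zero-entry $(k,j)$; row-$j$ locality gives $k-i\le 2\tloc$, and one reroutes from $(i,j)$ \emph{horizontally} along row $j$ to $(k,j)$ and then follows $\pi$. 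The analogous construction (crossing row $j'$) handles $b\ge j'$, and the case $a\ge j$, $b\ge j'$ follows by symmetry. Without this row-crossing argument your rerouted path either fails to start at a point of $\optcoupling$ or fails to be a valid monotone replacement of the subpath from $(i,j)$ to $(i',j')$, so the contradiction does not go through in those cases.
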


\begin{proof}
    We prove the statement for $j'-j>2\tloc$. The statement for $i'-i>2\tloc$ can be proven analogously.
    For the sake of contradiction, assume that there exists a coupling path $\pi$ of cost zero from $(i,j_1)$ to $(i',j_2)$ in $\fsm$. 
    We perform a case analysis based on the positions of $j_1$ and $j_2$ with respect to $j$ and $j'$. An illustration of these cases can be seen in Figure~\ref{fig:cases-for-xy-monotone-path-beginning-and-end}. Our proof strategy in each case is to locally modify the path $\optcoupling$ to show a contradiction to its minimality. These modifications are also illustrated in Figure~\ref{fig:cases-for-xy-monotone-path-beginning-and-end}.

    \begin{figure}[t]
       \centering
       \includegraphics[width=0.22\linewidth, page=2]{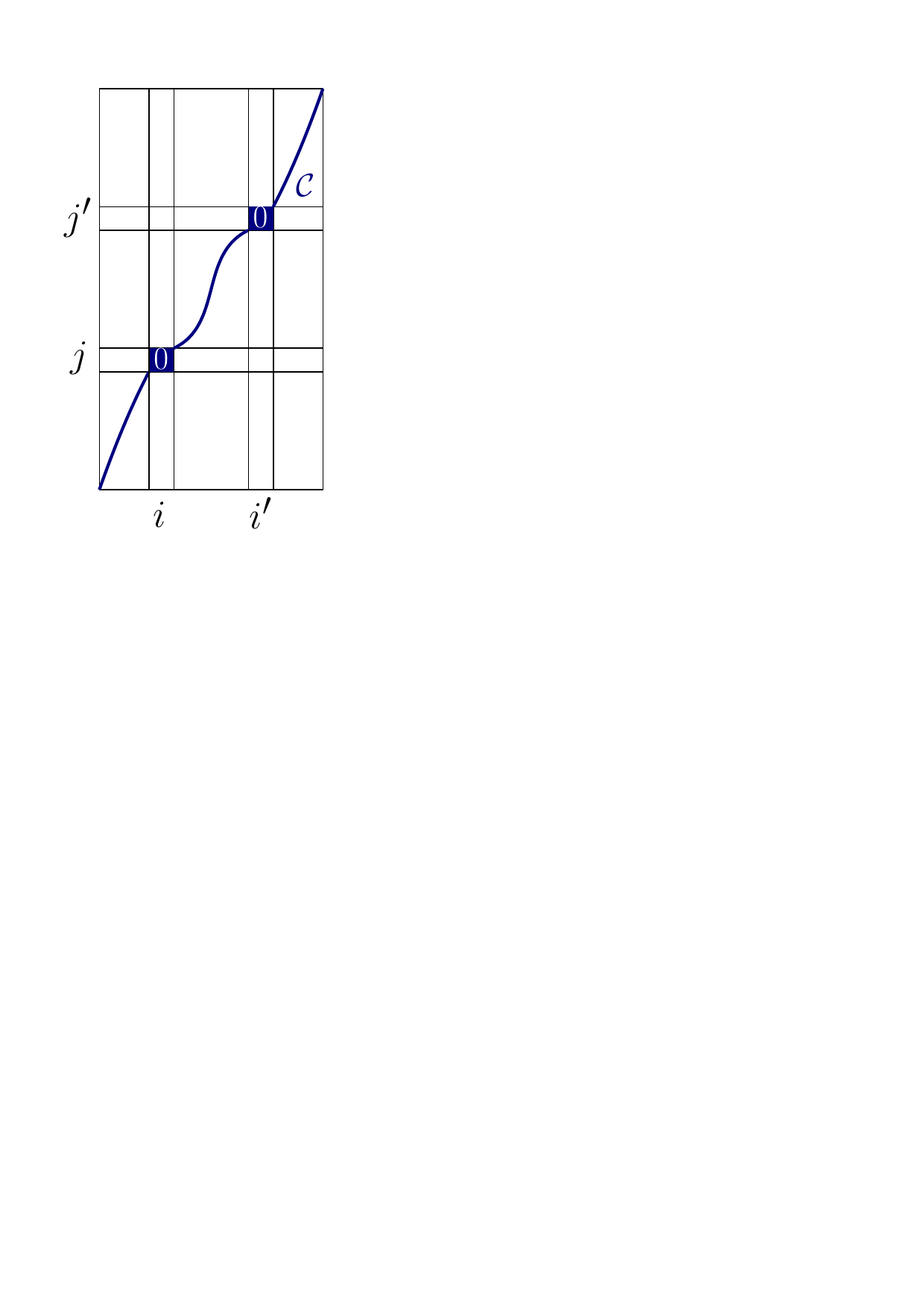}
       \hspace{1em}
       \includegraphics[width=0.22\linewidth, page=3]{cases_without_fix.pdf}
       \hspace{1em}
       \includegraphics[width=0.22\linewidth, page=5]{cases_without_fix.pdf}
       \hspace{1em}
       \includegraphics[width=0.22\linewidth, page=4]{cases_without_fix.pdf}
       \caption{The cases in the Impermeability Lemma in left to right order. The red lines show the modifications to possibly improve $\optcoupling$.}
       \label{fig:cases-for-xy-monotone-path-beginning-and-end}
    \end{figure}
    \begin{enumerate}

    \item  \label{case:left-one-above-right-one-below} \textbf{Case \ref{case:left-one-above-right-one-below}:} ($j_1\geq j$ and $j_2<j'$)
 
    Since $i$ and $i'$ pass $\tloc$-locality with themselves, we know that $j_1-j\leq2\tloc$ and $j'-j_2\leq2\tloc$. We modify the path $\optcoupling$ as follows. Instead of going from $(i,j)$ to $(i,j')$ via $\optcoupling$, we first go from $(i,j)$ to $(i,j_1)$, then we follow $\pi$ until $(i',j_2)$ and then go from $(i',j_2)$ to $(i',j')$. This visits at most $4\tloc-2$ many one-entries instead of locally visiting at least $4\tloc-1$ many one-entries.

    \item \label{case:both-ones-below} \textbf{Case \ref{case:both-ones-below}:} ($j_1<j$ and $j_2<j'$) 

    First, by the fact that $i'$ passes $\tloc$-locality with itself and the setting of the lemma, we know that $j'-j_2\leq2\tloc<j'-j$ and therefore, $j_2>j$. So the path $\pi$ passes row $j$. Let $k$ be such that $(k,j)\in\pi$. Observe that $k-i\leq2\tloc$ since $j$ passes $\tloc$-locality with itself. 
    We modify the path $\optcoupling$ as follows. Instead of going from $(i,j)$ to $(i,j')$ via $\optcoupling$, we first go from $(i,j)$ to $(k,j)$, then we follow $\pi$ until $(i',j_2)$ and then go from $(i',j_2)$ to $(i',j')$. This visits at most $4\tloc-2$ many one-entries instead of locally visiting at least $4\tloc-1$ many one-entries.
    
    \item \label{case:left-one-below-right-one-above} \textbf{Case \ref{case:left-one-below-right-one-above}:} ($j_1<j$ and $j_2\geq j'$)

    In this case, the path $\pi$ passes both row $j$ and row $j'$. Let $k$ and $k'$ be such that $(k,j)\in\pi$ and $(k',j')\in\pi$. Since $j$ and $j'$ pass $\tloc$-locality with themself, we have $k-i\leq2\tloc$ and $i'-k'\leq2\tloc$. 
    We modify the path $\optcoupling$ as follows. Instead of going from $(i,j)$ to $(i,j')$ via $\optcoupling$, we first go from $(i,j)$ to $(k,j)$, then we follow $\pi$ until $(k',j')$ and then go from $(k',j')$ to $(i',j')$. This visits at most $4\tloc-2$ many one-entries instead of locally visiting at least $4\tloc-1$ many one-entries.
        
    \item \label{case:both-ones-above} \textbf{Case \ref{case:both-ones-above}:}  ($j_1\geq j$ and $j_2\geq j'$)

    Case~\ref{case:both-ones-above} is symmetric to Case~\ref{case:both-ones-below}  by reversing the direction of both $\firstcurve$ and $\secondcurve$.\qedhere
    \end{enumerate}
\end{proof}

\subsection{Analysis of the algorithm}

Algorithm~\ref{algo:frechet-tester-exact} samples a range of  intervals in Lines~\ref{line:interval-samp-beg}-\ref{line:interval-samp-end} which are then tested with a permeability query. For this subroutine we can prove the following lemma.

\begin{lemma}\label{lem:interval-sampling}
    Let $\I = \left\{ [i,j]\, |\, 1 \le i < j \le \n\right\}$ be a set of all non-empty intervals from $[\numvertfirst]$ and let $X\subset \I$ be a set of $K$ unknown intervals such that the length of each interval is at most $\ell$ and each element of $[\n]$ is contained in at most $c$ of these intervals for some constant $c\geq1$. 
    Lines~\ref{line:interval-samp-beg}-\ref{line:interval-samp-end} of Algorithm~\ref{algo:frechet-tester-exact} select a subset $\J \subset \I$ of intervals with 
    $ \sum_{J \in \J} |J| =  O(\frac{n\log \ell}{K})$ such that 
    with probability at least $9/10$, at least one of the intervals in $X$ is contained
    in at least one interval in $\J$.
\end{lemma}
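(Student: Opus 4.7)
The plan is to analyze the two claims of the lemma separately: the bound on the total length of intervals added to $\J$, and the probability that $\J$ covers some interval of $X$.

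For the size bound, at level $i\in\{0,\dots,\lfloor\log_2\ell\rfloor\}$ the algorithm adds $\lceil 4cn/(2^{i+1}K)\rceil$ intervals of length $2^{i+2}$. Ignoring the ceiling, each level contributes exactly $16cn/K$ to the total length, summing to $O(cn\log\ell/K)$ over the $O(\log\ell)$ levels. The ceiling adds at most $2^{i+2}$ per level, totalling $O(\ell)$; this is absorbed into $O(n\log\ell/K)$ in the regime $K\ell=O(n)$ that holds in the invocation by \refalg{frechet-tester-exact}. Treating $c$ as a constant yields the claimed bound.

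For the hitting probability, the plan is to associate to each $X_k=[a_k,b_k]\in X$ of length $L_k$ a single \emph{target pair} $(i_k^*,j_k^*)$, where $i_k^*$ is the smallest level with $2^{i+1}\ge L_k$ and $j_k^*=\lfloor a_k/2^{i_k^*+1}\rfloor$. One checks directly that the level-$i_k^*$ dyadic interval $[j_k^*\cdot 2^{i_k^*+1},(j_k^*+2)\cdot 2^{i_k^*+1}]$ contains $X_k$, so sampling $j_k^*$ at level $i_k^*$ catches $X_k$ (a symmetric shift to $j_k^*-1$ handles the edge case near the right boundary of $[n]$).

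The main technical obstacle is to lower-bound the number of distinct target pairs: the claim is $|D|\ge K/(4c)$, where $D=\{(i_k^*,j_k^*):k\in[K]\}$. Suppose $m$ intervals $X_{k_1},\dots,X_{k_m}$ all share a common target $(i,j)$. Then each lies inside a single block of length $2^{i+2}$ and each has length strictly greater than $2^i$ (by definition of $i_k^*$). The covering assumption yields $\sum_{r=1}^m L_{k_r}\le c\cdot 2^{i+2}$, forcing $m<4c$. Summing across all levels gives $|D|\ge K/(4c)$.

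Finally, since the samples at different levels are independent, letting $D_i=\{j:(i,j)\in D\}$, $S_i$ denote the sample at level $i$, and $R_i=n/2^{i+1}-1$ the sampling range, a standard without-replacement estimate gives
\[
\Pr[\text{no target sampled}]\le \prod_i \exp\!\left(-\frac{|S_i|\,|D_i|}{R_i}\right)\le \exp\!\left(-\frac{4c}{K}\,|D|\right)\le e^{-1},
\]
using $|S_i|/R_i\ge 4c/K$ at every level. Multiplying the sample size by a fixed constant factor (which is absorbed into the $O(n\log\ell/K)$ size bound) pushes this miss probability below $1/10$, yielding the desired $9/10$ success probability.
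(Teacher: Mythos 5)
Your proposal follows essentially the same route as the paper's proof: bucket the intervals of $X$ dyadically by length, observe that each interval of length in $(2^i,2^{i+1}]$ is contained in one of the aligned candidates $[j2^{i+1},(j+2)2^{i+1}]$, use the multiplicity-$c$ assumption to lower-bound the number of distinct \enquote{good} candidates, and multiply the per-level miss probabilities using independence across levels. The size bound and the target-pair construction are fine, and you are in fact more explicit than the paper about the ceiling terms and the boundary candidates.

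The one substantive issue is the final constant. Your packing bound of at most $4c$ intervals per target is the count your argument actually supports (a target block has length $2^{i+2}$, each interval assigned to it has length at least $2^i$, and the multiplicity is at most $c$), and it yields only $|D|\ge K/(4c)$ and hence a miss probability of $e^{-1}$ for the sample sizes actually used in Lines~\ref{line:interval-samp-beg}--\ref{line:interval-samp-end}. The paper instead charges at most $c$ intervals of $X_i$ to each level-$i$ candidate, i.e.\ it asserts at least $K_i/c$ good candidates per level, which is a factor of $4$ more generous than what your packing argument delivers and is what produces its $e^{-4}<1/10$. Your patch --- \enquote{multiply the sample size by a fixed constant} --- is mathematically sound but proves a statement about a modified algorithm, whereas the lemma is explicitly about Algorithm~\ref{algo:frechet-tester-exact} as written (and Lemma~\ref{lem:number-queries} and Theorem~\ref{thm:frechet-tester-known-t} quote its query count). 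So, as it stands, your proof of the $9/10$ bound for the unmodified algorithm is incomplete: you must either sharpen the count of distinct targets per level to match the paper's $K_i/c$, or state that the constant $4c$ in the sampling line is being enlarged; everything downstream tolerates such a constant change, but the write-up should say which of the two it is doing.
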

\begin{proof}
    Let $X_i \subset X$ be the subset of intervals that have length between $2^{i}$ and $2^{i+1}$
    and let $K_i = |X_i|$.
    Consider the set $C_i$ of \emph{candidate intervals} $[j2^{i+1},(j+2)2^{i+1}]$ for every  
    $j=0, \ldots, \frac{\numvertfirst}{2^{i+1}}-2$. Algorithm~\ref{algo:frechet-tester-exact} samples $\frac{4c\numvertfirst}{2^{i+1}K}$ of these candidate intervals in Lines~\ref{line:interval-samp-beg}-\ref{line:interval-samp-end} uniformly at random.
    Observe that for a fixed value of $i$, the total length of the sampled intervals is  
    $\mathcal{O}(\numvertfirst/K)$ leading to the claimed total size.
    
    Thus, it remains to analyze the correctness.  
    Observe that each interval in $X_i$ is contained in at least one candidate interval in $C_i$ and $\abs{C_i}\leq\frac{\numvertfirst}{2^{i+1}}$.
    If we sample one interval $J$ from $C_i$, the probability that $J$ does not contain any of the intervals from $X_i$ is at most $1-\frac{2^{i+1}K_i}{c\numvertfirst}$. The additional factor of $c$ in the denominator comes from the fact that each element of $[\n]$ can be contained in up to $c$ different intervals. Hence, the probability that we do not sample any interval that contains any of the intervals from $X_i$ is at most 
    \[\left(1-\frac{2^{i+1}K_i}{c\numvertfirst}\right)^{4c\numvertfirst/(2^{i+1}K)}\leq e^{-4K_i/K}.\]
    As the iterations are independent, the probability of failing in all of the $\lfloor\log_2\ell\rfloor+1$ steps is at most
    \[\Pi_{i=0}^{\lfloor\log_2\ell\rfloor}e^{-4K_i/K}=e^{-4}<\frac{1}{10}.\qedhere\]
\end{proof}

The next lemma shows that Algorithm~\ref{algo:frechet-tester-exact} is very likely to return \enquote{no} in Line~\ref{line:found-zero-col-row} if there are many barrier-columns and barrier-rows.

\begin{lemma}\label{lem:many-empty-rows-and-cols}
    If the total number of barrier-columns and barrier-rows is more than $\frac{3\n}{k}$, then Algorithm~\ref{algo:frechet-tester-exact} returns \enquote{no} in Line~\ref{line:found-zero-col-row} with probability at least $\frac{9}{10}$.
\end{lemma}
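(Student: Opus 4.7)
The plan is a standard sampling-concentration argument applied to the initial \texttt{repeat} loop (Lines 2--4) of Algorithm~\ref{algo:frechet-tester-exact}; no structural property of $\fsm$ ($\tloc$-locality, permeability, etc.) is needed for this lemma.

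Let $b_c$ and $b_r$ denote, respectively, the number of barrier-columns and barrier-rows of $\fsm$, so the hypothesis reads $b_c+b_r>3\n/k$. I would introduce the set of \emph{triggering} indices
\[S=\bigl\{\,j\in[\n]\mid\text{column }j\text{ is a barrier-column or row }j\text{ is a barrier-row}\,\bigr\}.\]
The key observation is that Line~\ref{line:found-zero-col-row} outputs \enquote{no} as soon as any sampled index lies in $S$. Since every barrier-column index and every barrier-row index contributes (at least once) to $S$, the hypothesis yields $|S|\ge\max(b_c,b_r)\ge (b_c+b_r)/2>3\n/(2k)$.

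Finally, since the $\lceil k\rceil$ indices are drawn i.i.d.\ uniformly from $[\n]$, the probability that no sample lands in $S$ -- equivalently, that the algorithm fails to return \enquote{no} in Line~\ref{line:found-zero-col-row} -- is
\[\left(1-\tfrac{|S|}{\n}\right)^{\lceil k\rceil}\;\le\;\exp(-|S|\,k/\n).\]
Plugging in the lower bound on $|S|$ together with the value $k=24\tloc/\eps$ used by Algorithm~\ref{algo:frechet-tester-exact}, and applying $1-x\le e^{-x}$, bounds this probability by $1/10$, proving the claim.

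There is no substantive obstacle in this proof: the whole argument is a one-line application of the standard exponential sampling bound. The only delicate point is the numerical bookkeeping in the last step -- one must track the constants carefully to confirm the exponent suffices for the stated $9/10$ success probability -- but this is routine.
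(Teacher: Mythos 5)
Your overall strategy is the same as the paper's---bound the probability that all $\lceil k\rceil$ uniform samples miss the set of \enquote{triggering} indices---but your more careful accounting of that set opens a genuine numerical gap, and the \enquote{routine bookkeeping} you defer does not close it. Since Line~\ref{line:found-zero-col-row} checks row $j$ and column $j$ for the \emph{same} sampled index $j$, the per-iteration miss probability is $1-\abs{S}/\n$ with $S$ the union of the barrier-row and barrier-column index sets, and all one can guarantee is $\abs{S}\geq\max(b_c,b_r)\geq(b_c+b_r)/2>\frac{3\n}{2k}$, exactly as you write. Plugging this into $\exp(-\abs{S}k/\n)$ gives a failure probability of at most $e^{-3/2}\approx 0.223$, which is \emph{not} at most $\frac{1}{10}$; your final step would require $\abs{S}>\frac{\n\ln 10}{k}\approx\frac{2.31\,\n}{k}$. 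The factor of $2$ lost in passing from $b_c+b_r$ to $\abs{S}$ is precisely what kills the constant, and it is a real loss: if the barrier-row and barrier-column index sets coincide, $\abs{S}$ really is only $(b_c+b_r)/2$.

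For comparison, the paper keeps the full exponent $3$ by bounding the probability that no iteration triggers by $\left(1-\frac{b_r}{\n}\right)^{k}\left(1-\frac{b_c}{\n}\right)^{k}\leq e^{-(b_r+b_c)k/\n}\leq e^{-3}$, i.e., it multiplies the row-miss and column-miss probabilities per iteration as though the two checks used independent samples. The per-iteration inequality $1-\abs{S}/\n\leq(1-b_r/\n)(1-b_c/\n)$ behind this holds only when the two barrier index sets overlap little; your union-based bound is the honest worst case, and in that worst case the detection probability is only about $1-e^{-3/2}\approx 0.78$. So to obtain the stated conclusion you would have to either adopt and justify the paper's product bound, or accept a larger constant in $k$ (roughly doubling the $24$) to absorb the factor-of-$2$ loss. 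As written, the last step of your proof does not hold.
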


\begin{proof}
    Suppose there are $z_r$ barrier-rows and $z_c$ barrier-columns with $z_r+z_c>\frac{3\n}{k}$. If we sample $j\in[\n]$ uniformly at random, we have that row $j$ is a barrier-row with probability $\frac{z_r}{\n}$ and column $j$ is a barrier-column with probability $\frac{z_c}{\n}$. So after $\lceil k\rceil$ iterations, the probability that none of the sampled columns or rows is a barrier-column or barrier-row is at most
    \[\left(1-\frac{z_r}{\n}\right)^{k}\left(1-\frac{z_c}{\n}\right)^{k}\leq e^{-(z_r+z_c)k/\n}\leq e^{-3}<\frac{1}{10}.\]
    So the algorithm returns \enquote{no} with probability at least $\frac{9}{10}$.
\end{proof}

\begin{figure}
    \centering
    \includegraphics[width=0.45\linewidth]{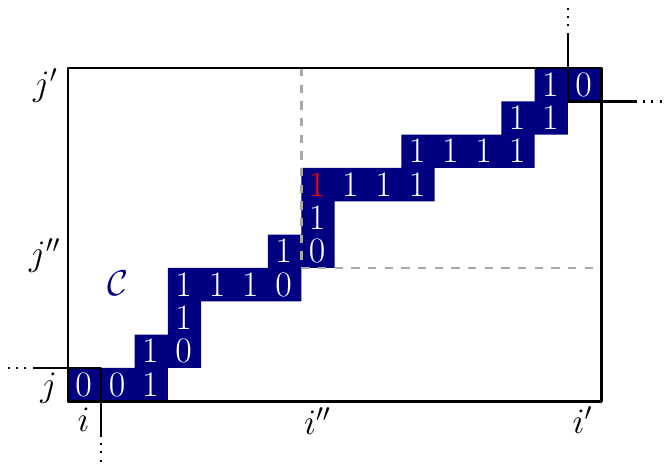}
    \caption{Example for grouping in the proof of Lemma~\ref{lem:main-lemma} with $\tloc=2$. The group starts in $(i,j)$. The red one-entry is the $(4\tloc+1)$-th one-entry in the group. So we stop the group at the next zero-entry $(i',j')$. Since it contains more than $8\tloc$ one-entries, this is a hollow group. The entries $(i'',j''),(i',j')$ satisfy the conditions in the Barrier (Lemma~\ref{lem:barrier-lemma}).}
    \label{fig:grouping-first}
\end{figure}

\begin{lemma}[Main Lemma]\label{lem:main-lemma}
    Let $\fsm$ be $\tloc$-local and $\fsm[1,1]=0=\fsm[\n,\n]$. Suppose the total number of barrier-columns and barrier-rows is at most $\frac{\eps\n}{8\tloc}$. Let $\optcoupling$ be a \restricteddiagonals\ path with lowest cost through $\fsm$ and suppose that $c(\optcoupling)>\eps\n$. 
    Then, with probability at least $\frac{9}{10}$ at least one sampled interval in the set $\J$ during the execution of Algorithm~\ref{algo:frechet-tester-exact} corresponds to a non-permeable block of columns or rows. 
\end{lemma}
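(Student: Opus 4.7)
My plan is to use the optimal coupling $\optcoupling$ (which contains more than $\eps\n$ one-entries by assumption) to construct many non-permeable blocks in $\fsm$, and then invoke Lemma~\ref{lem:interval-sampling} to argue that the random interval sampling in Lines~\ref{line:interval-samp-beg}-\ref{line:interval-samp-end} is likely to hit one of them.

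I partition the one-entries on $\optcoupling$ into \emph{groups}, each bounded by two zero-entries of $\optcoupling$. Traversing $\optcoupling$ from $(1,1)$, I maintain a running counter of one-entries seen since the last group boundary; as soon as this counter reaches $4\tloc+1$ I close the current group at the next zero-entry of $\optcoupling$ and open a new group there. A group is \emph{hollow} if it contains strictly more than $8\tloc$ one-entries, and \emph{non-hollow} otherwise. Let $H,N$ denote the respective counts. The key structural fact is that a hollow group ends with a \emph{final run} (from its last interior zero-entry to its closing zero-entry $(i',j')$) containing more than $4\tloc$ consecutive one-entries, because the earlier runs of the group contribute at most $4\tloc$ ones by the grouping rule, as in Figure~\ref{fig:grouping-first}.

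By the Barrier Lemma applied to this final run (its $\tloc$-locality preconditions are automatic since $\fsm$ is $\tloc$-local), each hollow group $g$ contributes at least $k_g/(2\tloc)$ barrier-rows or barrier-columns in its range, where $k_g$ denotes the number of ones in its final run. The assumed cap of $\eps\n/(8\tloc)$ on the total barriers thus gives $\sum_{g\text{ hollow}} k_g \leq \eps\n/4$. Together with the bounds of $4\tloc$ prefix-ones per hollow group and $8\tloc$ ones per non-hollow group, and with $c(\optcoupling)>\eps\n$, this yields
\[
\eps\n \;<\; \tfrac{\eps\n}{4} \;+\; 4\tloc\, H \;+\; 8\tloc\, N,
\]
and hence $H+N > 3\eps\n/(32\tloc) \geq 3K$. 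Moreover, every group, hollow or not, satisfies the Impermeability Lemma's hypotheses: its endpoints are zero-entries of $\optcoupling$, the subpath visits at least $4\tloc-1$ one-entries, and the path-length $(i'-i)+(j'-j)\geq 4\tloc+2$ forces at least one of the two spans to exceed $2\tloc$. So for each group I obtain a non-permeable block in whichever direction makes the block smaller.

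A Markov-style pigeonhole on $\sum_g [(i'-i)+(j'-j)] \leq 2\n$ shows that at most $O(\n/\ell)$ of these $>3K$ groups have path length exceeding $\ell$; the remaining at least $K$ groups give non-permeable blocks of length at most $\ell$. Since consecutive groups share only their boundary point, every index of $[\n]$ lies in the column range of at most two groups and in the row range of at most two groups, so each index belongs to at most $c=4$ of the resulting blocks. Lemma~\ref{lem:interval-sampling} applied with parameters $K$, $\ell$, and $c$ then yields probability at least $9/10$ that some interval of $\J$ contains one of the non-permeable blocks, triggering a rejection in Line~\ref{line:permeability-query-col} or~\ref{line:permeability-query-row}. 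The main technical obstacle is the counting inequality $H+N > 3K$, where the ones along $\optcoupling$ must be apportioned between non-hollow groups, prefix runs of hollow groups, and final runs of hollow groups, with the Barrier Lemma used to cap the final-run contribution via the barrier budget; a secondary concern is the pigeonhole trimming to make block-lengths fit the $\ell$-constraint of Lemma~\ref{lem:interval-sampling}.
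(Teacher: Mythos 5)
Your proof is correct and follows essentially the same route as the paper's: the same greedy grouping of $\optcoupling$ into blocks of more than $4\tloc$ one-entries closed at the next zero-entry, the same hollow/dense split controlled by the Barrier Lemma and the barrier budget, the Impermeability Lemma applied to the surviving groups, and Lemma~\ref{lem:interval-sampling} invoked with the algorithm's parameters $K$, $\ell$, $c=4$. The only differences are bookkeeping (you count groups directly via the inequality $\eps\n < \eps\n/4 + 4\tloc H + 8\tloc N$ and trim by path length at most $2\n$, where the paper counts one-entries in dense groups and trims by group size at most $4\n$), and the quantitative slack in your counts absorbs the possibly-deficient last group that you do not explicitly set aside.
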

\begin{proof}
    We first divide the path $\optcoupling$ into a minimum number of groups that each contain more than $4\tloc$ one-entries and start and end with a zero-entry. A visualization of this can be seen in Figure~\ref{fig:grouping-first}. Here, the start of one group is the same entry as the end of the prior group. We do this greedily by following $\optcoupling$ until we have visited $4\tloc+1$ one-entries. Then, we end the group at the next zero-entry after that and repeat the process. Note that the last group may contain fewer than $4\tloc+1$ one-entries.

    We will first show that there are not too many groups that contain more than $8\tloc$ one-entries. Then, we will show that within the remaining groups, there are sufficiently many groups that are not too big. With these groups we then form the set $X$ for the interval sampling in Lemma~\ref{lem:interval-sampling} (i.e., each interval in $X$ will induce a permeability query that fails).

    A group that contains more than $8\tloc$ one-entries is called \textit{hollow}, otherwise, it is \textit{dense}.
    Let $H_1,\ldots,H_k$ be the hollow groups with $x_1,\ldots,x_k$ one-entries.
    \begin{claim}\label{claim:few-hollow-groups}
        At most $\frac{\eps\n}{2}$ of the one-entries of $\optcoupling$ are in hollow groups.
    \end{claim}
    \begin{proof}[Proof of Claim~\ref{claim:few-hollow-groups}]
        Let $H$ be one of the hollow groups and let $x$ be the number of one-entries in $H$.        
        Let $(i',j')$ and $(i'',j'')$ be the last and second to last zero-entries of $H$. By the greedy construction of $H$ we know that at least $x-4\tloc$ of the one-entries appear between $(i'',j'')$ and $(i',j')$.
        So we have that $i'-i''+j'-j''\geq x-4\tloc>4\tloc$ and therefore, $(i'',j'')$ and $(i',j')$ satisfy the conditions of the Barrier Lemma (Lemma~\ref{lem:barrier-lemma}). Hence, the number of barrier-columns in $[i'',i']$ and barrier-rows in $[j'',j]$ is at least
        \[\frac{i'-i''+j'-j''}{2\tloc}\geq\frac{x-4\tloc}{2\tloc}\geq\frac{x}{4\tloc}.\]
        The last inequality follows from the fact that $x>8\tloc$ and therefore $x-4\tloc\geq\frac{x}{2}$. 
        Summing up these inequalities for all groups $H_1,\ldots,H_k$, we get that the total number of barrier-columns and barrier-rows is at least $\frac{1}{4\tloc}\sum_{i=1}^kx_i$. By assumption, we also have that 
        the total number of such columns and rows is at most $\frac{\eps\n}{8\tloc}$ and therefore, $\sum_{i=1}^kx_i\leq\frac{\eps\n}{2}$.
    \end{proof}
    So we know that at least $\frac{\eps\n}{2}$ of the one-entries of $\optcoupling$ are in dense groups. Since a dense group contains at most $8\tloc$ one-entries, we have at least $\lceil\frac{\eps\n}{16\tloc}\rceil$ dense groups.
    Let $G$ be a dense group that starts in $(i,j)$ and ends in $(i',j')$. We say that $i'-i+j'-j+2$ is the \emph{size} of $G$. This corresponds to the sum of the number of columns and rows that it visits. In total, every row and column is counted at most twice by the definition of the groups. So the sum of all group sizes is at most $4\n$. 
    We observe that at least half of the groups (which are at least $\lceil\frac{\eps\n}{32\tloc}\rceil$ many) have size at most $2\cdot4\n/\lceil\frac{\eps\n}{16\tloc}\rceil\leq\frac{128\tloc}{\eps}$. We call these groups \emph{small}.
    At last, we create a set $X$ of intervals so that we can apply Lemma~\ref{lem:interval-sampling} to it. Each of the intervals $[i,i']$ in $X$ will have the property that either the block $[i,i']$ of consecutive columns or the block $[i,i']$ of consecutive rows is not permeable.
    
    Let $G$ be a small dense group and suppose it starts in $(i,j)$ and ends in $(i',j')$.
    We observe that unless $G$ is the very last group, it visits at least $4\tloc+1$ one-entries and therefore either visits more than $2\tloc$ rows or more than $2\tloc$ columns. In the first case, we can apply the Impermeability Lemma to see that 
    the block $[i,i']$ of columns is not permeable. So we add $[i,i']$ to $X$. In the second case, we can apply the Impermeability Lemma to see that 
    the block $[j,j']$ of rows is not permeable. So we add $[j,j']$ to $X$. In the end the set $X$ has at least $\lceil\frac{\eps\n}{32\tloc}\rceil-1$ elements and each of them has length at most $\frac{128\tloc}{\eps}$. 
    If we take a closer look at the definitions of the groups, we see that all columns are only contained in one group except for the first and last column of each group. So we can say that every column is contained in at most two groups, i.e. at most two intervals of $X$. The same holds for rows. So, we have that each element of $[\n]$ is contained in at most four of the intervals in $X$. 
    So we can apply Lemma~\ref{lem:interval-sampling} to $X$ with $\ell=\lceil\frac{128\tloc}{\eps}\rceil$, $K=\lceil\frac{\eps\n}{32\tloc}\rceil-1$, $c=4$ to see that the algorithm samples one of the intervals from $X$ with probability at least $\frac{9}{10}$.
    So the permeability query fails for the columns or for the rows
    with the respective interval fails and we thus have a failing permeability query with probability at least $\frac{9}{10}$.
\end{proof}

\begin{lemma}\label{lem:number-queries}
    Algorithm~\ref{algo:frechet-tester-exact} performs $\mathcal{O}(\frac{\tloc}{\eps}\log\frac{\tloc}{\eps})$ queries.
\end{lemma}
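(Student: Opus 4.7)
The plan is to split the query count across the three phases of Algorithm~\ref{algo:frechet-tester-exact} and then combine them.

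First I would count the \emph{barrier-detection} phase. The initial checks on $\fsm[1,1]$ and $\fsm[\n,\n]$ cost two queries. The subsequent repeat-loop executes $\lceil k \rceil = \lceil 24\tloc/\eps \rceil$ iterations, and in each iteration testing whether a specified row or column is a barrier requires exactly one oracle query (the oracle returns the sorted list of zero-entries in that row or column, from which we can immediately read off whether the list is empty). Hence this phase contributes $O(\tloc/\eps)$ queries.

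Next I would count the \emph{permeability-checking} phase. The interval sampling in Lines~\ref{line:interval-samp-beg}--\ref{line:interval-samp-end} only populates the set $\J$ and issues no oracle queries. Each interval $[i,j] \in \J$ then triggers at most two permeability queries (one on the column block, one on the row block), and as discussed right after Algorithm~\ref{algo:frechet-tester-exact}, a permeability query on a block of length $L$ can be answered using $L$ oracle queries and linear additional time. Thus the total number of queries in this phase is bounded by $O\bigl(\sum_{J \in \J} |J|\bigr)$.

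Now I would apply Lemma~\ref{lem:interval-sampling} with the explicit parameters used by the algorithm, namely $K = \lceil \eps \n / (32\tloc) \rceil - 1$, $\ell = \lceil 128\tloc/\eps \rceil$ and $c = 4$. The lemma gives
\[
\sum_{J \in \J} |J| \;=\; O\!\left(\frac{\n \log \ell}{K}\right) \;=\; O\!\left(\frac{\n \log(\tloc/\eps)}{\eps \n / \tloc}\right) \;=\; O\!\left(\frac{\tloc}{\eps}\log \frac{\tloc}{\eps}\right).
\]
Summing the two phases yields the claimed bound $O\bigl(\tfrac{\tloc}{\eps}\log \tfrac{\tloc}{\eps}\bigr)$.

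The only mild obstacle is handling the corner case in which $\eps \n / (32\tloc) \leq 2$, so that the ceiling definition of $K$ could produce a zero or very small denominator. In that regime we have $\n = O(\tloc/\eps)$, so one may directly query the entire free space matrix using $O(\n) = O(\tloc/\eps)$ queries, which is well within the target bound; thus we may assume $K = \Theta(\eps \n / \tloc)$ for the main estimate above without loss of generality.
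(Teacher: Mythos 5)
Your proof is correct and follows essentially the same route as the paper: bound the barrier-detection phase by $O(\tloc/\eps)$ and bound the permeability phase by the total interval length $O(\n\log\ell/K)$ from Lemma~\ref{lem:interval-sampling}, then substitute the algorithm's values of $K$ and $\ell$. Your extra remark handling the degenerate case $K\le 0$ is a small but reasonable addition that the paper's proof glosses over.
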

\begin{proof}
    Line~\ref{line:found-zero-col-row} performs $\mathcal{O}(\frac{\tloc}{\eps})$ queries. By Lemma~\ref{lem:interval-sampling}, Lines~\ref{line:interval-samp-beg}-\ref{line:interval-samp-end} sample a set $\J$ of intervals that contain a total of $\mathcal{O}(\frac{\n\log\ell}{K})$ rows and columns. Each of these columns and rows cause a query in a permeability query. As $\ell=\lceil\frac{128\tloc}{\eps}\rceil$ and $K=\lceil\frac{\eps\n}{32\tloc}\rceil-1$, the claim follows.
\end{proof}

\begin{theorem}
    \label{thm:frechet-tester-known-t}
    Let $\delta>0$ and $0<\eps<2$ be given. Let $\firstcurve$ and $\secondcurve$ be curves with $\n$ vertices such that their free space matrix with value $\delta$ is $\tloc$-local and $\tloc$ is known. Then, Algorithm~\ref{algo:frechet-tester-exact} is a Fréchet-tester that uses $\mathcal{O}(\frac{\tloc}{\eps}\log\frac{\tloc}{\eps})$ queries.
\end{theorem}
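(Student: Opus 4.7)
The plan is to establish the two required properties of a Fr\'echet-tester (one-sided correctness on the ``yes'' instances, and success probability at least $4/5$ on the ``no'' instances) and then invoke Lemma~\ref{lem:number-queries} for the query bound. The query bound is already a black box, so the real work is the correctness analysis, which I would split according to the decision points of Algorithm~\ref{algo:frechet-tester-exact}.

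For the one-sided direction, I would observe that the algorithm returns ``no'' only in Line~1, Line~\ref{line:found-zero-col-row}, Line~\ref{line:permeability-query-col}, or Line~\ref{line:permeability-query-row}. If $\df{\firstcurve}{\secondcurve}\leq\delta$, then there is a monotone Manhattan path of cost $0$ from $(1,1)$ to $(\n,\n)$, so $\fsm[1,1]=\fsm[\n,\n]=0$, no row or column can be a barrier (every row and column is traversed by the cost-$0$ path), and every interval $[i,j]$ of rows or columns is permeable (restrict the cost-$0$ path to these rows or columns). Hence none of the rejection checks can trigger, and the algorithm returns ``yes'' with probability $1$.

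For the ``no'' direction, assume $\firstcurve$ and $\secondcurve$ are $(\eps,\delta)$-far, so every monotone Manhattan path $\optcoupling$ satisfies $c(\optcoupling)>\eps\n$. If $\fsm[1,1]=1$ or $\fsm[\n,\n]=1$ the algorithm rejects in Line~1 deterministically, so I can assume both corners are zero. I would now do a dichotomy on the number $B$ of barrier-rows plus barrier-columns, using the threshold $\frac{3\n}{k}=\frac{\eps\n}{8\tloc}$ built into the parameters $k=24t/\eps$. In the case $B>\frac{\eps\n}{8\tloc}$, Lemma~\ref{lem:many-empty-rows-and-cols} directly gives that the sampling loop on Lines~2--4 catches a barrier and rejects with probability at least $9/10$. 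In the complementary case $B\leq\frac{\eps\n}{8\tloc}$, the hypotheses of the Main Lemma (Lemma~\ref{lem:main-lemma}) are met, so with probability at least $9/10$ the interval set $\J$ constructed in Lines~\ref{line:interval-samp-beg}--\ref{line:interval-samp-end} contains an interval whose corresponding block of rows or columns is not permeable, and the algorithm rejects in Line~\ref{line:permeability-query-col} or Line~\ref{line:permeability-query-row}. Either way the rejection probability is at least $9/10\geq\succprob$.

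Combining the two cases yields the desired $\succprob$ success probability, and Lemma~\ref{lem:number-queries} bounds the total number of queries by $\mathcal{O}(\frac{\tloc}{\eps}\log\frac{\tloc}{\eps})$, completing the argument. The main obstacle is really not in this final assembly but in verifying that the parameter settings $k=24\tloc/\eps$, $K=\lceil\eps\n/(32\tloc)\rceil-1$, $\ell=\lceil128\tloc/\eps\rceil$, and $c=4$ feed precisely into the hypotheses of Lemmas~\ref{lem:interval-sampling}, \ref{lem:many-empty-rows-and-cols}, and \ref{lem:main-lemma}; the dichotomy threshold has been chosen exactly so that the two cases are complementary and each lemma applies cleanly. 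Once that bookkeeping is done, the proof is a two-line case analysis followed by an appeal to Lemma~\ref{lem:number-queries}.
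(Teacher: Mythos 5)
Your proposal is correct and follows essentially the same route as the paper's proof: one-sided correctness from the existence of a cost-$0$ path, then a dichotomy on the number of barrier-rows/columns at the threshold $\frac{3\n}{k}=\frac{\eps\n}{8\tloc}$ handled by Lemma~\ref{lem:many-empty-rows-and-cols} and Lemma~\ref{lem:main-lemma} respectively, with the query count from Lemma~\ref{lem:number-queries}. Your write-up is in fact slightly more careful than the paper's in explicitly discharging the corner check of Line~1 (which the Main Lemma's hypothesis $\fsm[1,1]=0=\fsm[\n,\n]$ requires) and in stating the dichotomy so that the two lemmas' hypotheses are exactly complementary.
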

\begin{proof}
    If the Fréchet distance of $\firstcurve$ and $\secondcurve$ is at most $\delta$, there is a coupling path of cost $0$ through $\fsm$ and Algorithm~\ref{algo:frechet-tester-exact} always returns \enquote{yes}. 
    So we only need to determine the probability of a false positive.
    Suppose $\firstcurve$ and $\secondcurve$ are $(\eps,\delta)$-far and therefore, a minimum cost \restricteddiagonals\ path visits more than $\eps\n$ one-entries. If there are at least $\frac{\eps\n}{8\tloc}$ barrier-columns and barrier-rows, we return \enquote{no} with probability at least $\frac{9}{10}$ in Line~\ref{line:found-zero-col-row} by Lemma~\ref{lem:many-empty-rows-and-cols}. 
    If there are fewer than $\frac{\eps\n}{8\tloc}$ barrier-columns and barrier-rows and we get past Line~\ref{line:found-zero-col-row}, we return \enquote{no} with probability at least $\frac{9}{10}$ in Lines~\ref{line:permeability-query-col} and~\ref{line:permeability-query-row} by the Lemma~\ref{lem:main-lemma}. So in any case, the probability for a false positive is at most $\frac{1}{10}$.
    The number of queries follows from Lemma~\ref{lem:number-queries}.
\end{proof}

\section{Locality-oblivious testing of the discrete Fréchet distance}\label{sec:frechet-unknown-t}

Our main goal in this section is to obtain a Fr\'echet-tester that does not require an additional input (the value of $t$). 
Our approach for this is to first run a \enquote{tester} to find an estimate for $t$ 
and then run \refalg{frechet-tester-exact} with that parameter. 
Unfortunately, this combination is more complicated than just running two testers, for a number of different reasons:
First, \refalg{frechet-tester-exact} assumes that the matrix is $t$-local whereas the best we can do 
is to show that with some probability close to 1, it is \enquote{close} to being $t$-local.
This change breaks the proof of correctness of \refalg{frechet-tester-exact}.
The second complication is that we cannot plug in \textit{any} tester for locality, we need a specific one for our analysis that also guarantees some additional properties.
Guaranteeing the additional properties requires more queries than simply testing for locality.

\subsection{Testing the locality of the Free Space Matrix}\label{sec:locality}

To this end, we present a specific error model that is suitable for our purposes. 
In the following we define how $\tloc$-locality of a matrix can be violated. First, it could be that two columns or rows do not pass $\tloc$-locality according to Definition~\ref{def:pass-tloc}. In this case, we say that the respective columns or rows \emph{fail $\tloc$-locality}.
We introduce a second definition that considers a single column and looks for violations of $\tloc$-locality within that column and within the rows that have a zero-entry in this column. 

\begin{definition}\label{def:locfailure}
    A single column $i$ (resp. row $j$) \emph{fails second order $\tloc$-locality} if either it contains two zero-entries $(i,j),(i,j')$ (resp. $(i,j),(i',j)$) with $\abs{j'-j}>2\tloc$ (resp. $\abs{i'-i}>2\tloc$) or if there exists a zero-entry $(i,j)$ and two zero-entries $(i_1,j)$ and $(i_2,j)$ 
    (resp. $(i,j_1)$ and $(i,j_2)$) such that $|i_1-i_2| > 2t$ (resp. $|j_1-j_2|>2t$).
    Otherwise, it \emph{passes second order $\tloc$-locality}.
\end{definition}
Intuitively, a single column fails second order $\tloc$-locality if it has a zero-entry in a row that also has two zero-entries more than $2t$ distance apart
(similar intuition holds for the rows).  See \reffig{locfailure}.
Observe that if a matrix is not $t$-local then there either exists a pair of columns or rows that fail $\tloc$-locality or a column or row that fails second order $\tloc$-locality.

\begin{figure}[h]
    \centering
    \includegraphics[width=0.5\linewidth]{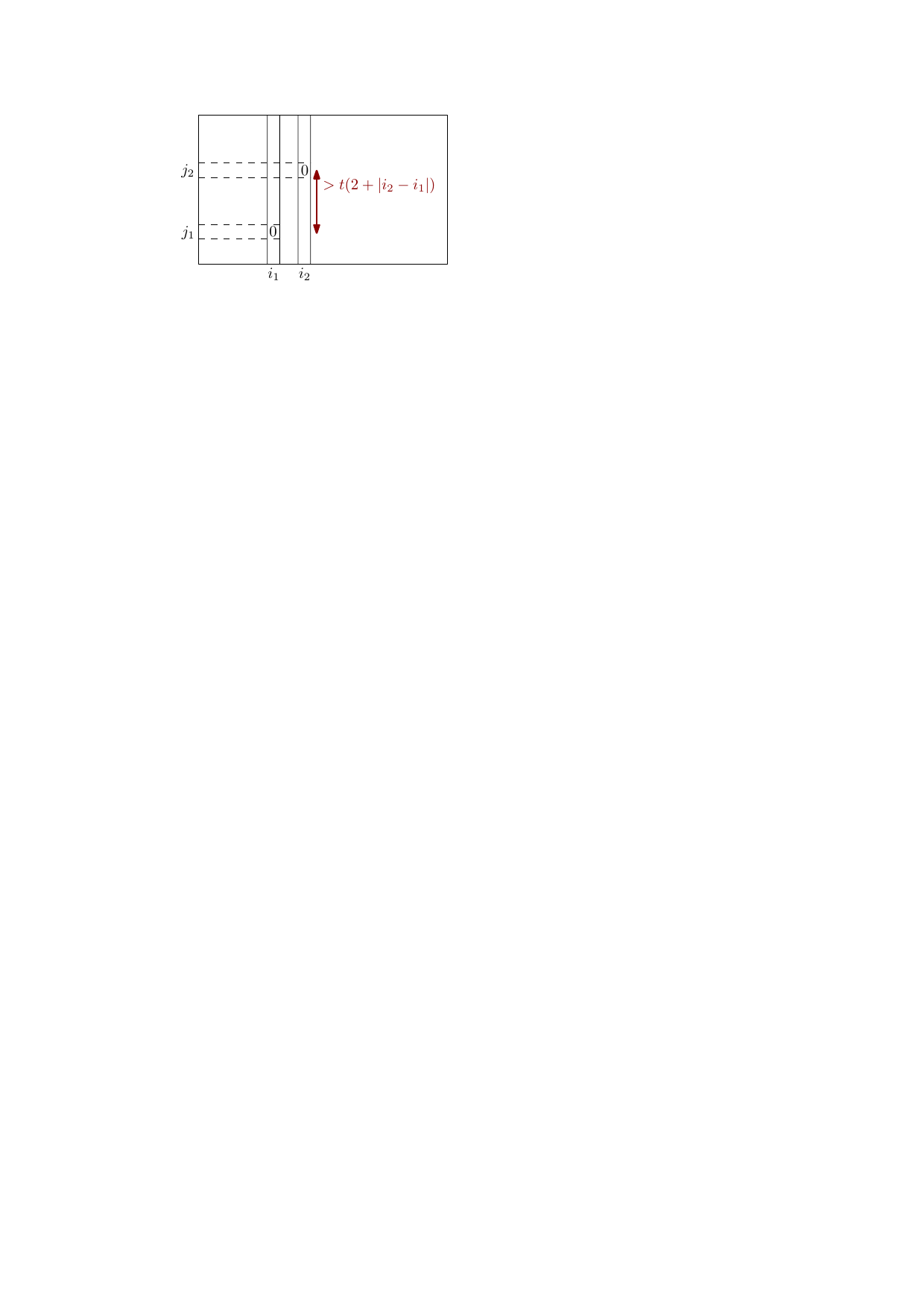}
    \caption{(left) Two columns fail $\tloc$-locality. (right) A column $i$ fails second order $\tloc$-locality.}
    \label{fig:locfailure}
\end{figure}

We now want to define what it means for a matrix to be $\zeta$-close to $\tloc$-local. This definition is specifically designed to fit the Fréchet-tester in Section~\ref{sec:frechet-unknown-t}. Other, maybe more natural definitions of $(\tloc,\zeta)$-close may need different property-testers. However, testing $\tloc$-locality is not our main focus in this paper. 

\begin{definition}[strongly $(t,\zeta)$-local]
    \label{def:stronglocal}
    We say that a free space matrix $\fsm$ is \emph{strongly $(\tloc,\zeta)$-local} for some $\tloc\geq1$ and $\zeta\in[0,1]$ if 
    we can partition the set of all columns and rows into two disjoint sets $W$ and $I$ with $\abs{I}=\zeta\n$ such that the following properties hold:
    i) columns $1,\n$ and rows $1,\n$ are in $W$, ii) any two columns and any two rows in $W$ pass $\tloc$-locality and iii) any row or column in $W$ passes second order $\tloc$-locality. We call the set $W$ the \emph{witness set} and the columns and rows in $I$ \emph{ignored}.  
\end{definition}

If there are $(1-\zeta)\n$ rows and $(1-\zeta)\n$ columns such that no two of these rows or columns fail $\tloc$-locality and no column or row fails second order $\tloc$-locality, then they serve as a witness set to see that the matrix is $(\tloc,2\zeta)$-local.
Because of this, we will check for failures of $\tloc$-localities and second order $\tloc$-localities among the columns and then among the rows in a similar fashion. 

\begin{observation}\label{obs:checks-for-locality}
    When we query two columns $i$ and $i'$, we can check if they fail $t$-locality by comparing the lowest zero-entry in column $i$ with the highest zero-entry in column $i'$ and vice versa.
    Also, after querying a column $i$, testing for second order $\tloc$-locality can be done with at most $2\tloc+1$ additional row queries at all zero-entries of the column $i$.
\end{observation}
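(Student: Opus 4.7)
The statement has two parts, both of which follow from direct inspection of Definitions~\ref{def:pass-tloc} and~\ref{def:locfailure}. My plan is to handle them in turn, the main work in each case being to identify which extremal entries suffice for the check.

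For the first part, recall that columns $i$ and $i'$ fail $\tloc$-locality exactly when there exist zero-entries $(i,j_1)$ and $(i',j_2)$ with $\abs{j_1-j_2}>\tloc\cdot(2+\abs{i-i'})$. Since the right-hand side is fixed once $i$ and $i'$ are fixed, it suffices to maximize $\abs{j_1-j_2}$ over pairs of zero-entries in the two columns. This maximum is attained either by pairing the zero-entry of column $i$ with smallest row index against the zero-entry of column $i'$ with largest row index, or by reversing the two roles. These are precisely the two comparisons described in the observation, and both are computed from the results of the two column queries without any further queries.

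For the second part, I would split the test for second order $\tloc$-locality of column $i$ into the two failure conditions of Definition~\ref{def:locfailure}. The first condition (two zero-entries $(i,j),(i,j')$ of column $i$ with $\abs{j-j'}>2\tloc$) depends only on column $i$, so it can be checked by comparing the smallest and largest row indices among the zero-entries already returned by the column query; this uses no additional queries. If the condition is witnessed, the test terminates. Otherwise, all zero-entries of column $i$ lie in a window of $2\tloc+1$ consecutive rows, so column $i$ has at most $2\tloc+1$ zero-entries. To check the remaining failure condition, for each such zero-entry $(i,j)$ we query row $j$ and compare the smallest and largest column indices of zero-entries in that row to determine whether they differ by more than $2\tloc$. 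This gives at most $2\tloc+1$ additional row queries at the zero-entries of column $i$, as claimed.

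The observation is essentially a bookkeeping argument; the only subtlety worth flagging is that the bound of $2\tloc+1$ on the number of row queries relies on first handling the within-column failure condition in order to bound the number of zero-entries in column $i$. Once that restriction is in place, both checks reduce to comparing extremal row or column indices of zero-entries produced by single-row or single-column queries.
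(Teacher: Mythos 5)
Your proof is correct and fills in exactly the reasoning the paper leaves implicit (the observation is stated without proof): reducing the pairwise check to extremal row indices across the two columns, and noting that the within-column condition must be checked first so that the number of zero-entries in column $i$ — and hence the number of row queries — is bounded by $2\tloc+1$. No gaps; this is the intended argument.
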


Our testing algorithm is quite simple and, in fact, it is an extension of the classic \enquote{monotonicity} testing in an array~\cite{bhattacharyya2022property}. The algorithm is given in Algorithm~\ref{algo:t-locality-tester}.

\begin{algorithm}
    \caption{Locality-tester($\fsm$,$\sigma$,$\tloc$)}
    \label{algo:t-locality-tester}
    \begin{enumerate}
        \item $T\gets$ binary search tree of height $\lceil\log_2\n\rceil$ on $[\n]$ with root $r=\lceil\frac{\n}{2}\rceil$.
        \item \textbf{repeat} $\lceil\frac{3}{\sigma}\rceil+2$ times:
        \item ~~First two iterations: $i\gets1$ and $i\gets\n$. Later: Choose $i\in[2,n-1]$ unif.\ at random.\label{line:sample-index-locality-tester}\;
        \item ~~\textbf{If} column $i$ or row $i$ fails second order $t$-locality \textbf{then} return \enquote{no}.\label{line:break-locality-in-one-row-or-col}\;
        \item ~~\textbf{foreach} $j$ on the $r$-$i$-path in $T$ \textbf{do}:
        \item ~~~~ \textbf{If} columns $i$ and $j$ fail $\tloc$-locality \textbf{then} return \enquote{no}.\label{line:column-locality-query}\;
        \item ~~~~ \textbf{If} rows $i$ and $j$ fail $\tloc$-locality \textbf{then} return \enquote{no}.\label{line:row-locality-query}\;
        \item return \enquote{yes}.\;
    \end{enumerate}
\end{algorithm}

\begin{lemma}\label{lem:locality-tester-decision}
    Given $\sigma>0$ and $\tloc\geq1$, Algorithm~\ref{algo:t-locality-tester} returns \enquote{yes} if $\fsm$ is
    $\tloc$-local; if $\fsm$ is not strongly $(2\tloc,\sigma)$-local,
    it returns \enquote{no} with probability at least $\frac{9}{10}$. It uses $\mathcal{O}(\frac{t +\log n}{\sigma})$ queries.
\end{lemma}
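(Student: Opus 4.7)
The plan is to handle the two implications separately. The ``if $\fsm$ is $\tloc$-local'' direction is immediate: a $\tloc$-local matrix passes every individual $\tloc$-locality and second order $\tloc$-locality test by definition, so none of the return-``no'' branches in Algorithm~\ref{algo:t-locality-tester} are ever executed and the algorithm outputs ``yes'' deterministically. For the harder direction I would prove the contrapositive: if the algorithm returns ``yes'' with probability strictly greater than $\frac{1}{10}$, then $\fsm$ must be strongly $(2\tloc,\sigma)$-local in the sense of Definition~\ref{def:stronglocal}.

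The first step is to note that the first two iterations are deterministic, so any positive acceptance probability forces columns $1,\numvertfirst$ and rows $1,\numvertfirst$ to pass second order $\tloc$-locality and to pass $\tloc$-locality with every index along their root-to-leaf paths in $T$. I would then define the set $B \subseteq [2,\numvertfirst - 1]$ of ``bad'' indices---those $i$ for which sampling $i$ would trigger a ``no'' return (via a second-order failure of column $i$ or row $i$, or a pairwise $\tloc$-locality failure along the tree path for either columns or rows). A standard sampling bound shows that $|B|$ cannot exceed $\sigma(\numvertfirst - 2)$, otherwise the probability that none of the $\lceil 3/\sigma\rceil$ random samples lands in $B$ would be at most $(1-\sigma)^{3/\sigma} \leq e^{-3} < \frac{1}{10}$, contradicting the assumption. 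Setting $G = \{1,\numvertfirst\} \cup ([2,\numvertfirst-1] \setminus B)$ and placing every column and row indexed by $G$ into $W$ yields a candidate witness set with $|I| = O(\sigma\numvertfirst)$, which matches Definition~\ref{def:stronglocal} up to an absorbable constant factor.

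To verify the witness properties, property (i) and property (iii) of Definition~\ref{def:stronglocal} are immediate from the construction. Property (ii)---pairwise $2\tloc$-locality within $W$---is the heart of the argument. Given two columns $i, i' \in W$, I would take their lowest common ancestor $v$ in $T$: the BST property guarantees $v \in [\min(i,i'),\max(i,i')]$, and both $i$ and $i'$ passed $\tloc$-locality with $v$ along their root-to-leaf paths. If column $v$ has any zero-entry $(v,j'')$, then applying the triangle inequality to zero-entries $(i,j), (v,j'')$ and $(i',j')$ yields $|j - j'| \leq \tloc(2+|i-v|) + \tloc(2+|i'-v|) = 4\tloc + \tloc|i-i'| \leq 2\tloc(2+|i-i'|)$, which is exactly $2\tloc$-locality between $i$ and $i'$. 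The row case is symmetric.

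The main obstacle is the degenerate case where column $v$ carries no zero-entries and the $\tloc$-locality test with $v$ is vacuously satisfied, providing no direct bound. I plan to resolve this by walking up the shared root-to-$v$ prefix of $T$ to find the closest ancestor $u$ that does carry a zero-entry---both $i$ and $i'$ still passed $\tloc$-locality with $u$ because $u$ lies on both of their root-to-leaf paths---and showing that a careful variant of the triangle inequality using the BST structure still yields a $2\tloc$-locality bound; in the extreme case where no ancestor of $v$ carries any zero-entry at all, the second order $\tloc$-locality of columns $i$ and $i'$ already confines their zero-entries to narrow row-windows of height at most $2\tloc$, whose relative vertical position can then be controlled through the deterministic tests performed in the first two iterations against the boundary columns $1$ and $\numvertfirst$. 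The query count follows immediately, since each iteration uses $O(\tloc)$ queries for the second order test on column $i$ and row $i$ (Observation~\ref{obs:checks-for-locality}) and $O(\log\numvertfirst)$ queries for the $O(\log\numvertfirst)$ tree-path checks, giving a total of $O\!\left(\frac{\tloc + \log\numvertfirst}{\sigma}\right)$ over the $O(1/\sigma)$ iterations.
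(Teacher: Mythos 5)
Your overall strategy coincides with the paper's: the \enquote{yes} direction is handled identically, the paper also defines bad columns/rows (a column is bad if it fails second order $\tloc$-locality or fails $\tloc$-locality with some ancestor in $T$), bounds the acceptance probability by $e^{-3}<\frac{1}{10}$ when more than a $\sigma$-fraction of indices are bad, and proves the witness property for the good indices via exactly your lowest-common-ancestor triangle inequality (the paper phrases it with the lowest and highest zero-entries $j^{\ell},j^h$ of the LCA column $v$ and uses $j^{\ell}\leq j^h$ to absorb the slack, arriving at the same $2\tloc(2+|i-i'|)$ bound). So the core of your argument is the intended one, and the query count is argued the same way.

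The place where your write-up goes beyond the paper is also where it breaks. For the degenerate case in which the LCA column $v$ contains no zero-entry (so the $\tloc$-locality tests against $v$ are vacuous), your proposed repair does not work as stated. Walking up to the nearest ancestor $u$ of $v$ that carries a zero-entry is legitimate in the sense that both $i$ and $i'$ were indeed tested against $u$, but $u$ is no longer guaranteed to lie in the interval $[i,i']$; the triangle inequality through $u$ then gives $|j-j'|\leq 4\tloc+\tloc(|i-u|+|i'-u|)$, and $|i-u|+|i'-u|$ can be arbitrarily larger than $|i-i'|$ when $u$ lies outside $[\min(i,i'),\max(i,i')]$, so no $2\tloc$-locality bound follows. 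The fallback for the case that no ancestor carries a zero-entry is also unsupported: the deterministic tests in the first two iterations only compare columns $1$ and $\n$ (and rows $1$ and $\n$) against \emph{their own} ancestors in $T$, so they impose no constraint relating the row-windows of two arbitrary interior columns $i$ and $i'$. As it stands, this case of Claim~\ref{claim:good-columns-dont-fail-locality} is not established by your argument. (For what it is worth, the paper's own proof silently assumes the LCA column is nonempty, so you have correctly isolated a subtle point; but a correct treatment needs a different idea than the one you sketch, e.g.\ restricting attention to relevant columns or arguing that such configurations force enough bad or ignorable columns to be charged to $I$.) The remainder of your proof (the sampling bound, the assembly of the witness set, and the query count) is fine.
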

\begin{proof}
    We observe that there are no false negatives: If $\fsm$ is $\tloc$-local, then all rows and columns pass $\tloc$-locality and second order $\tloc$-locality and therefore, the algorithm returns
    \enquote{yes}. 
    So we want to analyze the probability of a false positive. 
    For this, we use the notion of good and bad columns and rows. We call a column $i$ \emph{bad} if $i$ is not second order
    $\tloc$-local or if there exists an ancestor $j$ of $i$ in $T$ as defined in Algorithm~\ref{algo:t-locality-tester} such that $i$ and $j$ fail $\tloc$-locality. 
    Otherwise, we call the column \emph{good}. Good and bad rows are defined analogously. Note that as soon as we sample a bad row or bad column, the algorithm returns \enquote{no}.
    We now show that no two good columns fail $2\tloc$-locality and then in combination with the rows derive that the algorithm will return \enquote{no} with sufficiently high probability if $\fsm$ is not
    strongly $(2\tloc,\sigma)$-local. 

    \begin{claim}\label{claim:good-columns-dont-fail-locality}
        If $i_1$ and $i_2$ are good columns, then $i_1$ and $i_2$ pass $2\tloc$-locality.
    \end{claim}
    
    \begin{proof}[Proof of Claim~\ref{claim:good-columns-dont-fail-locality}]
    Let $i$ be the lowest common ancestor of $i_1$ and $i_2$ in the binary search tree $T$. If $i=i_1$, we know that $i_1$ and $i_2$ pass $\tloc$-locality (and therefore also $2\tloc$-locality) because $i_1$ is an ancestor of $i_2$. The same holds if $i=i_2$. W.l.o.g.\ assume that $i_1<i_2$. If $i_1<i<i_2$, let $j^{\ell}$ and $j^h$ be the lowest and highest zero-entry in column $i$. Analogously let $j_1^{\ell}$, $j_1^h$, $j_2^{\ell}$ and $j_2^h$ be the lowest and highest zero-entries in columns $i_1$ and $i_2$. Since $i_1$ and $i$ pass $\tloc$-locality, we have $j_1^{\ell}\geq j^h-(2\tloc+\tloc(i-i_1)$. Since $i$ and $i_2$ pass $\tloc$-locality, we have $j_2^h\leq j^{\ell}+(2\tloc+\tloc(i_2-i)$. Combining this yields 
    \[j_2^h-j_1^{\ell}\leq j^{\ell}-j^h+4\tloc+\tloc(i_2-i_1)\leq2\tloc(2+i_2-i_1).\]
    The last inequality holds because $j^{\ell}\leq j^h$.
    Similarly, we can show that $j_1^h-j_2^l\leq2\tloc(2+i_2-i_1)$ and therefore, $i_1$ and $i_2$ pass $2\tloc$-locality by Observation~\ref{obs:checks-for-locality}. 
    \end{proof}
    
    This claim holds analogously for rows. Suppose that $\fsm$ is not strongly $(2\tloc,\sigma)$-local. We want to bound the probability that we return \enquote{yes} in this case. If one of rows $1,\n$ or columns $1,\n$ is bad, we return \enquote{no} since we test $i=1,\n$ in the algorithm. So assume they are all good. Define $B_c$ to be the set of bad columns and $B_r$ the set of bad rows. 
    Define $I\coloneqq B_r\cup B_c$. 
    If $\abs{I}\leq\sigma\n$, the set of rows and columns without $I$ would be a witness set to see that $\fsm$ is strongly $(2\tloc,\sigma)$-local. Since by assumption this is not the case, we must have $\abs{I}>\sigma\n$.
    
    The probability that the algorithm samples a bad column in one iteration, where $i$ is chosen uniformly at random is at least
    $\frac{\abs{B_c}}{\n}$. Similarly, the probability that it samples a bad row in one iteration, where $i$ is chosen uniformly at random is at least $\frac{\abs{B_r}}{\n}$. 
    Thus, the probability that it never samples a bad row or column (the probability of a false positive) is at most
    \[\left(1-\frac{\abs{B_c}}{\n}\right)^{3/\sigma}\cdot\left(1-\frac{\abs{B_r}}{\n}\right)^{3/\sigma}\leq e^{-3(B_c+B_r)/(\sigma\n)}\leq e^{-3}\leq\frac{1}{10}.\qedhere\]
\end{proof}

The Locality-tester is given a value $\tloc$ as an argument. Next, we show that it is also possible to obtain a good estimate of the locality, as well.
The straightforward approach is to start with a small estimate for $\tloc$, e.g., $\tloc=1$, use
Algorithm~\ref{algo:t-locality-tester} and every time it fails, we can double the $\tloc$ and keep repeating. Our algorithm is very similar to this approach. The difference is merely that we use different $\zeta$ values throughout the iterations and we call the locality tester multiple times for the same $\tloc$. 

\begin{theorem}\label{thm:locality-optimization-second-variant}
    Suppose $\fsm$ is $\tloc^*$-local and we are given $\zeta>0$. We can design an algorithm that returns 
    a value $\tloc\leq2\tloc^*$ such that the matrix
    is strongly $(\tloc,\zeta/\tloc^2)$-local with probability at least $\frac{8}{9}$.
    The number of queries used is $\mathcal{O}(\frac{\log\log\tloc}{\zeta}(\tloc^3+\tloc^2\log\n))$. 
\end{theorem}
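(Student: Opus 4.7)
The plan is an exponential-doubling search over the locality parameter, using Algorithm~\ref{algo:t-locality-tester} as a black box with amplified confidence. I iterate over candidates $\tloc = 2, 4, 8, \ldots$; for each $\tloc$ I invoke Algorithm~\ref{algo:t-locality-tester} on $(\fsm, \zeta/\tloc^2, \tloc/2)$ independently $k_\tloc = \Theta(\log\log\tloc)$ times, and output $\tloc$ as soon as all $k_\tloc$ invocations return \enquote{yes}.

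The bound $\tloc\leq 2\tloc^*$ on the returned value follows from Lemma~\ref{lem:locality-tester-decision}: once $\tloc/2\geq \tloc^*$, the matrix is $\tloc/2$-local and every tester call deterministically returns \enquote{yes}, so the algorithm halts no later than when $\tloc$ first reaches $2\tloc^*$ (the remaining constant from rounding to powers of $2$ can be absorbed by refining the base of the geometric sequence or by a short binary search). For the strong-locality guarantee, observe that calling the tester with parameter $\tloc/2$ and $\sigma=\zeta/\tloc^2$ has the property that whenever $\fsm$ is \emph{not} strongly $(\tloc,\zeta/\tloc^2)$-local, a single invocation errs with probability at most $1/10$ by Lemma~\ref{lem:locality-tester-decision}. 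Hence all $k_\tloc$ independent invocations jointly err with probability at most $10^{-k_\tloc}$. A union bound over candidates $\tloc=2^i$, i.e.\ $\sum_{i\geq 1} 10^{-k_{2^i}}$ with $k_{2^i}=\Theta(\log i)$, produces a convergent series that I can drive below $1/9$ by choosing the constant in $k_\tloc$ sufficiently large.

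For query complexity, each tester invocation uses $\mathcal{O}((\tloc/2 + \log n)\cdot\tloc^2/\zeta) = \mathcal{O}((\tloc^3+\tloc^2\log n)/\zeta)$ queries by Lemma~\ref{lem:locality-tester-decision}, and the cost across the doubling iterations forms a geometric series dominated by the final term, yielding the stated $\mathcal{O}(\frac{\log\log\tloc}{\zeta}(\tloc^3+\tloc^2\log\n))$ bound in terms of the returned $\tloc$. I expect the main obstacle to be calibrating $k_\tloc$: a constant number of repetitions is insufficient because the union bound over $\Theta(\log \tloc^*)$ iterations would blow up, while a uniform $\Theta(\log\log\tloc^*)$ would require knowing $\tloc^*$ in advance. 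Scaling the amplification with the current candidate $\tloc$ so that per-iteration failure probabilities form a convergent series circumvents this obstacle without any prior knowledge of $\tloc^*$, at the cost of the extra $\log\log\tloc$ factor in the final query bound.
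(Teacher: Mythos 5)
Your proposal is correct and follows essentially the same route as the paper: the paper's algorithm runs Locality-tester($\fsm,\frac{\zeta}{2^{2(i+1)}},2^i$) for $2(1+\log i)$ repetitions in round $i$ and returns $\tloc=2^{i+1}$, which is exactly your doubling search with candidate-dependent amplification $k_\tloc=\Theta(\log\log\tloc)$ and the same union bound $\sum_i 10^{-\Theta(\log i)}<\frac19$ over rounds. Your observation that the amplification must grow with the round index (rather than being uniform) to make the union bound converge without knowing $\tloc^*$ is precisely the paper's mechanism, and your query accounting matches the paper's geometric-sum argument.
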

\begin{proof}
    The algorithm operates in rounds. 
    In the $i$-th round (starting from $i=1$) 
    we run Locality-tester($\fsm,\frac{\zeta}{2^{2(i+1)}},2^i)$ for $2(1+\log i)$ times. If the locality-tester returns \enquote{no} in any of
    these runs, we go to round $i+1$, otherwise, we return $\tloc=2^{i+1}$.

    First, we examine the number of queries. Let $i^*=\lceil\log\tloc^*\rceil$. Then, the algorithm terminates before or in round $i^*$ since no two columns or rows fail $2^i$-locality for any $2^i\geq\tloc^*$ and all rows and columns pass second order $\tloc^*$-locality. So we have at most $i^*$ rounds. 
    In round $i$, we perform $\mathcal{O}((2+2\log i)\cdot\frac{2^{2(i+1)}}{\zeta}\cdot(2^i+\log\n))$ queries by Lemma~\ref{lem:locality-tester-decision}. The total number of all queries from round $1$ to the last round (which is at most $i^*$) is $\mathcal{O}(\frac{\log\log\tloc}{\zeta}(\tloc^3+\tloc^2\log\n))$.
    
    By \reflem{locality-tester-decision}, 
    \enquote{no} answers are always correct.
    However, with probability at most $\frac{1}{10}$, the algorithm can return \enquote{yes} even though $\fsm$ is not
    $(2 \cdot 2^i,\frac{\zeta}{2^{2(i+1)}})$-local.  
    As all the executions of the algorithm are independent, 
    the probability that we get \enquote{yes} in all $2(1 + \log i)$ calls of the
    locality-tester in round $i$ is at most $10^{-(2(1+ \log i))} < \frac{1}{100 i^2}$. Using a union bound, the probability that we return a value $\tloc$ while $\fsm$ is not
    $(\tloc,\frac{\zeta}{\tloc^2})$-local is at most $\sum_{i=1}^{\infty} \frac{1}{100 i^2}<\frac{1}{9}$.
\end{proof}

\subsection{The combined algorithm}

We now present our algorithm for testing the Fréchet distance without knowing the locality parameter $\tloc$ in advance. Algorithm~\ref{algo:combined-frechet-tester}, uses the results from Section~\ref{sec:locality} to first estimate the locality of the matrix $\fsm$ and then uses the Fréchet-tester from Section~\ref{sec:frechet-with-known-t} to decide if the Fréchet distance of the two curves is smaller or equal to $\delta$.

\begin{algorithm}
    \caption{Fréchet-tester2($\fsm,\eps$)}
    \label{algo:combined-frechet-tester}
    \begin{enumerate}
        \item $\tloc\gets$ Output of the Algorithm in Theorem~\ref{thm:locality-optimization-second-variant} for $\fsm$ and $\zeta=\frac{\eps}{1600}$.\; 
        \item \textbf{return} Fréchet-tester1($\fsm,\tloc,\eps/3,\frac{4800\tloc^2}{\eps},2$).\;
    \end{enumerate}
\end{algorithm}

The following lemma is analogous to the main lemma (Lemma~\ref{lem:main-lemma}) from Section~\ref{sec:frechet-with-known-t}. For a proof of this lemma we refer to Section~\ref{sec:proofs-section-unknown-t}.

\begin{lemma}\label{lem:main-lemma-analogue}
    Let $\eps>0$ and let $\fsm$ be strongly $(\tloc,\frac{\eps}{1600\tloc^2})$-local and $\fsm[1,1]=0=\fsm[\n,\n]$. Suppose the total number of barrier-columns and barrier-rows is at most $\frac{\eps\n}{1600\tloc^2}$. Let $\optcoupling$ be a \restricteddiagonals\ path with lowest cost through $\fsm$ and suppose that $c(\optcoupling)>\eps\n$.
    Then, with probability at least $\frac{9}{10}$ at least one sampled interval in the set $\J$ in the call of Algorithm~\ref{algo:frechet-tester-exact} during Algorithm~\ref{algo:combined-frechet-tester} corresponds to a non-permeable block of columns or rows with probability at least $\frac{9}{10}$.
\end{lemma}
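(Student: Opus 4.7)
The plan is to mimic the proof of the Main Lemma (\reflem{main-lemma}) while carefully accounting for the $|I| = \frac{\eps\n}{800\tloc^2}$ ignored columns and rows in the witness partition from \refdef{stronglocal}. I would begin by partitioning $\optcoupling$ greedily into groups exactly as in the Main Lemma, each starting and ending at a zero-entry and containing more than $4\tloc$ one-entries. A group is \emph{polluted} if its range $[i,i']\times[j,j']$ contains an ignored column from $[i,i']$ or an ignored row from $[j,j']$, and \emph{unpolluted} otherwise. Since every column and every row lies in at most two groups, the number of polluted groups is at most $2|I| = \frac{\eps\n}{400\tloc^2}$.

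The key observation is that the Barrier Lemma (\reflem{barrier-lemma}) and the Impermeability Lemma (\reflem{impermeability-lemma}) apply \emph{verbatim} to unpolluted groups, because all columns and rows in the range of such a group lie in $W$ and therefore pairwise pass $\tloc$-locality and individually pass second order $\tloc$-locality. Applying the Barrier Lemma to unpolluted hollow groups together with the assumed bound $\frac{\eps\n}{800\tloc^2}$ on the number of barrier-columns and barrier-rows yields, as in Claim~1 of the Main Lemma, a cap of $\frac{\eps\n}{200\tloc}$ on the one-entries of $\optcoupling$ inside unpolluted hollow groups.

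The main obstacle is to bound the total number of one-entries contributed by polluted groups, since naively a single polluted hollow group could absorb all of $c(\optcoupling)$. I would extend the Barrier-Lemma charging to polluted groups as follows: for the boundary zero-entries $(i_1,j_1),(i_2,j_2)$ introduced via the box of ones (\reflem{box-of-zeros}), either both $i_1,i_2\in W$, in which case the original $\tloc$-locality argument gives the usual $\Omega(x/\tloc)$ barrier contribution, or at least one of them lies in $I$, in which case I charge the group to that ignored column/row and use a further greedy subdivision inside the group so that only $\mathcal{O}(\tloc)$ one-entries are absorbed per such charge. Combined with the unpolluted hollow contribution, this leaves $\Omega(\eps\n)$ one-entries in unpolluted dense groups---the slack introduced by calling \refalg{frechet-tester-exact} with $\eps/3$ in place of $\eps$ is what makes the accounting close---and hence $\Omega(\eps\n/\tloc)$ unpolluted small dense groups after the standard size-halving step.

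Finally, as in the Main Lemma, each unpolluted small dense group yields, via the Impermeability Lemma, either a non-permeable column block or a non-permeable row block, which I collect in the set $X$. Each column appears in at most two such groups and each row in at most two, so if I treat column intervals and row intervals separately, each collection satisfies the $c=2$ hypothesis of \reflem{interval-sampling}. A pigeonhole argument gives one of the two collections at least $\Omega(\eps\n/\tloc)$ bad intervals of length at most $\mathcal{O}(\tloc/\eps)$, matching the parameters $K$ and $\ell$ that \refalg{frechet-tester-exact} uses when invoked on $\eps/3$; applying \reflem{interval-sampling} to that collection with $c=2$ then yields the claimed success probability of at least $\frac{9}{10}$. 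The hardest technical piece is the polluted-group extension of the Barrier Lemma, which is precisely why \refalg{combined-frechet-tester} invokes \refalg{frechet-tester-exact} with $k$ inflated to $2400\tloc^2/\eps$ (enlarging the barrier budget) and $c=2$ (so that the column and row interval collections can be handled separately).
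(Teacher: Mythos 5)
Your treatment of unpolluted groups is sound and matches the spirit of the paper, but the step you yourself identify as "the main obstacle" --- bounding the one-entries absorbed by polluted groups --- is exactly where the proposal breaks down, and your proposed fix does not work. Consider a polluted hollow group whose final gap is a maximal stretch of one-entries between two zero-entries $(i'',j'')$ and $(i',j')$, where one or both of the bounding columns (or the relevant extremal zero-entries $(i_1,j_1),(i_2,j_2)$ from the box-of-ones argument) lie in $I$. There is no "further greedy subdivision" available inside such a stretch: it contains no zero-entries at which to subdivide, and the $\tloc$-locality inequalities that would bound its extent in terms of barrier-columns are precisely what is unavailable when the bounding columns are ignored. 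A single ignored column at such a boundary can therefore be "blamed" for a gap absorbing $\Theta(\n)$ one-entries, so charging each polluted group $\mathcal{O}(\tloc)$ one-entries per ignored column is unsubstantiated, and the accounting does not close. Relatedly, your sketch never explains where the $\tloc^2$ in the hypothesis $\frac{\eps\n}{800\tloc^2}$ is actually consumed; in the correct argument this factor is forced by the weaker barrier bound available in the strongly-local setting.

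The paper's proof is structured quite differently to get around exactly this. It first splits $c(\optcoupling)$ into row-entries and column-entries and works with whichever majority holds; groups are then delimited not by arbitrary zero-entries but by zero-entries lying in \emph{unignored} columns (using \reflem{groups-end-in-different-rows-and-cols}), so that the endpoints of each critical subpath are guaranteed to be in $W$ even though the interior may contain ignored columns/rows and zero-entries. The replacement for the Barrier Lemma, \reflem{barrier-lemma-analogue}, lower-bounds the number of \emph{irrelevant} (ignored or barrier) columns and rows by $\frac{j'-j}{25\tloc^2}$, and its proof requires the layers machinery (\reflem{number-layers}, \reflem{bound-width-by-irrelevant-columns}) together with second-order locality (\reflem{ignored-row-with-many-unblocked}) to control how many zero-entries in ignored columns the subpath can thread through. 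This is the technical core of the lemma and is entirely absent from your proposal. Your endgame (pigeonholing into column versus row intervals and invoking \reflem{interval-sampling} with $c=2$) is a reasonable variant of the paper's up-front split, but without a correct polluted-group bound the count of usable small dense groups, and hence the parameters $K$ and $\ell$ fed to \reflem{interval-sampling}, cannot be established.
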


\begin{theorem}\label{thm:frechet-tester-unknown-t}
    Let $\delta>0$ and $0<\eps<2$ be given. Let $\firstcurve$ and $\secondcurve$ be curves with $\n$ vertices such that their free space matrix is \mbox{$\tloc^*$-local}. Then, Algorithm~\ref{algo:combined-frechet-tester} is a Fréchet-tester that uses 
    $\mathcal{O}(((\tloc^*)^3+(\tloc^*)^2\log\n)\frac{\log\log\tloc^*}{\eps})$ queries.
\end{theorem}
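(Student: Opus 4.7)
The plan is to combine the two building blocks via independence of their randomness. First, I would invoke the locality-estimator from Theorem~\ref{thm:locality-optimization-second-variant} with $\zeta = \eps/800$. By that theorem, with probability at least $8/9$, the estimator returns a value $\tloc \leq 2\tloc^*$ such that $\fsm$ is strongly $(\tloc, \eps/(800\tloc^2))$-local, using $\mathcal{O}(\frac{\log\log \tloc^*}{\eps}((\tloc^*)^3 + (\tloc^*)^2 \log n))$ queries. Feeding this $\tloc$ into Algorithm~\ref{algo:frechet-tester-exact} with the adjusted parameters $\eps/3$, $k = 2400\tloc^2/\eps$, $c = 2$ (as in Algorithm~\ref{algo:combined-frechet-tester}) then handles the actual Fréchet test, and the target success probability $\succprob$ should drop out of multiplying the two phase-wise success probabilities.

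The correctness argument splits into the two standard parts. One-sided correctness (no false negatives) is essentially for free: Fréchet-tester1 only returns \enquote{no} when it has found either a barrier-row/column or a non-permeable block, each of which is a true obstruction to the existence of a cost-$0$ monotone Manhattan path in $\fsm$ and hence a proof that $\df{P}{Q} > \delta$; no locality hypothesis is needed in this direction. For the false-positive direction, I would assume $P$ and $Q$ are $(\eps,\delta)$-far, condition on the good event $\mathcal{E}$ from the locality-estimator (probability $\geq 8/9$), and on that event apply the analogue of the Main Lemma, namely Lemma~\ref{lem:main-lemma-analogue}. If there are more than $\eps n/(800\tloc^2) = 3n/k$ barriers in total, the argument of Lemma~\ref{lem:many-empty-rows-and-cols} (with the enlarged $k$) shows that Line~\ref{line:found-zero-col-row} catches one with probability at least $9/10$; otherwise, the hypotheses of Lemma~\ref{lem:main-lemma-analogue} are satisfied and the permeability phase rejects with probability at least $9/10$. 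Since the randomness used by Fréchet-tester1 is independent of the locality-estimator's randomness, multiplying the two success probabilities yields $\frac{8}{9}\cdot\frac{9}{10}=\succprob$, as required.

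For the query complexity, the locality-estimator contributes $\mathcal{O}(\frac{\log\log \tloc^*}{\eps}((\tloc^*)^3 + (\tloc^*)^2 \log n))$ by Theorem~\ref{thm:locality-optimization-second-variant}. The subsequent call to Fréchet-tester1 costs $\mathcal{O}(k) = \mathcal{O}((\tloc^*)^2/\eps)$ queries in Line~\ref{line:found-zero-col-row}, plus $\mathcal{O}((\tloc^*/\eps)\log(\tloc^*/\eps))$ queries from the interval sampling (essentially as in Lemma~\ref{lem:number-queries} with $\tloc \leq 2\tloc^*$), both of which are dominated by the locality term. The step I expect to be the main obstacle is Lemma~\ref{lem:main-lemma-analogue} itself: whereas Lemma~\ref{lem:main-lemma} used $\tloc$-locality of \emph{every} pair of rows and columns, the strongly $(\tloc,\zeta)$-local setting permits up to $\zeta n$ rows and columns to be arbitrarily misbehaved, so the greedy grouping argument together with the Barrier and Impermeability Lemmas must be reworked to absorb or skip over these ignored indices. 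This is exactly why $\zeta$ must scale as $\eps/\tloc^2$ (so that the ignored mass stays of lower order than the barrier budget), why the invocation is made with the slack parameter $\eps/3$ instead of $\eps$, and why the sampling budget inflates to $k = 2400\tloc^2/\eps$; these choices propagate into the stated query bound.
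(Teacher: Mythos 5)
Your proposal is correct and follows essentially the same route as the paper's proof: condition on the locality-estimator succeeding (probability $\geq 8/9$, with $\zeta=\eps/800$ giving strong $(\tloc,\eps/(800\tloc^2))$-locality), split the far case into the many-barriers branch (Lemma~\ref{lem:many-empty-rows-and-cols} with $k=2400\tloc^2/\eps$) and the few-barriers branch (Lemma~\ref{lem:main-lemma-analogue}), multiply $\frac{8}{9}\cdot\frac{9}{10}=\succprob$, and observe that the query count is dominated by the locality-estimation phase. You also correctly identify that all the genuinely new work lives in Lemma~\ref{lem:main-lemma-analogue}, which is exactly how the paper structures the argument.
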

\begin{proof}
   The Locality-tester uses $\mathcal{O}(\frac{\log\log\tloc^*}{\eps}((\tloc^*)^3+(\tloc^*)^2\log\n))$ queries. Since $\tloc\leq2\tloc^*$, we need $\frac{(\tloc^*)^2}{\eps}$ queries to test for the barrier-columns and barrier-rows and then $\mathcal{O}\left(\frac{\tloc^*}{\eps}\log\left(\frac{\tloc^*}{\eps}\right)\right)$ queries for the permeability queries. Since $\frac{\tloc^*}{\eps}\in\mathcal{O}(\n^2)$, the running time is dominated by the Locality-tester.

    If the Fréchet distance of $\firstcurve$ and $\secondcurve$ is at most $\delta$, Algorithm~\ref{algo:combined-frechet-tester} will always return \enquote{yes}. So we have to bound the probability of false positives. Assume that $\firstcurve$ and $\secondcurve$ are $(\eps,\delta)$-far and therefore, a minimum cost \restricteddiagonals\ path $\optcoupling$ through $\fsm$ has cost higher than $\eps\n$.
    Suppose that $\fsm$ is strongly $(\tloc,\frac{\eps}{1600\tloc^2})$-local for the output $\tloc$ of the algorithm from Theorem~\ref{thm:locality-optimization-second-variant}. By the same theorem, this happens with probability at least $\frac{8}{9}$. 
    If the matrix contains more than $\frac{\eps\n}{1600\tloc^2}$ barrier-columns and barrier-rows, the subroutine that calls Algorithm~\ref{algo:frechet-tester-exact} returns \enquote{no} with probability at least $\frac{9}{10}$ by Lemma~\ref{lem:many-empty-rows-and-cols}. 
    If the matrix contains at most $\frac{\eps\n}{1600\tloc^2}$ barrier-columns and barrier-rows and the algorithm proceeds until the interval sampling, we are in the setting of Lemma~\ref{lem:main-lemma-analogue} and we return \enquote{no} with probability at least $\frac{9}{10}$.
    Therefore, the probability that the algorithm returns \enquote{no} is at least $\frac{8}{9}\cdot\frac{9}{10}=\frac{4}{5}$. So the probability of a false positive is at most $\frac{1}{5}$.
\end{proof}

\subsection{Proof of Lemma~\ref{lem:main-lemma-analogue}}\label{sec:proofs-section-unknown-t}

In our proof of Lemma~\ref{lem:main-lemma-analogue} we will group the path $\optcoupling$ into different groups just like we did in the proof of  Lemma~\ref{lem:main-lemma}. The next lemma makes sure that the column (and row) each group starts in is different to the column (and row) it ends in.
\begin{lemma}\label{lem:groups-end-in-different-rows-and-cols}
    Let $\fsm$ be strongly $(\tloc,\zeta)$-local with witness set $W$ for some $\zeta$. Let $(i,j)$ and $(i',j')$ be two zero-entries in $\fsm$ with $i,i'\in W$ (resp.\ $j,j'\in W$) and let $\pi$ be a coupling path from $(i,j)$ to $(i',j')$ that visits more than $2\tloc+1$ entries. Then we have $i\neq i'$ and $j\neq j'$.
\end{lemma}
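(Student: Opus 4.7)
The plan is to derive a contradiction from the monotone step count, using exactly the two conditions in the definition of second-order $\tloc$-locality (Definition~\ref{def:locfailure}) applied to the columns (resp.\ rows) in the witness set $W$.

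First, I would turn the hypothesis on the length of $\pi$ into a useful lower bound on index differences. Since $\pi$ is a monotone Manhattan path from $(i,j)$ to $(i',j')$, it visits exactly $(i'-i)+(j'-j)+1$ entries, so the hypothesis that it visits more than $2\tloc+1$ entries gives
\[
(i'-i)+(j'-j) > 2\tloc.
\]
Consequently, if \emph{any one} of the equalities $i=i'$ or $j=j'$ were to hold, the other coordinate difference would strictly exceed $2\tloc$.

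Next, I would handle the case $i,i'\in W$, splitting into the two sub-cases and deriving contradictions separately.

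\textbf{Suppose $i=i'$.} Then $|j'-j|>2\tloc$, and $(i,j),(i,j')$ are two zero-entries in the same column~$i\in W$. By \refdef{stronglocal}, column $i$ passes second-order $\tloc$-locality, which by \refdef{locfailure} rules out the existence of two zero-entries $(i,j),(i,j')$ with $|j'-j|>2\tloc$. Contradiction.

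\textbf{Suppose $j=j'$.} Then $|i'-i|>2\tloc$. Now $(i,j)$ and $(i',j)$ are two zero-entries in row $j$, and column $i\in W$ itself contains the zero-entry $(i,j)$. This is precisely the second forbidden configuration in \refdef{locfailure}: a column $i$ with a zero-entry $(i,j)$ together with two zero-entries $(i_1,j),(i_2,j)$ in row $j$ satisfying $|i_1-i_2|>2\tloc$ (take $i_1=i$, $i_2=i'$). This contradicts the fact that $i\in W$ passes second-order $\tloc$-locality.

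Finally, the case $j,j'\in W$ is fully symmetric: transposing the roles of rows and columns, the equality $j=j'$ is ruled out by the first clause of second-order $\tloc$-locality applied to row $j\in W$, and the equality $i=i'$ is ruled out by the second clause applied to row $j$ (which contains the zero-entry $(i,j)$ together with the pair $(i,j),(i,j')$ in column $i$ at distance $>2\tloc$). The main observation driving the whole argument — and essentially the only non-routine step — is recognizing that the asymmetric second clause of \refdef{locfailure} is exactly what is needed to rule out the \enquote{transverse} equality (e.g.\ $j=j'$ when the witness hypothesis is on columns), since the first clause alone, applied only to columns in $W$, would leave that case untouched.
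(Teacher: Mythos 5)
Your proof is correct and follows essentially the same route as the paper's: converting the entry count of the monotone path into $(i'-i)+(j'-j)>2\tloc$, then ruling out $i=i'$ via the first clause of second-order $\tloc$-locality of column $i\in W$ and ruling out $j=j'$ via the second clause (the row-$j$ condition triggered by the zero-entry $(i,j)$ in column $i$). The symmetric treatment of the $j,j'\in W$ case also matches the paper.
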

\begin{proof}
    We prove the statement for $i,i'\in W$. The proof for $j,j'\in W$ is analogous.
    Since $i\in W$, column $i$ passes second order $\tloc$-locality. On the one hand, this means that any two zero-entries in column $i$ have distance at most $2\tloc$. Since $\pi$ starts and ends in a zero-entry and it visits more than $2\tloc+1$ entries in total, the first and last zero-entry cannot both be in column $i$ and we have $i\neq i'$.
    On the other hand, column $i$ passing second order $\tloc$-locality implies that all zero-entries in row $j$ have distance at most $2\tloc$. So we can apply the same logic to $j$ and $j'$.
\end{proof}

Our proof of Lemma~\ref{lem:main-lemma-analogue} uses the Impermeability Lemma (Lemma~\ref{lem:impermeability-lemma}) from Section~\ref{sec:frechet-with-known-t} and Lemma~\ref{lem:barrier-lemma-analogue}, which is analogous to the Barrier Lemma (Lemma~\ref{lem:barrier-lemma}) from Section~\ref{sec:frechet-with-known-t}. Recall that the Barrier Lemma gives a lower bound on the number of barrier-columns and barrier-rows in large subpaths of one-entries. The main work leading up to our proof of Lemma~\ref{lem:main-lemma-analogue} is to prove this Barrier Lemma analogue.
We start by introducing the definition of layers that helps us to group the zero-entries in the free space matrix. The layers correspond to the order in which the zero-entries can appear on a coupling path.
Let $(i,j)\neq(i',j')$ be zero-entries in the free space matrix. We say that $(i,j)$ \emph{dominates} $(i',j')$ if $i\geq j$ and $i'\geq j'$.
\begin{definition}[Layers]
    Let $R=[i,i']\times[j,j']$ be a rectangle in the free space matrix. We define the notion of layers in this rectangle. The first layer consists of all zero-entries in $R$ that do not dominate any other zero-entry in $R$. The layers are defined recursively such that all vertices on layer $i$ dominate at least one vertex on layer $i-1$ and dominate no vertex on layers $i$ and higher. This definition is visualized in Figure~\ref{fig:layers}.
\end{definition}

Note that every zero-entry in $R$ belongs to exactly one layer. So the layers group all zero-entries in $R$ into disjoint sets.

\begin{lemma}\label{lem:number-layers}
    Let $(i,j)$ and $(i',j')$ be zero-entries in $\fsm$ and let $R=[i,i']\times[j,j']$ be the rectangle spanned by these zero-entries. Let $L$ be the number of layers in $R$ and let $z$ be the number of zero-entries on a fixed minimum cost \restricteddiagonals\ path from $(i,j)$ to $(i',j')$. Then we have $z\leq L\leq 2z$.
\end{lemma}
\begin{proof}
    Let $\pi$ be a \restricteddiagonals\ path from $(i,j)$ to $(i',j')$ with the lowest cost
    and let $z$ be the number of zero-entries on it. 
    We observe that each zero-entry on $\pi$ dominates its predecessor. So we have $z\leq L$. 
    
    We construct a \restricteddiagonals\ path
    $\pi'$ from $(i,j)$ to $(i',j')$ that passes all layers and use this path to show that $L\leq2z$. 
    We observe that $(i,j)$ is the only zero-entry in layer $1$ because it is dominated by all other zero-entries in $R$ and similarly $(i',j')$ is the only zero-entry in layer $L$ because it dominates all other zero-entries in $R$.
    By the definition of the layers, $(i',j')$ dominates some zero-entry $(i_{L-1},j_{L-1})$ in layer $L-1$. This
    entry dominates some zero-entry on layer $L-2$ and so on. We have to end with a zero-entry $(i_2,j_2)$ in layer
    $2$ that dominates $(i,j)$. Then, we set $\pi'$ to be any \restricteddiagonals\ path from $(i,j)$ to $(i',j')$ that visits $(i_2,j_2),\ldots,(i_{L-1},j_{L-1})$ in that order. (See Figure~\ref{fig:layers}).
    Since any \restricteddiagonals\ path from $(i,j)$ to $(i',j')$ visits at most $i'-i+j'-j+1$ entries in total we have $c(\pi')\leq i'-i+j'-j+1-L$. Furthermore, we can bound $c(\pi)$ as follows:
    Let $(i,j)=(\ell_1,k_1),\ldots,(\ell_z,k_z)=(i',j')$ be the zero-entries on $\pi$. If $(\ell_{a+1},k_{a+1})\in\{(\ell_a+1,k_a),(\ell_a,k_a+1),(\ell_a+1,k_a+1)\}$, there is no one-entry in between, so the number of one-entries is at least $\ell_{a+1}-\ell_a+k_{a+1}-k_a-2$. Otherwise, the number of one-entries between $(\ell_a,k_a)$ and $(\ell_{a+1},k_{a+1})$ is exactly $\ell_{a+1}-\ell_a+k_{a+1}-k_a-1$ because $\pi$ is a \restricteddiagonals\ path. With this, we can derive that the total cost of $\pi$ is at least 
    \[\sum_{a=1}^{z-1}\ell_{a+1}-\ell_a+k_{a+1}-k_a-2=-2(z-1)+j'-j+i'-i.\]
    So in total we get
    \[-2(z-1)+j'-j+i'-i\leq c(\pi)\leq c(\pi')\leq i'-i+j'-j+1-L,\]
    which is equivalent to
    $L\leq2z-1<2z$.
\end{proof}

\begin{figure}
    \centering
    \includegraphics[width=0.25\linewidth]{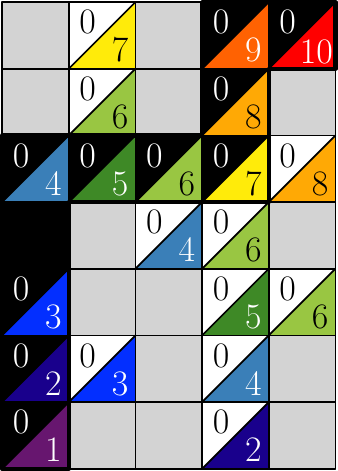}
    \hspace{1em}
    \raisebox{-4.25em}{\includegraphics[width=0.6\linewidth]{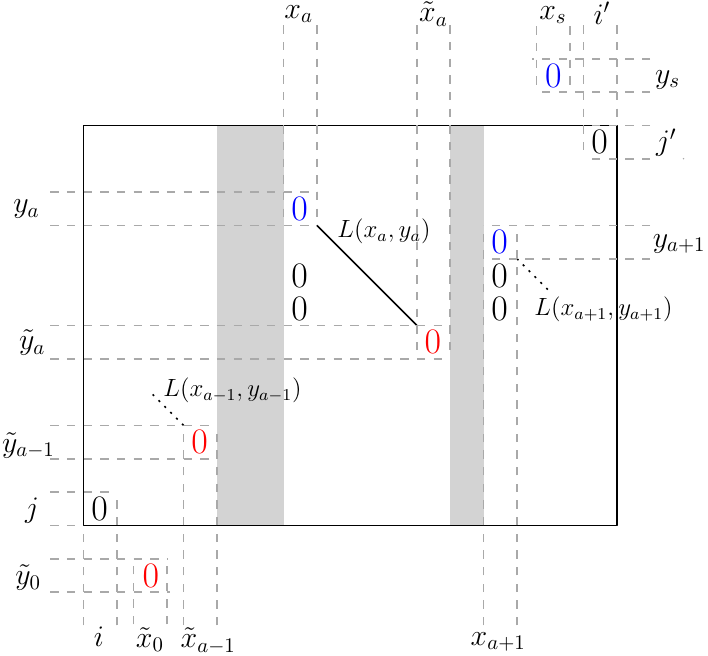}}
    \caption{Left: All zero-entries are shown in the upper corner of their cell. The respective layers are written in the lower corner of the cell and they are distinguished by color. The (black) shortest path contains exactly as many zero-entries as there are layers. Right: Visualization how $(x_a,y_a)$ and $(\Tilde{x}_a,\Tilde{y}_a)$ are chosen in Lemma~\ref{lem:bound-width-by-irrelevant-columns}. The depicted zero-entries that are in the sequence $(x_1,y_1),\ldots,(x_s,y_s)$ are blue and the those in the sequence $(\Tilde{x}_0,\Tilde{y}_0),\ldots,(\Tilde{x}_{s-1},\Tilde{y}_{s-1})$ are red. The gray columns are irrelevant and the layers are depicted as diagonal lines.}
    \label{fig:layers}
\end{figure}

We use the notion of layers in the proof of the next lemma. This lemma bounds the height (resp. width) of a subpath of the minimum cost \restricteddiagonals\ path that starts and ends in an unignored column (resp. row) and does not visit any zero-entries in unignored columns (resp. rows) in between. These subpaths will later play a role 
that is analogous to the subpaths of one-entries in the proof of the Lemma~\ref{lem:main-lemma}.

Before we get to the lemma, we introduce one more definition that is used throughout the whole analysis of the algorithm
\begin{definition}
    Let $\fsm$ be strongly $(\tloc,\zeta)$-local with witness set $W$ for some $\zeta$ and let $I$ be the set of ignored columns. We call a column or a row \emph{irrelevant} if it is either in $I$ or it is a barrier-column or barrier-row. Otherwise, we call it \emph{relevant}. 
\end{definition}

In contrast to the Barrier Lemma from Section~\ref{sec:frechet-with-known-t}, 
Lemma~\ref{lem:barrier-lemma-analogue}  will not give a lower bound just on the number of barrier-columns and barrier-rows, but on the number of irrelevant columns and rows. A key step toward our proof of the Barrier Lemma analogue is the following rather technical lemma that uses the idea of layers to bound the height (resp. width) of a rectangle spanned by two zero-entries with respect to the number of irrelevant columns (resp. rows) and the number of layers.

\begin{lemma}\label{lem:bound-width-by-irrelevant-columns}
    Suppose $\fsm$ is strongly $(\tloc,\zeta)$-local with witness set $W$ for some $\zeta$. Let $(i,j)$ and $(i',j')$ be zero-entries and denote by $m_c$ the number of irrelevant columns in $[i,i']$ and by $m_r$ the number of irrelevant rows in
    $[j,j']$. Denote by $L$ the number of layers in $R=[i,i']\times[j,j']$.  
    Then, we have the following statements:
    \begin{enumerate}
        \item If $i,i'\in W$, we have $\frac{j'-j}{\tloc}\leq3(L-1)+m_c.$\label{case:columns-relevant}
        \item If $j,j'\in W$, we have $\frac{i'-i}{\tloc}\leq3(L-1)+m_r.$\label{case:rows-relevant}
    \end{enumerate}
\end{lemma}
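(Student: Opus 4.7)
By symmetry (interchanging the roles of rows and columns), I focus on Case~\ref{case:columns-relevant}; Case~\ref{case:rows-relevant} follows via an identical argument with rows and columns swapped. The strategy is to construct a monotone dominance chain of zero-entries in $R$ from $(i,j)$ up to $(i',j')$, and then telescope the height $j'-j$ along the chain, bounding each telescoped term via $\tloc$-locality in the column direction.

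The chain is built as two interleaved sequences, as visualized in Figure~\ref{fig:layers}: a blue sequence of \emph{checkpoints} $(x_1,y_1),\dots,(x_s,y_s)$, each lying in a relevant column (so $x_a \in W$ and column $x_a$ is not a barrier), and a red sequence of \emph{transitions} $(\tilde x_0,\tilde y_0),\dots,(\tilde x_{s-1},\tilde y_{s-1})$, which may lie in ignored columns. Setting $(\tilde x_0,\tilde y_0):=(i,j)$ and $(x_s,y_s):=(i',j')$, they form the chain
\[
(\tilde x_0,\tilde y_0)\prec(x_1,y_1)\prec(\tilde x_1,\tilde y_1)\prec\cdots\prec(x_s,y_s),
\]
where $\prec$ denotes the dominance relation. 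The construction is greedy: given the current checkpoint $(x_a,y_a)$, the next transition $(\tilde x_a,\tilde y_a)$ is picked as a zero-entry dominating $(x_a,y_a)$ on the lowest admissible layer, and the next checkpoint $(x_{a+1},y_{a+1})$ is then picked as a zero-entry in a relevant column dominating $(\tilde x_a,\tilde y_a)$ on the lowest admissible layer. Since each link of the chain strictly advances the layer (dominated entries lie on strictly smaller layers), the chain has at most $z$ entries in total, so the number of steps is at most $z-1$.

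Telescoping along the chain gives
\[
j' - j \;=\; \sum_{a=1}^{s}(y_a-\tilde y_{a-1}) \;+\; \sum_{a=1}^{s-1}(\tilde y_a - y_a),
\]
and each vertical increment is bounded either by standard $\tloc$-locality between two relevant columns (yielding at most $\tloc(2+\Delta x)$, where $\Delta x$ is the horizontal span of the step) or by second-order $\tloc$-locality from \refdef{locfailure} when the step stays within a single $W$-column (yielding at most $2\tloc$). Summing the additive $+2\tloc$ constants over the at most $z-1$ chain steps contributes $2\tloc(z-1)$. Summing the horizontal increments $\Delta x$, by the charging argument each ignored column in $[i,i']$ is absorbed into at most one transition step (contributing at most $m_c$ in total), and each checkpoint-to-checkpoint jump across relevant columns contributes at most one unit to the horizontal sum (contributing at most $z-1$). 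Combining,
\[
j'-j \;\leq\; \tloc\bigl(2(z-1) + (z-1) + m_c\bigr) \;=\; \tloc\bigl(3(z-1) + m_c\bigr),
\]
as desired.

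The principal technical obstacle is rigorously carrying out the greedy construction and the charging argument. Specifically, one must show that the chain can always be extended (a checkpoint and a transition with the required layering and column properties exist at each step), and that the horizontal telescoping is tight: every irrelevant column in $[i,i']$ is absorbed into at most one chain step, and revisits of the same relevant column by the chain do not inflate the sum. The second point is where second-order $\tloc$-locality of $W$-columns is crucial: it confines all zero-entries within a single relevant column to a vertical strip of width $2\tloc$, so any return to an already-visited $W$-column costs only the additive $+2\tloc$ already accounted for. A careful case analysis on whether consecutive layers of $R$ admit zero-entries in relevant columns will likely be needed to verify the charging, and this case analysis is the most delicate part of the proof.
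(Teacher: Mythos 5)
Your high-level template (an interleaved chain of zero-entries anchored in relevant columns, telescoping $j'-j$ along it, and charging each increment to $\tloc$-locality) is the same as the paper's, but the specific construction you propose breaks at exactly the two points you flag as ``delicate,'' and the fix is not a more careful case analysis but a different choice of chain. The paper's chain is \emph{not} a dominance chain. The transition entry $(\tilde x_a,\tilde y_a)$ is chosen on the \emph{same layer} as the checkpoint $(x_a,y_a)$, namely in the rightmost \emph{relevant} column containing a zero-entry on that layer. Since two distinct entries on the same layer cannot dominate each other and $\tilde x_a\geq x_a$, this forces $\tilde y_a\leq y_a$, so the sum $\sum_a(\tilde y_a-y_a)$ in your telescoping identity is nonpositive and is simply discarded; only the increments $y_a-\tilde y_{a-1}$ between two $W$-columns remain, and each of those is bounded by plain $\tloc$-locality. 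Moreover, the next checkpoint $x_a$ is the \emph{first} relevant column after $\tilde x_{a-1}$, so every column strictly between consecutive chain columns is irrelevant; this is what makes $\sum_a(x_a-\tilde x_{a-1})\leq m_c+s$ with $s\leq z-1$.

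In your version both of these mechanisms fail. First, because you require each link to be a strict dominance step, the transition increments $\tilde y_a-y_a$ are nonnegative and must be bounded, but you place the transitions in possibly ignored columns: there is no locality guarantee of any kind between a column of $W$ and an ignored column, and second-order $\tloc$-locality is of no use here because the step does not stay within a single column. A single ignored column with a zero-entry far above $y_a$ makes this increment, and hence your bound, arbitrarily bad. Second, your horizontal charging is unsupported: choosing the next checkpoint on the ``lowest admissible layer'' does not prevent it from lying many \emph{relevant} columns to the right, yet $\sum_a \Delta x$ telescopes to $i'-i$, which you cannot bound by $m_c+(z-1)$ unless all skipped columns are irrelevant. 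Both problems disappear once the transitions are restricted to relevant columns on the current layer (rightmost) and the checkpoints to the first relevant column thereafter (with $y_a$ the highest zero-entry in that column), which is the paper's construction.
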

\begin{proof}
    We  prove the statement only for Case~\ref{case:columns-relevant}. Case~\ref{case:rows-relevant} can be proven analogously.
    For a zero-entry $(x,y)$, we denote by $L(x,y)$ the layer that the entry is on.

    Let $\Tilde{x}_0\leq i'$ be maximal such that column $\Tilde{x}_0$ is relevant and there exists a zero-entry $(\Tilde{x}_0,\Tilde{y}_0)$ with $\Tilde{y}_0\leq j$. Note that $(i,j)$ is a candidate for $(\Tilde{x}_0,\Tilde{y}_0)$. Define $(x_1,y_1),\ldots,(x_s,y_s)$ and $(\Tilde{x}_1,\Tilde{y}_1),\ldots,(\Tilde{x}_{s-1},\Tilde{y}_{s-1})$ recursively as follows: Given $\Tilde{x}_{a-1}$, we define $x_a$ to be the first relevant column after $\Tilde{x}_{a-1}$. Let $y_a$ be the row of the highest zero-entry in column $x_a$. If $y_a\geq j'$, then $(\Tilde{x}_{a-1},\Tilde{y}_{a-1})$ and $(x_a,y_a)$ are the last tuples in the respective sequences. If $y_a\leq j'$, we define $\Tilde{x}_a$ to be the largest relevant row that contains a zero-entry on layer $L(x_a,y_a)$, namely the zero-entry $(\Tilde{x}_a,\Tilde{y}_a)$. For a visualization of the sequences refer to Figure~\ref{fig:layers}.
    We can observe a set of useful properties of these sequences of indices:
    \begin{enumerate}
        \item It holds $\Tilde{y}_0\leq j$ and $y_s\geq j'$.\label{item:first-and-last-y}
        \item It holds $\Tilde{y}_a\leq y_a$ for all $a=1,\ldots,s-1$.\label{item:y-inequality}
        \item The columns $x_1,\ldots,x_s$ and $\Tilde{x}_0,\ldots,\Tilde{x}_{s-1}$ are all in $W$.\label{item:all-columns-unignored}
        \item There are $x_a-\Tilde{x}_{a-1}-1$ irrelevant columns between $\Tilde{x}_{a-1}$ and $x_a$.\label{item:irrelevant-columns-in-between}
        \item It holds $i\leq\Tilde{x}_0<x_1\leq\Tilde{x}_1<x_2\leq\ldots\leq\Tilde{x}_{s-1}<x_s\leq i'$.\label{item:x-sorted}
        \item The layers $L(x_1,y_1),\ldots,L(x_{s-1},y_{s-1})$ are all pairwise different.\label{item:pw-different-layers}
    \end{enumerate}
    
    The goal is to bound $j'-j$ from above. Because of Property~\ref{item:first-and-last-y}, we get 
    \[j'-j\leq y_s-\Tilde{y}_0=y_s+\sum_{a=1}^{s-1}(-\Tilde{y}_a+\Tilde{y}_a)-\Tilde{y}_0,\]
    where we added a zero sum in the last equality.
    Using Property~\ref{item:y-inequality}, we can bound this from above by
    \[y_s+\sum_{a=1}^{s-1}(-\Tilde{y}_a+y_a)-\Tilde{y}_0=\sum_{a=1}^{s}y_a-\Tilde{y}_{a-1}.\]
    Property~\ref{item:all-columns-unignored} yields that columns $\Tilde{x}_{a-1}$ and $x_a$ satisfy $\tloc$-locality and therefore $y_a-\Tilde{y}_{a-1}\leq\tloc(2+x_a-\Tilde{x}_{a-1})$ for all $a=1\ldots,s$. If we plug this into the inequality from above we get
    \[j'-j\leq\sum_{a=1}^{s}\tloc(2+x_a-\Tilde{x}_{a-1})=\tloc\left(3s+\sum_{a=1}^s(x_a-\Tilde{x}_{a-1}-1)\right).\]
    Because of Property~\ref{item:irrelevant-columns-in-between} and Property~\ref{item:x-sorted}, the sum on the right side counts pairwise different irrelevant columns. So, we have $\sum_{a=1}^s(x_a-\Tilde{x}_{a-1}-1)\leq m_c$.
    Last but not least, we bound $s$ with respect to $z$. For this, we argue that none of the layers $L(x_1,y_1),\ldots,L(x_{s-1},y_{s-1})$ is equal to the first or last layer. We already observed that the first layer only contains the entry $(i,j)$ in column $i<x_a$ for all $a=1,\ldots,s$. Similarly, we have $x_a<i'$ for all $a=1,\ldots,s-1$ and therefore none of the stated layers can contain $(i',j')$, the only entry of the last layer. This together with Property~\ref{item:pw-different-layers} yields that $s-1\leq L-2$. So, in total we get
    \[j'-j\leq\tloc(3(L-1)+m_c),\]
    which is equivalent to Case~\ref{case:columns-relevant} in the lemma.
\end{proof}

The following lemma underlines why the second order $\tloc$-locality is a useful property. From this we can derive that the zero-entries in an unignored column or row are not too far apart.
\begin{lemma}\label{lem:ignored-row-with-many-unblocked}
    Let $\fsm$ be strongly $(\tloc,\zeta)$-local for some $\zeta$ with witness set $W$. 
    \begin{enumerate}
        \item If column $i$ has two zero-entries $(i,j_1)$ and $(i,j_2)$ with $\abs{j_2-j_1}>2\tloc$, then $i$ is not in $W$ and all rows $j$ with a zero-entry $(i,j)$ are not in $W$ either. \label{case:column}
        \item If row $j$ has two zero-entries $(i_1,j)$ and $(i_2,j)$ with $\abs{i_2-i_1}>2\tloc$, then $j$ is not in $W$ and all columns $i$ with a zero-entry $(i,j)$ are not in $W$ either. \label{case:row}
    \end{enumerate}
\end{lemma}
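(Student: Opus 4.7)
The plan is to apply the definitions of second order $\tloc$-locality (Definition~\ref{def:locfailure}) and strong $(\tloc,\zeta)$-locality (Definition~\ref{def:stronglocal}) directly. The lemma is essentially a structural consequence of the fact that the second-order locality condition for a column $i$ is coupled with the rows containing its zero-entries, and vice versa; once this coupling is unpacked there is nothing further to compute.

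For the first case, suppose column $i$ has two zero-entries $(i,j_1),(i,j_2)$ with $|j_2-j_1|>2\tloc$. Then column $i$ itself fails second-order $\tloc$-locality by the \emph{first} clause of Definition~\ref{def:locfailure}, so by property~iii) of Definition~\ref{def:stronglocal} we immediately conclude $i\notin W$. Now consider any row $j$ containing a zero-entry $(i,j)$. Row $j$ then has a zero-entry in column $i$, and column $i$ contains the two zero-entries $(i,j_1),(i,j_2)$ with $|j_1-j_2|>2\tloc$. This is exactly the pattern in the \emph{second} clause of Definition~\ref{def:locfailure} (applied to a row rather than a column), so row $j$ fails second-order $\tloc$-locality, and property~iii) again yields $j\notin W$.

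The second case is obtained by interchanging the roles of rows and columns throughout: two zero-entries in the same row $j$ that are more than $2\tloc$ apart immediately violate the first clause of Definition~\ref{def:locfailure} for row $j$, and for any column $i$ with $\fsm[i,j]=0$ they witness the second clause applied to column $i$ via the zero-entry $(i,j)$.

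The only subtlety is noticing that Definition~\ref{def:locfailure} deliberately packs two symmetric conditions into a single notion: a horizontal spread of zeros in a column forces failure not just of that column but also of every row that intersects the column in a zero-entry. Once this reading is in place the proof is two or three sentences and needs no new machinery beyond the two definitions.
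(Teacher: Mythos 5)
Your proof is correct and follows essentially the same route as the paper: both arguments are a direct unpacking of Definition~\ref{def:locfailure}, using its first clause to exclude column $i$ (resp.\ row $j$) from $W$ and its second clause to exclude every row (resp.\ column) meeting it in a zero-entry. (Minor nitpick in your closing remark only: the spread of the two zero-entries $(i,j_1),(i,j_2)$ within column $i$ is vertical, not horizontal; this does not affect the proof.)
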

\begin{proof}
We only prove Case~\ref{case:column}. The proof for Case~\ref{case:row} is analogous.
Since $\abs{j_2-j_1}>2\tloc$, column $i$ fails second order $\tloc$-locality and cannot be in $W$ by definition.
    Suppose $(i,j)$ is a zero-entry in column $i$. Then, column $i$ shows that row $j$ cannot pass second order $\tloc$-locality and therefore, $j$ cannot be in $W$.
\end{proof}

We are now ready to prove the Barrier Lemma analogue for the analysis of the second Fréchet-tester.  

\begin{lemma}[Barrier Lemma Analogue]\label{lem:barrier-lemma-analogue}
    Let $\fsm$ be strongly $(\tloc,\zeta)$-local with witness set $W$ for some $\zeta$. Let $\optcoupling$ be a \restricteddiagonals\ path from $(1,1)$ to $(\n,\n)$ with lowest cost. Let $(i,j),(i',j')\in\optcoupling$ be zero-entries such that $i$ and $i'$ (resp. $j$ and $j'$) are in $W$. Suppose $\optcoupling$ 
    visits no zero-entry in a column (resp. row) of $W$ between $(i,j)$ and $(i',j')$. Then we have a total of at least $\frac{j'-j}{50\tloc^2}$ (resp. $\frac{i'-i}{50\tloc^2}$) irrelevant columns in $[i+1,i'-1]$ and irrelevant rows in $[j+1,j'-1]$.
\end{lemma}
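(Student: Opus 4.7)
The plan is to apply \reflem{bound-width-by-irrelevant-columns}(Case~\ref{case:columns-relevant}) to the rectangle $R=[i,i']\times[j,j']$. Since $i,i'\in W$, this yields $\tfrac{j'-j}{\tloc}\leq 3(z-1)+m_c$, where $z$ is the number of layers in $R$ and $m_c$ is the number of irrelevant columns in $[i+1,i'-1]$ (using that $i,i'$ each contain a zero-entry and so are not barrier-columns). The task then reduces to showing $z-1=O(\tloc)(m_c+m_r)$, where $m_r$ is the number of irrelevant rows in $[j+1,j'-1]$; after substitution the width bound becomes $j'-j=O(\tloc^2)(m_c+m_r)$, and careful bookkeeping on the constants delivers the claimed $\tfrac{j'-j}{25\tloc^2}$.

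To control $z-1$, we first invoke \reflem{number-layers}, which says $z$ equals the number of zero-entries on any minimum-cost monotone path from $(i,j)$ to $(i',j')$; since the restriction of $\optcoupling$ to $R$ is itself such a path (otherwise $\optcoupling$ would not be globally optimal), $z$ equals the number of zero-entries that $\optcoupling$ visits in $R$. List these as $(a_0,b_0)=(i,j),(a_1,b_1),\ldots,(a_{z-1},b_{z-1})=(i',j')$. By hypothesis, $\optcoupling$ visits no zero-entry in a $W$-column strictly between its endpoints; together with $i,i'\in W$ this forces every $a_k$ with $1\leq k\leq z-2$ to lie in $[i+1,i'-1]\setminus W$ and thus to be irrelevant. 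Grouping by column, set $A=\{a_1,\ldots,a_{z-2}\}$ with multiplicities $p_c=|\{k:a_k=c\}|$; then $|A|\leq m_c$ and $z-2=\sum_{c\in A}p_c$. For each $c\in A$ with $p_c\geq 2$, let $y_1<\cdots<y_{p_c}$ be the rows of its zero-entries on the path. If $y_{p_c}-y_1\leq 2\tloc$ (a \emph{narrow} column), then $p_c\leq 2\tloc+1$ by counting distinct rows in the window. Otherwise (a \emph{wide} column), \reflem{ignored-row-with-many-unblocked}(Case~\ref{case:column}) forces every $y_i$ out of $W$, so each of these rows is irrelevant and lies in $[j+1,j'-1]$.

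The delicate part is bounding the wide contribution, since a single irrelevant row could a priori be shared by many wide columns. The plan is to prove that each irrelevant row $r$ is shared by $O(\tloc)$ wide columns by inspecting the monotone passage of $\optcoupling$ through $r$: if this passage has width more than $2\tloc$, then \reflem{ignored-row-with-many-unblocked}(Case~\ref{case:row}) forces all of $r$'s zero-columns (in particular those visited on $\optcoupling$) to be ignored, and second-order $\tloc$-locality of any neighbouring $W$-columns pins the contributing columns into an $O(\tloc)$-wide window; otherwise the passage itself is at most $2\tloc$ wide and the bound is immediate. Summing, $\sum_{c\text{ wide}}p_c=O(\tloc)\,m_r$, and combined with $\sum_{c\text{ narrow}}p_c\leq (2\tloc+1)\,m_c$ this gives $z-1=O(\tloc)(m_c+m_r)$, finishing Case~1 after substitution into the width bound. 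Case~2 ($j,j'\in W$) follows symmetrically, applying \reflem{bound-width-by-irrelevant-columns}(Case~\ref{case:rows-relevant}) and swapping the roles of rows and columns throughout. The main obstacle is the per-row multiplicity bound for wide columns; this is precisely where an extra factor of $\tloc$ relative to the original \reflem{barrier-lemma} enters, producing the $\tloc^2$ in the denominator.
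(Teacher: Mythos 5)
Your overall strategy coincides with the paper's: apply \reflem{bound-width-by-irrelevant-columns} to get $\frac{j'-j}{\tloc}\leq 3(z-1)+m_c$, identify $z$ with the number of zero-entries of $\optcoupling$ in $R$ via \reflem{number-layers}, observe that all intermediate zero-entries sit in ignored (hence irrelevant) columns, and split those columns according to whether their zero-entries on the path spread over more than $2\tloc$ rows. The one place you deviate is also the one place your argument does not hold up: the per-row multiplicity bound for wide columns. You claim that when the passage of $\optcoupling$ through an irrelevant row $r$ is wider than $2\tloc$, ``second-order $\tloc$-locality of any neighbouring $W$-columns pins the contributing columns into an $O(\tloc)$-wide window.'' Second-order locality of a column $c\in W$ only constrains the zero-entries of a row $r$ if $(c,r)$ is itself a zero-entry; in the situation at hand every column with a zero-entry in row $r$ has already been forced out of $W$ by \reflem{ignored-row-with-many-unblocked}, and nothing guarantees the existence of a $W$-column with a zero-entry in row $r$. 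So no locality argument confines the columns meeting row $r$ to an $O(\tloc)$ window, and this step is a genuine gap as written.

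The fact you need is nevertheless true, for a much simpler reason that you gesture at but do not use: monotonicity. If $\optcoupling$ visits zero-entries $(c,r)$ and $(c',r)$ with $c<c'$ in the same row, the path stays in row $r$ between them, so the visit to column $c$ \emph{ends} at row $r$, the visit to $c'$ \emph{begins} there, and any column strictly between is visited only in row $r$ (hence has $p=1$ and is not wide). Thus at most two wide columns meet any row, and summing over the $s\leq m_c$ columns hosting intermediate zero-entries the total overcount is at most $s-1$. This is exactly how the paper argues, obtaining $\sum_{\text{wide}}p_c\leq m_r+(s-1)\leq m_r+m_c$ rather than your $O(\tloc)\,m_r$; consequently the extra factor of $\tloc$ that produces $\tloc^2$ in the denominator does \emph{not} come from row-sharing, as you suggest, but from the $(2\tloc+1)$ zero-entries that a single narrow ignored column may contribute, multiplied by the factor $3$ in front of $z-1$ in \reflem{bound-width-by-irrelevant-columns}. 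Two smaller omissions: you do not treat the degenerate case $m_c+m_r=0$ (the paper falls back on \reflem{barrier-lemma} when every column in $[i,i']$ and row in $[j,j']$ is in $W$), and the constant $25$ has to be verified against the paper's chain $z\leq 8\tloc(m_c+m_r)$ and $3(z-1)+m_c\leq 25\tloc(m_c+m_r)$, which your looser wide-column bound would not reproduce without adjustment.
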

A visualization of this setting can be seen in the dashed box in Figure~\ref{fig:grouping-combined}.
\begin{proof}
    We prove the statement for $i,i'\in W$. The proof for $j,j'\in W$ is analogous. Since $\optcoupling$ is a \restricteddiagonals\ path from $(1,1)$ to $(\n,\n)$ with minimum cost, its subpath $\pi$ starting in $(i,j)$ and ending in $(i',j')$ is also a \restricteddiagonals\ path from $(i,j)$ to $(i',j')$ of minimum cost. Denote by $m_c$ the number of irrelevant columns between $i$ and $i'$ and by $m_r$ the number of irrelevant rows between $j$ and $j'$. Let $z$ be the number of zero-entries that $\pi$ visits. By Lemma~\ref{lem:number-layers} we know that there are at most $2z$ layers in the rectangle $R=[i,i']\times[j,j']$. We observe that we are now in the setting of Lemma~\ref{lem:bound-width-by-irrelevant-columns} Case~\ref{case:columns-relevant}. So we have 
    $\frac{j'-j}{\tloc}\leq3(2z-1)+m_c$. 

    If all columns or rows in $[i,i']$ and $[j,j']$ are in $W$, we can apply the Barrier Lemma (Lemma~\ref{lem:barrier-lemma}) to see that $\frac{j'-j}{50\tloc^2}\leq\frac{i'-i+j'-j}{2\tloc}\leq m_c+m_r$. 
    \begin{claim}\label{claim:bound-layers}
        Suppose there is at least one ignored column in $[i,i']$ or row in $[j,j']$. Then, it holds $z\leq 8\tloc(m_c+m_r)$.
    \end{claim}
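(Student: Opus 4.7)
The plan is to bound the $z-2$ intermediate zero-entries of $\pi$ by double-counting; combined with the endpoints this will give the claimed upper bound on $z$. Since $i,i'\in W$ and, by hypothesis, $\pi$ visits no zero-entry in a $W$-column strictly between the endpoints, every intermediate zero-entry of $\pi$ lies in some ignored column. These columns are necessarily non-barrier (they each contain a zero-entry), so the number of distinct ignored columns used by $\pi$ in $[i,i']$ is at most $m_c$.

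My first step is to split those ignored columns into two classes according to the row-span of the $\pi$-zero-entries they contain: a column is \emph{narrow} if this span is at most $2\tloc$, and \emph{wide} (set $L_C$) otherwise. A narrow column contributes at most $2\tloc+1$ $\pi$-zero-entries, since all of them fit in a window of $2\tloc+1$ rows. For each wide column $c\in L_C$, applying Lemma~\ref{lem:ignored-row-with-many-unblocked}(\ref{case:column}) to the two extreme $\pi$-zero-entries in $c$ forces every row containing any zero-entry of column $c$ to be ignored as well.

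The main hurdle is counting zero-entries in wide columns economically. I would exploit monotonicity of $\pi$: for each $c\in L_C$, classify a $\pi$-zero-entry $(c,y)$ as an \emph{edge} entry if $y$ is the first or last row where $\pi$ visits column $c$, and as a \emph{middle} entry otherwise. Edges contribute at most $2|L_C|\leq 2m_c$ in total. For a middle entry $(c,y)$, monotonicity forces $\pi$ to arrive at $(c,y)$ from $(c,y-1)$ and leave to $(c,y+1)$, so $\pi$'s visit to row $y$ consists of the single cell $(c,y)$. Hence no two middle entries can share a row, and each such row is ignored by the previous paragraph, so the total number of middle entries is at most $m_r$.

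Putting everything together, $z-2\leq (2\tloc+1)|\text{narrow}|+2|L_C|+m_r$. Since $|\text{narrow}|+|L_C|\leq m_c$ and $2\leq 2\tloc+1$, this simplifies to $z-2\leq (2\tloc+1)m_c+m_r$. The hypothesis that at least one ignored column or row lies in the range gives $m_c+m_r\geq 1$, and a short arithmetic check using $\tloc\geq 1$ then upgrades the bound to $z\leq 8\tloc(m_c+m_r)$. The symmetric case $j,j'\in W$ is handled by swapping the roles of rows and columns throughout, invoking Case~\ref{case:row} of Lemma~\ref{lem:ignored-row-with-many-unblocked} in place of Case~\ref{case:column}.
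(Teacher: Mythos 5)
Your proof is correct and follows essentially the same route as the paper's: both arguments note that all intermediate zero-entries lie in ignored columns (at most $m_c$ of them), split these columns by whether their zero-entries span more than $2\tloc$ rows, bound the narrow ones by $2\tloc+1$ entries each, and use Lemma~\ref{lem:ignored-row-with-many-unblocked} to charge the zero-entries of the wide columns to distinct ignored rows counted by $m_r$. The only difference is cosmetic bookkeeping (your edge/middle split versus the paper's ``at most $s-1$ rows counted twice'' correction), and your constants check out.
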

    \begin{proof}[Proof of Claim~\ref{claim:bound-layers}]
        First of all, we see that each zero-entry has to be in an ignored column by assumption. So at most $m_c$ columns contain zero-entries on $\optcoupling$. Let $i_1,\ldots,i_s$ be ignored columns that have at least one zero-entry on $\optcoupling$. So $s\leq m_c$. For $i_k$, denote by $j^k_1,\ldots j^k_{n_k}$ all rows with a zero-entry on $\optcoupling$ in column $i_k$. So we have $z\leq 4\tloc+n_1+\ldots+n_s$ because each of columns $i$ and $i'$ can visit at most $2\tloc$ zero-entries on $\optcoupling$.
        If $n_k>2\tloc+1$, we know that all rows $j^k_1,\ldots,j^k_{n_k}$ are ignored by Lemma~\ref{lem:ignored-row-with-many-unblocked}. So all of these $n_k$ rows count into $m_r$. Additionally, we count at most $s-1$ rows twice, because this can  happen only for $j^k_{n_k}=j^{k+1}_1$. So we have
        \[z\leq 4\tloc+n_1+\ldots+n_s\leq4\tloc+(2\tloc+1)s+m_r+s\leq 4\tloc+(2\tloc+1)m_c+m_r+m_c.\]
        The claim follows since $m_c+m_r\geq1$, $m_c\leq m_c+m_r$ and $t\geq1$.
    \end{proof}
    We can plug this into the inequality from Lemma~\ref{lem:bound-width-by-irrelevant-columns} and get
    \[\frac{j'-j}{\tloc}\leq3(2z-1)+m_c\leq3( 16\tloc(m_c+m_r)-1)+m_c\leq50\tloc(m_c+m_r),\]
    which is equivalent to the statement in the lemma.
\end{proof}

\begin{figure}
    \centering
    \includegraphics[width=0.7\linewidth]{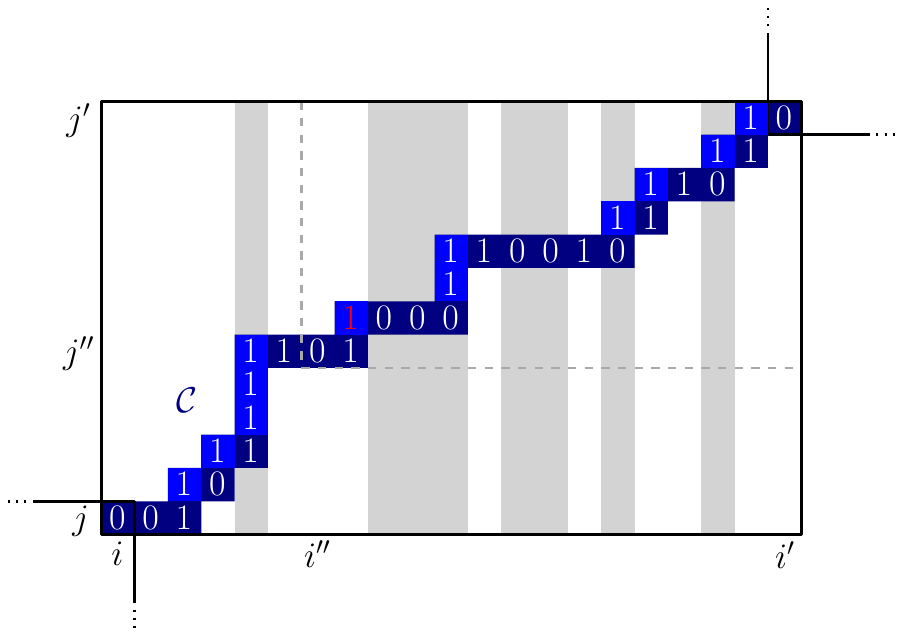}
    \caption{The row-entries are marked in lighter blue, the gray columns are ignored. Example for grouping in the proof of Lemma~\ref{lem:main-lemma-analogue} with $\tloc=2$: The group starts in $(i,j)$. The red row-entry is the $(2\tloc+2)$-th row-entry in the group. So we stop the group in $(i',j')$, the next zero-entry in an unignored column. Since it contains more than $4\tloc+2$ row-entries, this is a hollow group. The entries $(i'',j''),(i',j')$ satisfy the conditions in Lemma~\ref{lem:barrier-lemma-analogue}.}
    \label{fig:grouping-combined}
\end{figure}

With these lemmas we can prove an analogue to the main lemma (Lemma~\ref{lem:main-lemma}):
\begin{proof}[Proof of Lemma~\ref{lem:main-lemma-analogue}]
    Let $(i,j)$ be on the shortest \restricteddiagonals\ path $\optcoupling$ from $(1,1)$ to $(\n,\n)$. We say that $(i,j)$ is a \emph{row-entry} if it is a one-entry and its predecessor in $\optcoupling$ is $(i,j-1)$. Similarly, we say that $(i,j)$ is a \emph{column-entry} if it is a one-entry and its predecessor in $\optcoupling$ is $(i-1,j)$. 
    Note that for any one-entry $(i,j)$, its predecessor cannot be $(i-1,j-1)$ in any \restricteddiagonals\ path.
    Intuitively, the row-entries are the one-entries of $\optcoupling$ that are in a new row and the column-entries are the ones that are in a new column. A visualization of row-entries can be seen in Figure~\ref{fig:grouping-combined}.

    Let $b_r$ be the number of row-entries on $\optcoupling$ and let $b_c$ be the number of column-entries on $\optcoupling$. So we have $c(\optcoupling)=b_r+b_c$. Suppose that $b_r\geq b_c$ and therefore $b_r>\frac{\eps\n}{2}$. The other case can be done symmetrically by switching $\firstcurve$ and $\secondcurve$ and therefore switching columns and rows. Denote by $m_c$ the number of irrelevant columns and by $m_r$ the number of irrelevant rows in $\fsm$. So we have $m_c+m_r\leq\frac{\eps\n}{800\tloc^2}$.

    We now divide the path $\optcoupling$ into a minimum number of groups that each contain more than $2\tloc+1$ row-entries similarly to the grouping in the proof of the Lemma~\ref{lem:main-lemma}. A visualization can be seen in Figure~\ref{fig:grouping-combined}. The first group starts in $(1,1)$. We define the groups recursively. This time, we start the group in a zero-entry $(i,j)$ in an unignored column $i$. We add entries on $\optcoupling$ to the group until we have added $2\tloc+2$ row-entries or we arrived at $(\n,\n)$. Then, we finish the group in the next possible zero-entry $(i',j')$ that is in an unignored column $i'$. The next group starts again in $(i',j')$ if $(i',j')\neq(\n,\n)$. So all groups (except for possibly the last group) contain more than $2\tloc+1$ row-entries. This implies that for any group that goes from $(i,j)$ to $(i',j')$ (except for maybe the last group) we have that $i\neq i'$ and $j\neq j'$ by Lemma~\ref{lem:groups-end-in-different-rows-and-cols}. 

    We will first show that there are not too many groups that contain more than $4\tloc+2$ row-entries. Then, we will show that in the remaining groups there is at least a constant fraction of one-entries in small groups with no ignored rows and no ignored columns. On these groups we will use Lemma~\ref{lem:impermeability-lemma} to show that each of them induces at least one Permeablility query that fails. 

    Let $H_1,\ldots,H_k$ be the groups that contain more than $4\tloc+2$ row-entries. Let $b_1\ldots,b_k$ be the number of row-entries in these groups. We call these groups \emph{hollow} groups. All other groups are called \emph{dense}. An example of a hollow group can be seen in Figure~\ref{fig:grouping-combined}.
    \begin{claim}\label{claim:number-row-one-entries-in-hollow-groups}
        At most $\frac{\eps\n}{4}$ row-entries are contained in hollow groups.
    \end{claim}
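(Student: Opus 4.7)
The plan is to mirror the argument of Claim~\ref{claim:few-hollow-groups} from the first Main Lemma, but using Lemma~\ref{lem:barrier-lemma-analogue} in place of the original Barrier Lemma. Fix a hollow group $H$ with $b_H > 4\tloc+2$ row-entries going from $(i,j)$ to $(i',j')$, and let $(i'',j'')$ denote the second-to-last zero-entry of $H$ that lies in a column of the witness set $W$. The key observation is that at most $2\tloc+1$ row-entries appear on $\optcoupling$ before $(i'',j'')$: indeed, if the row-entry count had already reached $2\tloc+2$ before reaching $(i'',j'')$, then by the greedy construction the group would terminate at the first subsequent zero-entry in a column of $W$, which would be $(i'',j'')$ itself, contradicting the fact that $(i'',j'')$ is not the last such zero-entry. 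Consequently, at least $b_H-(2\tloc+1)$ row-entries appear strictly between $(i'',j'')$ and $(i',j')$, and since $b_H > 4\tloc+2$ this is strictly more than $b_H/2$. Because each row-entry advances the $j$-coordinate by one, this yields $j' - j'' > b_H/2$.

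Next, I would invoke Lemma~\ref{lem:barrier-lemma-analogue} on the subpath from $(i'',j'')$ to $(i',j')$. Both $i''$ and $i'$ lie in $W$ by construction, and no zero-entry of $\optcoupling$ in a column of $W$ appears strictly between them by the choice of $(i'',j'')$. The lemma therefore guarantees at least
\[
\frac{j'-j''}{25\tloc^2} \;>\; \frac{b_H}{50\tloc^2}
\]
irrelevant columns in $[i''+1, i'-1]$ together with irrelevant rows in $[j''+1, j'-1]$.

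Summing over all hollow groups $H_1,\dots,H_k$ is valid because their interiors are pairwise disjoint in both the column index and the row index (consecutive groups share only a single zero-entry, which is an endpoint, not an interior column or row). Writing $m_c$ and $m_r$ for the total numbers of irrelevant columns and rows, this yields
\[
\sum_{a=1}^{k} \frac{b_a}{50\tloc^2} \;\le\; m_c+m_r.
\]
By hypothesis there are at most $\tfrac{\eps\n}{800\tloc^2}$ barrier-columns and barrier-rows in total, and by strong $(\tloc, \tfrac{\eps}{800\tloc^2})$-locality there are at most $\tfrac{\eps\n}{800\tloc^2}$ ignored columns and rows in total, giving $m_c + m_r \le \tfrac{\eps\n}{400\tloc^2}$. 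Rearranging yields $\sum_a b_a \le \tfrac{\eps\n}{8} < \tfrac{\eps\n}{4}$, as required.

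The main obstacle in this proof is the combinatorial bookkeeping of the greedy construction: making precise why $(i'',j'')$ must be visited while fewer than $2\tloc+2$ row-entries have accumulated, so that a constant fraction of the hollow group's row-entries sit between it and $(i',j')$. A secondary subtlety is justifying additivity when summing over groups, which is settled by noting that the intervals $[i''_H+1,i'_H-1]$ and $[j''_H+1,j'_H-1]$ are strictly interior to the column- and row-ranges of $H$, and hence cannot overlap with the corresponding intervals of any other hollow group.
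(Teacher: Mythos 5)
Your proof is correct and follows essentially the same route as the paper: identify the second-to-last zero-entry $(i'',j'')$ of the hollow group in a witness column, observe that at most $2\tloc+1$ row-entries precede it, apply Lemma~\ref{lem:barrier-lemma-analogue} to the tail segment, and sum over hollow groups against the bound $m_c+m_r\leq\frac{\eps\n}{400\tloc^2}$. The only (harmless) difference is in the summation step: the paper conservatively allows each irrelevant row to be counted twice and concludes $\sum_s b_s\leq\frac{\eps\n}{4}$, whereas your disjointness argument for the open intervals $[i''+1,i'-1]$ and $[j''+1,j'-1]$ is valid and gives the slightly stronger $\sum_a b_a\leq\frac{\eps\n}{8}$.
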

    \begin{proof}[Proof of Claim~\ref{claim:number-row-one-entries-in-hollow-groups}]
        Let $H$ be one of the hollow groups with $b$ row-entries on the subpath of $\optcoupling$ in $H$. Let $(i',j')$ be the last entry of $H$ and therefore a zero-entry in an unignored column $i'$. Let $(i'',j'')$ be the last zero-entry in an unignored column $i''$ before $i'$. By the construction of $H$, we know that $\optcoupling$ visits at most $2\tloc+1$ row-entries before $(i'',j'')$. So it visits at least $b-2\tloc-1\geq\frac{b}{2}$ row-entries in between $(i'',j'')$ and $(i',j')$. We can apply Lemma~\ref{lem:barrier-lemma-analogue} to the segment of $\optcoupling$ from $(i'',j'')$ to $(i',j')$ and see that the total number of irrelevant columns in $[i'',i']$ and rows in $[j'',j']$ is at least $\frac{j'-j''}{50\tloc^2}\geq\frac{b-2\tloc-1}{50\tloc^2}\geq\frac{b}{100\tloc^2}$. If we sum over all of these, we count each irrelevant column at most once and each irrelevant row at most twice because of Lemma~\ref{lem:groups-end-in-different-rows-and-cols}. So in total we get
        \[m_c+m_r\geq\sum_{s=1}^k\frac{b_s}{200\tloc^2}.\]
        On the other hand, we have $m_c+m_r\leq\frac{\eps\n}{800\tloc^2}$. So in total, we have
        \[\sum_{s=1}^k\frac{b_s}{200\tloc^2}\leq\frac{\eps\n}{800\tloc^2},\]
        which is equivalent to
        \[\sum_{s=1}^kb_s\leq\frac{\eps\n}{4}.\qedhere\]
    \end{proof}
    Since there are at least $\frac{\eps\n}{2}$ row-entries and at most $\frac{\eps\n}{4}$ of them are in hollow groups, at least $\frac{\eps\n}{4}$ row-entries are contained in dense groups. Each of these groups contains at most $4\tloc+2\leq6\tloc$ row-entries. So there are at least $\lceil\frac{\eps\n}{24\tloc}\rceil$ dense groups. Let $G$ be a dense group that starts in $(i,j)$ and ends in $(i',j')$. We say that $i'-i+j'-j+2$ is the \emph{size} of $G$. This corresponds to the sum of the columns and rows that it visits. In total, every row and column is counted at most twice by the definition of the groups. So the sum of all group sizes is at most $4\n$. We observe that at least half of the groups (which are at least $\frac{\eps\n}{48\tloc}$) have size at most $2\cdot4\n/\lceil\frac{\eps\n}{24\tloc}\rceil\leq\frac{192\tloc}{\eps}$. We call these groups \emph{small}. This means that we have at least $\frac{\eps\n}{48\tloc}$ small dense groups and at most $\frac{\eps\n}{1600\tloc^2}\leq\frac{\eps\n}{96\tloc}$ ignored rows and columns. This means that at least half of the small dense groups do not contain any ignored column or row. So we have at least $\frac{\eps\n}{96\tloc}$ small dense groups that do not contain any ignored column or row. 

    We now want to create a set $X$ of intervals so that we can apply Lemma~\ref{lem:interval-sampling} to this set $X$. Each of the intervals $[i,i']$ in $X$ will have the property that the block of columns from $i$ to $i'$ is not permeable.
    
    Let $G$ be a small dense group that does not contain any ignored row or column and that starts in $(i,j)$ and ends in $(i',j')$. Since there are at least $2\tloc+1$ row-entries (unless $G$ is the very last group), we have $j'-j>2\tloc$. So we can add $[i,i']$ to $X$. By Lemma~\ref{lem:impermeability-lemma} we know that the block $[i,i']$ is not permeable. In the end, we have at least $\frac{\eps\n}{96\tloc}-1$ intervals in $X$, each of length at most $\frac{192\tloc}{\eps}\leq\frac{384\tloc}{\eps}$. By the definition of the groups, we have that each column is contained in one block only except for the first and last column, which are contained in at most two blocks. So, each column is contained in at most two intervals. 
    Thus, we can apply Lemma~\ref{lem:interval-sampling} to $X$ with  $\ell=\lceil\frac{384\tloc}{\eps}\rceil=\lceil\frac{128\tloc}{\eps/3}\rceil$, $K=\lceil\frac{\eps\n}{96\tloc}\rceil-1=\lceil\frac{\eps/3\n}{32\tloc}\rceil-1$, $c=2$ to see that the algorithm samples one of the intervals from $X$ with probability at least $\frac{9}{10}$.
    So if $b_r>b_c$, one of the permeability queries for columns fails with probability at least $\frac{9}{10}$. If $b_c>b_r$, we build the groups and the set $X$ with respect to the rows and then we can argue that one of the permeability queries for rows fails with probability at least $\frac{9}{10}$. So in total, we have that at least one permeability query fails with probability at least $\frac{9}{10}$.
\end{proof}

\section{Extensions for the discrete Fréchet distance}\label{sec:ext}

\subsection{Relaxing the input assumptions}\label{sec:largedelta}

Our results hold for $t$-local free space matrices. We argued in Section~\ref{sec:prelims} that this is a reasonable assumption for free space matrices of well-behaved curves. To be precise, our assumptions were two-fold: (1)  that the curves are $t$-straight (ii) that they have edge lengths within a constant factor of $\delta$~(see Lemma~\ref{lem:straight}).

In this section we relax the second assumption and show that our results still hold up to small approximation guarantees. In particular, we want to replace this assumption with a uniform sampling condition, namely that there exists a constant $\gamma \geq 1$, such that for any two edges $e$ and $e'$ of the input curve, it holds that $\frac{\ell(e)}{\gamma} \leq \ell(e') \leq \gamma \ell(e)$. However, the edge-lengths are  independent of $\delta$. For simplicity we assume the two input curves have the same complexity $n$, our proof can be easily extended to the unbalanced setting, where the curves have different complexities and different ranges of edge lengths.

\begin{definition}[$\mu$-approximate Fréchet-tester]
Assume we are given query-access to two curves $\firstcurve$ and $\secondcurve$, and we are given values $\delta > 0$ and $0<\eps<2$. If the two curves have discrete Fréchet distance at most $\delta$, we must return \enquote{yes},
if they are $(\eps,\mu\delta)$-far from each other w.r.t.\ the discrete Fréchet distance, the algorithm must return \enquote{no}, with probability at least $\succprob$. 
\end{definition}

Our main idea is to run the algorithms on a reduced set of entries of the free space matrix, which we define as follows.

\begin{definition}[$\beta$-reduced free space matrix]
For an $n \times n$ free space matrix $M$ and an integer parameter $\beta \geq 1$, we define the $\beta$-reduced free space matrix as the $ m\times m$  matrix with entries $ R[i,j] = M[i\beta, j\beta]$, where $m=\lfloor\frac{n}{\beta}\rfloor$.
\end{definition}

The next lemma quantifies the approximation guarantees of the reduced free space matrix with respect to the discrete Fréchet distance. 

\begin{lemma}
Let $\firstcurve$ and $\secondcurve$ be $t$-straight with edge lengths in $[\alpha/\gamma,\alpha]$ for some constant $\gamma \geq 1$. Let $\epsilon>0$ be a parameter and let $\beta=\max(1,\lfloor \frac{\eps\delta}{2\alpha}\rfloor)$. If there exists a coupling path $\mathcal{C}$ of cost zero in the $\beta$-reduced $\delta$-free space matrix then $\df{P}{Q} \leq (1+\eps)\delta$. If there exists no such path, then $\df{P}{Q} > (1-\eps)\delta$.
\end{lemma}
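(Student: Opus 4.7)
The plan is to exploit the bound $\beta\alpha \leq \eps\delta/2$, which follows directly from $\beta = \lfloor\eps\delta/(2\alpha)\rfloor$ when $\beta \geq 2$; when $\beta = 1$ we have $R = M$ and both implications are immediate from the definition of the discrete Fréchet distance, so I focus on the case $\beta \geq 2$. The key geometric ingredient throughout is the edge-length bound, which translates into the index-distance inequality $d(p_k, p_{k'}) \leq \ell(P[k,k']) \leq |k-k'|\alpha$ for any indices $k, k'$ on $P$ (and symmetrically on $Q$). Observe that $t$-straightness plays no role here; only the upper bound $\alpha$ on edge lengths is needed.

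For the forward direction I would begin with a monotone Manhattan path $\mathcal{C}_R$ of cost zero in $R$ from $(1,1)$ to $(m,m)$, so every visited entry satisfies $d(p_{i\beta}, q_{j\beta}) \leq \delta$. I would lift $\mathcal{C}_R$ to a coupling $\widetilde{\mathcal{C}}$ of $P$ and $Q$ by (i) prepending a staircase in $M$ from $(1,1)$ to $(\beta,\beta)$, (ii) replacing each unit step of $\mathcal{C}_R$ by $\beta$ unit steps in the same direction in $M$, and (iii) appending a staircase from $(m\beta, m\beta)$ to $(n,n)$. Along an expanded segment, an intermediate pair $(p_{i\beta + a}, q_{j\beta})$ with $0 \leq a \leq \beta$ satisfies, by a single use of the triangle inequality,
\[
d(p_{i\beta + a}, q_{j\beta}) \leq d(p_{i\beta+a}, p_{i\beta}) + d(p_{i\beta}, q_{j\beta}) \leq \beta\alpha + \delta \leq (1 + \eps/2)\delta.
\]
The corner staircases use the triangle inequality twice and thus pick up an extra $\beta\alpha$, producing pairs at distance at most $2\beta\alpha + \delta \leq (1+\eps)\delta$. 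Hence every pair of $\widetilde{\mathcal{C}}$ has distance at most $(1+\eps)\delta$, so $\df{P}{Q} \leq (1+\eps)\delta$.

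For the reverse direction I would argue by contrapositive: assume $\df{P}{Q} \leq (1-\eps)\delta$ and let $\mathcal{C}^*$ be a witnessing coupling with all pairs at distance at most $(1-\eps)\delta$. I would project $\mathcal{C}^*$ onto $R$ via the map $(i,j) \mapsto (\min(m, \lceil i/\beta\rceil), \min(m, \lceil j/\beta\rceil))$. For each projected pair $(k,\ell)$ arising from some $(i,j) \in \mathcal{C}^*$, the triangle inequality gives
\[
d(p_{k\beta}, q_{\ell\beta}) \leq d(p_{k\beta}, p_i) + d(p_i, q_j) + d(q_j, q_{\ell\beta}) \leq \beta\alpha + (1-\eps)\delta + \beta\alpha \leq \delta,
\]
so $R[k,\ell] = 0$. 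Since consecutive pairs of $\mathcal{C}^*$ differ in only one coordinate by one and the ceiling map increases by at most one per unit increment, consecutive projected pairs either coincide or differ by exactly one in a single coordinate; after compressing duplicates we obtain a monotone Manhattan path of cost zero in $R$ from $(1,1)$ to $(m,m)$, establishing the contrapositive.

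The main obstacle I expect is the careful boundary bookkeeping, because indices $1$ and $n$ of the full curves do not align with the sample positions $\beta, 2\beta, \ldots, m\beta$. Both directions therefore need either a prefix/suffix staircase or a cap on the projection at $m$, and one must check that the extra $\beta\alpha$ of additive slack accrued at the corners still fits inside the $(1\pm\eps)\delta$ budget. Once the bound $2\beta\alpha \leq \eps\delta$ is in place, this becomes routine, but it is the part of the argument that forces the definition of $\beta$ to use a factor of $2$ in the denominator.
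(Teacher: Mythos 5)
Your proposal is correct and follows essentially the same route as the paper's proof: both directions hinge on the bound $2\alpha\beta \leq \eps\delta$ and transfer couplings between the reduced and non-reduced matrices via the triangle inequality and the edge-length upper bound. Your write-up is merely more explicit about the staircase lifting and the $\lceil\cdot/\beta\rceil$ projection, which the paper leaves implicit.
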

\begin{proof}
For $\beta=1$ the claim holds trivially true.
It remains to prove the claim in case $\lfloor \frac{\eps\delta}{2\alpha}\rfloor > 1$. In this case, we have $\alpha < \eps\delta/2$.

Assume there exists a coupling path $\mathcal{C}$ of cost zero as posited in the lemma. Consider a tuple $(i,j)\in\mathcal{C}$, and let $i'=i\beta$ and $j'=j\beta$ be the corresponding indices in the non-reduced free space matrix of $P$ and $Q$. For any $i_1 \in [i'-\beta + 1, i'+\beta -1]$ and 
$j_1 \in [j'-\beta + 1, j'+\beta -1 ]$ it holds by triangle inequality that 
\[ d(p_{i_1},q_{j_1}) \leq d(p_{i'},q_{j'}) + 2\alpha \beta = \delta + 2\alpha \lfloor \frac{\eps\delta}{2\alpha}\rfloor \leq (1+\eps)\delta \]
Therefore, we can find an extension of the coupling $\mathcal{C}$ in the non-reduced $(1+\eps)\delta$-free space matrix of $P$ and $Q$, and thus $\df{P}{Q}\leq (1+\eps)\delta$.
This shows the first part of the claim. 

As for the second part, assume $\df{P}{Q} \leq (1-\eps) \delta$. Then, there exists a coupling path $\mathcal{C'}$ of cost zero
in the non-reduced $(1-\eps)\delta$-free space matrix of $P$ and $Q$. 
Consider a tuple $(i,j)$ of the path. 
For any $i_1 \in [i-\beta + 1, i+\beta -1]$ and 
$j_1 \in [j-\beta + 1, j+\beta -1 ]$ it holds by triangle inequality that 
\[ d(p_{i_1},q_{j_1}) \leq d(p_{i},q_{j}) + 2\alpha \beta = (1-\eps) \delta + 2\alpha \lfloor \frac{\eps\delta}{2\alpha}\rfloor \leq \delta \]
Therefore, there exists a coupling path $\mathcal{C''}$ of cost zero in the $\beta$-reduced $\delta$-free space matrix of $P$ and $Q$.
By contraposition, this shows the second part of the claim. 
\end{proof}

Next, we prove an analogue of Lemma~\ref{lem:straight} for the reduced free space matrix.

\begin{lemma}\label{lem:straight2}
    Let $\firstcurve$ and $\secondcurve$ be $t$-straight with edge lengths in $[\alpha/\gamma,\alpha]$ for some constant $\gamma \geq 1$. Let $\epsilon>0$ be a parameter and let $\beta=\max(1,\lfloor \frac{\eps\delta}{2\alpha}\rfloor)$. The  $\beta$-reduced $\delta$-free space matrix is $\mathcal{O}(t/\eps)$-local.
\end{lemma}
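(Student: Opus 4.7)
The plan is to mirror the proof of Lemma~\ref{lem:straight}, but interpret zero-entries of the reduced matrix as zero-entries of $M$ at multiples of $\beta$. Concretely, suppose $R[i_1,j_1]=0=R[i_2,j_2]$ with $i_1\leq i_2$. By definition of $R$, this gives $d(p_{i_1\beta},q_{j_1\beta})\leq\delta$ and $d(p_{i_2\beta},q_{j_2\beta})\leq\delta$. The two assumptions on the input curves then yield the analogues of the three inequalities used in Lemma~\ref{lem:straight}: the upper edge bound gives $d(q_{j_1\beta},q_{j_2\beta})\leq\alpha\beta\,|j_1-j_2|$, the lower edge bound gives $\ell(P[i_1\beta,i_2\beta])\geq(\alpha/\gamma)\beta\,|i_2-i_1|$, and $t$-straightness gives $\ell(P[i_1\beta,i_2\beta])\leq t\cdot d(p_{i_1\beta},p_{i_2\beta})$.

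Chaining these with the triangle inequality, exactly as in Lemma~\ref{lem:straight}, I obtain
\[
|i_2-i_1|\;\leq\;\frac{t\gamma}{\alpha\beta}\bigl(2\delta+\alpha\beta\,|j_2-j_1|\bigr)\;=\;\frac{2t\gamma\delta}{\alpha\beta}\;+\;t\gamma\,|j_2-j_1|.
\]
The remaining step is to show $\frac{2t\gamma\delta}{\alpha\beta}=\mathcal{O}(t/\eps)$, which boils down to a small case analysis on $\beta=\max(1,\lfloor\eps\delta/(2\alpha)\rfloor)$. If $\beta=1$, then $\eps\delta/(2\alpha)<2$, so $\delta/\alpha<4/\eps$ and the term is at most $8t\gamma/\eps$. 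If instead $\beta=\lfloor\eps\delta/(2\alpha)\rfloor\geq 1$, then $\beta\geq\eps\delta/(4\alpha)$, so the term is again at most $8t\gamma/\eps$. In both cases we get $|i_2-i_1|\leq C(t/\eps)(2+|j_2-j_1|)$ for a suitable constant $C=\mathcal{O}(\gamma)$, and the reverse inequality follows by swapping the roles of $P$ and $Q$ verbatim.

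I do not expect a real obstacle: the argument is essentially a reparameterization of Lemma~\ref{lem:straight}, with the key observation that the block size $\beta$ is chosen precisely so that $\alpha\beta$ absorbs the \emph{missing} $\delta$-factor from the edge-length assumption of the original lemma. The only care needed is the floor in the definition of $\beta$, which is handled by splitting off the $\beta=1$ case.
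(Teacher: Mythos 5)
Your proposal is correct and follows essentially the same route as the paper's proof: translate reduced-matrix indices to multiples of $\beta$, apply the same three inequalities as in Lemma~\ref{lem:straight}, chain them with the triangle inequality, and observe that $\frac{\delta}{\beta\alpha}=\mathcal{O}(\frac{1}{\eps})$ by the choice of $\beta$. Your explicit two-case handling of the floor is just a spelled-out version of the step the paper states without detail.
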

\begin{proof}
    Let $R$ be the $\beta$-reduced $\delta$-free space matrix. Let $i_1,i_2,j_1,j_2$ be such that $R[i_1,j_1]=0=R[i_2,j_2]$ and $i_1\leq i_2$. Let $i'_1=\beta i_1$, $i'_2=\beta i_2$, $j'_1=\beta j_1$ , and $j'_2=\beta j_2$ be the corresponding indices in the non-reduced matrix.
    We have $\abs{i'_2-i'_1} =  \beta\abs{i_2-i_1}$ and 
    $\abs{j'_2-j'_1} = \beta\abs{j_2-j_1}$, since $\beta > 0$.     
    Moreover, 
    we have by the edge length constraint
    \begin{equation}
    d(q_{j'_1},q_{j'_2})\leq\ell(Q[j'_1,j'_2])\leq
    \alpha \abs{j'_1-j'_2}
    \end{equation}
    where $Q[j'_1,j'_2]$ is replaced by $Q[j'_2,j'_1]$ if $j'_2<j'_1$. 
    Secondly, we get from the edge length constraint and $t$-straightness that 
    \begin{equation}
        \frac{\alpha}{\gamma}\abs{i'_2-i'_1}\leq\ell(P[i'_1,i'_2])\leq t d(p_{i'_1},p_{i'_2})
    \end{equation}
    Together with the triangle inequality, this yields 
    \begin{eqnarray*}
     \abs{i_2-i_1} 
     &=&  \frac{1}{\beta} \abs{i'_2-i'_1} 
      \leq  \frac{\gamma t}{\beta \alpha}  d(p_{i'_1},p_{i'_2}) 
     \leq \frac{\gamma t}{\beta \alpha} (d(q_{j'_2},q_{j'_1})+ 2\delta) \\
      &\leq& \frac{\gamma t}{\beta \alpha} (\alpha \abs{j'_2-j'_1} + 2\delta) 
    = \frac{\gamma t}{\beta \alpha} ( \alpha \beta \abs{j_2-j_1} + 2\delta) \\
    &\leq& {\gamma t}(\abs{j_2-j_1}) +  2 \gamma t \frac{\delta}{\beta \alpha} 
    \end{eqnarray*}

    Since $\beta=\max(1,\lfloor \frac{\eps\delta}{2\alpha}\rfloor)$, it holds that  $\frac{\delta}{\beta \alpha} \in O(\frac{1}{\eps})$.
    Therefore, since $\gamma$ is a constant, we have derived
    \[  \abs{i_2-i_1} \leq t'(2+ \abs{j_1-j_2})\]
    for some $t'\in O(\frac{t}{\eps})$.  
    The other inequality is shown by reversing the roles of $P$ and $Q$.
\end{proof}

We can now modify the algorithms of Theorems~\ref{thm:frechet-tester-known-t} and Theorem~\ref{thm:frechet-tester-unknown-t} so that they only query a row or column if this row or column takes part in the $\beta$-reduced free space matrix for $\beta=\max(1,\lfloor \frac{\eps\delta}{2\alpha}\rfloor)$.

If we know that our input curves are $t$-straight, then we obtain from Theorem~\ref{thm:frechet-tester-known-t} an $(1+\eps')$-approximate Fréchet-Tester that performs at most $\mathcal{O}(\frac{\tloc}{\eps\cdot \eps'}\log\frac{\tloc}{\eps\cdot \eps'})$  queries.

Otherwise we use Theorem~\ref{thm:frechet-tester-unknown-t} and we obtain a $(1+\eps')$-approximate Fréchet-Tester that performs at most
$\mathcal{O}\left(\frac{\log\log\frac{\tloc}{\eps'}}{\eps}\left((\frac{\tloc}{\eps'})^3+(\frac{\tloc}{\eps'})^2\log\n\right)\right)$  queries.

\begin{theorem}\label{thm:frechet-tester-edge-lengts}
    Let $\delta>0$, $0<\eps<2$ and $\eps'>0$ be given. Let $P$ and $Q$ be $\tloc$-straight curves with edge lengths in $[\frac{\alpha}{\gamma},\alpha]$ for some constant $\gamma\geq1$. 
    If $\tloc$ is known, there is a $(1+\eps')$-approximate Fréchet-tester that performs at most $\mathcal{O}\left(\frac{\tloc}{\eps\cdot\eps'}\log\frac{\tloc}{\eps\cdot\eps'}\right)$ queries.
    If $\tloc$ is not known, there is a $(1+\eps')$-approximate Fréchet-tester that performs at most $\mathcal{O}\left(\left((\frac{\tloc}{\eps'})^3+(\frac{\tloc}{\eps'})^2\log\n 
    \right) \frac{\log\log\frac{\tloc}{\eps'}}{\eps}\right)$ queries.
\end{theorem}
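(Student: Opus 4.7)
The plan is to apply the Fréchet-testers of Theorem~\ref{thm:frechet-tester-known-t} and Theorem~\ref{thm:frechet-tester-unknown-t} to the $\beta$-reduced $\delta^\ast$-free space matrix instead of the original one, where $\delta^\ast$ is slightly inflated (say $\delta^\ast = \delta/(1-\eps'/4)$) and $\beta = \max(1,\lfloor \eps' \delta^\ast/(8\alpha)\rfloor)$. The inflation of the query radius is chosen so that the two one-sided guarantees of the preceding approximation lemma straddle $\delta$ on the lower side and $(1+\eps')\delta$ on the upper side. By Lemma~\ref{lem:straight2}, the reduced matrix is $t'$-local for some $t' = \mathcal{O}(t/\eps')$, which is the input assumption required by both testers.

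For the yes-case, if $\df{P}{Q}\leq \delta$, the chosen inflation gives $\df{P}{Q}\leq(1-\eps'/4)\delta^\ast$, so the contrapositive of the approximation lemma produces a cost-zero monotone Manhattan path in the $\beta$-reduced $\delta^\ast$-FSM, and both testers return \enquote{yes} deterministically.

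The main obstacle lies in the no-case. If $(P,Q)$ is $(\eps,(1+\eps')\delta)$-far, I need to conclude that no monotone Manhattan path of cost at most $\eps m$ exists in the reduced matrix, where $m=\lfloor n/\beta\rfloor$. The key step is a path-expansion argument: any monotone Manhattan path in the reduced $\delta^\ast$-FSM can be expanded into a monotone Manhattan path from $(1,1)$ to $(n,n)$ in the non-reduced $(1+\eps')\delta$-FSM by replacing each reduced step with $\beta$ consecutive unit steps. The triangle-inequality computation already carried out in the approximation lemma shows that each zero-entry visited by the reduced path expands into a $\beta\times\beta$ block of zero-entries in the $(1+\eps')\delta$-FSM, so zero-segments of the reduced path expand into zero-cost segments in the non-reduced $(1+\eps')\delta$-FSM, while each reduced one-entry contributes at most $\beta$ one-entries. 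Hence the expanded path has cost at most $\beta$ times the reduced path's cost, and since $\beta\cdot \eps m \leq \eps n$, a hypothetical reduced path of cost at most $\eps m$ would contradict $(\eps,(1+\eps')\delta)$-farness. Consequently no such path exists, and the testers return \enquote{no} with probability at least $\succprob$.

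Finally, the query bounds follow by substituting $t' = \mathcal{O}(t/\eps')$ and using $m \leq n$ in Theorem~\ref{thm:frechet-tester-known-t} and Theorem~\ref{thm:frechet-tester-unknown-t}: the known-$t$ tester performs $\mathcal{O}(\frac{t'}{\eps}\log\frac{t'}{\eps}) = \mathcal{O}(\frac{t}{\eps\cdot\eps'}\log\frac{t}{\eps\cdot\eps'})$ queries, and the oblivious tester performs $\mathcal{O}\bigl(((t')^3 + (t')^2\log m)\frac{\log\log t'}{\eps}\bigr) = \mathcal{O}\bigl(((\tfrac{t}{\eps'})^3 + (\tfrac{t}{\eps'})^2\log n)\tfrac{\log\log(t/\eps')}{\eps}\bigr)$ queries, matching the stated bounds.
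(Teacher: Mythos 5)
Your proposal is correct and follows essentially the same route as the paper: run the testers of \refthm{frechet-tester-known-t} and \refthm{frechet-tester-unknown-t} on the $\beta$-reduced free space matrix at a slightly inflated radius, using \reflem{straight2} to supply the required $\mathcal{O}(\tloc/\eps')$-locality, and read off the query bounds. Your path-expansion argument for the no-case (transferring $(\eps,(1+\eps')\delta)$-farness of the original instance to farness of the reduced instance) actually makes explicit a step the paper leaves implicit; just note that each reduced one-entry can contribute up to $2\beta$ (not $\beta$) one-entries to the expanded path, and that the expanded path must still be patched from $(1,1)$ to $(\beta,\beta)$ and from $(m\beta,m\beta)$ to $(n,n)$, so the inner tester should be invoked with a constant-factor smaller $\eps$ (and $\eps'$ may be assumed bounded by a constant without loss of generality so that your choice $\delta^{\ast}=\delta/(1-\eps'/4)$ is well defined).
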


\section{Additional results}\label{sec:additional}

\subsection{Testing  the continuous Fréchet distance}\label{sec:continuous}
This section describes how the Fréchet-testers for the discrete Fréchet distance can be used to serve as $(1+\eps')$-approximate Fréchet-testers for the continuous Fréchet distance. 
First, we define $(\eps,\delta)$-far for the continuous Fréchet-distance. Then, we explain how to use our Fréchet-testers on instances of the continuous Fréchet distance with this error model.

We first define the continuous Fréchet distance. We set $F_n$ to be the set of all continuous non-decreasing functions $f\colon[0,1]\rightarrow[1,n]$ with $f(0)=1$ and $f(1)=n$. Let $\firstcurve$ and $\secondcurve$ be polygonal curves with $\n$ and $\numvertsecond$ vertices. The \emph{continuous Fréchet distance} is defined to be
\[\delta_{\mathcal{F}}(P,Q)=\min_{f\in F_n,g\in F_m}\max_{\alpha\in[0,1]} d(P(f(\alpha)), Q(g(\alpha))).\]

Now we want to introduce a possible error model for the continuous Fréchet distance. Given a value $\delta>0$ and $f\in F_n$, $g\in F_m$, we define 
\[\mathcal{P}_{f,g}(P,Q)\coloneqq\int_{d(P(f(t)),Q(g(t)))>\delta}(||(P\circ f)'(t)||+||(Q\circ g)'(t)||)dt\]
as the partial difference, namely the total length of the portions of the two curves $P$ and $Q$ that are not matched with distance at most $\delta$ by $f$ and $g$. Building on this, we define the \emph{partial Fréchet difference} to be
\[\mathcal{P}_{\delta}(P,Q)\coloneqq\min_{f\in F_n,g\in F_m}\mathcal{P}_{f,g}(P,Q).\]
This is essentially the same as the partial Fréchet distance as defined in~\cite{buchin2009exact}.\footnote{Note that, in line with their definition, we assume the reparameterizations to be (piecewise) differentiable.}

\begin{definition}[$(\eps,\delta)$-far]
    Given two polygonal curves $P$ and $Q$, 
    we say that $P$ and $Q$ are $(\eps,\delta)$-far from one another w.r.t. the continuous Fréchet distance if $\mathcal{P}_{\delta}(P,Q)\geq\eps(\ell(P)+\ell(Q))$.
\end{definition}

\begin{definition}[$\mu$-approximate Fréchet-tester]
Assume we are given query-access to two curves $\firstcurve$ and $\secondcurve$, and we are given values $\delta > 0$ and $0<\eps<1$. If the two curves have continuous Fréchet distance at most $\delta$, we must return \enquote{yes},
if they are $(\eps,\mu\delta)$-far from each other w.r.t.\ the continuous Fréchet distance, the algorithm must return \enquote{no}, with probability at least $\succprob$. 
\end{definition}

In order to use the first Fréchet-tester introduced in this paper, we transform the polygonal curves into discrete curves by subsampling vertices along the edges of the curves. Denote by $P_a$ the discrete curve that arises from $P$ if we subsample vertices along the edges of $P$ such that $P$ visits a curve segment of length $a$ in between any two vertices for some $a\in(0,\ell(P)]$. 
The last edge might have length longer than $a$ but shorter than $2a$. 
Hence, $P_a$ has $\lfloor\frac{\ell(P)}{a}\rfloor$ edges. 
This yields that $\df{P_a}{Q_a}-2a<\delta_{\mathcal{F}}(P,Q)\leq\df{P_a}{Q_a}+2a$. 
If $P$ is $\kappa$-straight, so is $P_a$ for any $a\in(0,\ell(P)]$. 
Additionally, we can show that if $P$ and $Q$ are $(\eps,\delta)$-far from each other, then $P_{a}$ and $Q_a$ are $(\frac{\eps}{12},\delta-2a)$-far from each other.
\begin{lemma}\label{lem:far-continuous}
    Let $0<a\leq\min\{\ell(P),\ell(Q)\}$.
    If $P$ and $Q$ are $(12\eps,\delta+2a)$-far from each other, then $P_{a}$ and $Q_a$ are $(\eps,\delta)$-far from each other. 
\end{lemma}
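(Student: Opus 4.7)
The natural strategy is to prove the contrapositive: assuming $P_a$ and $Q_a$ are not $(\eps,\delta)$-far, there exists a monotone Manhattan path $\mathcal{C}$ of cost at most $\eps n$ in the $\delta$-free space matrix of $P_a$ and $Q_a$, and I will use it to exhibit a continuous coupling $(f,g)$ of $P$ and $Q$ witnessing $\mathcal{P}_{\delta+2a}(P,Q) < 12\eps(\ell(P)+\ell(Q))$. Here $n$ denotes the common number of vertices of $P_a$ and $Q_a$ (after the standard padding argument used throughout the paper), and note that the assumption $a\leq\min\{\ell(P),\ell(Q)\}$ ensures $na$ is within a constant factor of $\ell(P)+\ell(Q)$.

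Each vertex $p_i$ of $P_a$ is a specific point $P(\alpha_i)$ at arc-length $(i-1)a$ along $P$ (with the final vertex sitting at arc-length $\ell(P)$, so that the last edge of $P_a$ has arc-length in $[a,2a)$), and similarly for $Q_a$. From $\mathcal{C}=(i_1,j_1),\ldots,(i_L,j_L)$ I build $(f,g)$ by placing each tuple at a designated time in $[0,1]$ and interpolating at constant speed between consecutive tuples along exactly one edge of $P_a$ or $Q_a$; each such interpolation segment traces a contiguous arc-length portion of the underlying curve. This yields $f\in F_{|P|}$ and $g\in F_{|Q|}$ with the required boundary values $f(0)=1$, $f(1)=|P|$, $g(0)=1$, $g(1)=|Q|$.

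The key estimate classifies every transition of $\mathcal{C}$ as good or bad at threshold $\delta+2a$. For a transition $(i,j)\to(i+1,j)$ (the case $(i,j)\to(i,j+1)$ is symmetric), any intermediate point $P(\alpha)$ on the traversed edge satisfies $d(P(\alpha),p_i)\leq 2a$ because the edge of $P_a$ has arc-length at most $2a$, so the triangle inequality gives $d(P(\alpha),q_j)\leq d(p_i,q_j)+2a$, with an analogous bound using $p_{i+1}$. Hence whenever \emph{either} endpoint is a zero-entry the distance stays within $\delta+2a$ throughout the transition, and it contributes nothing to the partial difference. Only transitions whose \emph{both} endpoints are one-entries can contribute; these are internal to some maximal run of one-entries, so their number is at most the total one-entry count $c\leq\eps n$. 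Each bad transition moves only one of the curves, by arc-length at most $2a$, so the partial difference is at most $2ac\leq 2\eps\cdot na$, which combined with $na\leq O(\ell(P)+\ell(Q))$ yields a bound comfortably below $12\eps(\ell(P)+\ell(Q))$.

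The main obstacle---and the precise reason the lemma statement uses the threshold $\delta+2a$ rather than $\delta+a$---is the triangle-inequality step that lets a single zero-entry endpoint ``cover'' an entire continuous transition: the slack $2a$ is exactly the maximum arc-length of an edge of $P_a$ or $Q_a$, accounting for the slightly longer final edge. The remaining work is bookkeeping: counting bad transitions inside maximal one-entry runs, converting the discrete normalization $\eps n$ into the continuous normalization $\eps(\ell(P)+\ell(Q))$, and handling the case $|P_a|\neq|Q_a|$ via padding, all of which are absorbed into the generous constant $12$.
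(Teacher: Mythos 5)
Your proposal is correct and follows essentially the same route as the paper's proof: take the contrapositive, lift the low-cost discrete coupling of $P_a,Q_a$ to a continuous traversal of $P,Q$, use the triangle inequality with the $\leq 2a$ edge arc-length to show that transitions adjacent to a zero-entry stay within $\delta+2a$, and bound the remaining (bad) contribution by (number of bad transitions) times $2a$, converted to the $\eps(\ell(P)+\ell(Q))$ normalization. Your classification is in fact marginally sharper (one zero-entry endpoint suffices, versus the paper requiring both), but this only improves constants already absorbed by the factor $12$.
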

\begin{proof}[Proof of Lemma~\ref{lem:far-continuous}]
    Denote by $n_a$ the number of vertices of $P_a$ and $m_a$ the number of vertices of $Q_a$.
    Suppose $P_a$ and $Q_a$ are not $(\eps,\delta)$-far from each other. This means that an optimum (diagonal) coupling $\optcoupling_a$ of them visits at most $\eps(n_a+m_a)$ tuples for which the respective points on $P_a$ and $Q_a$ have distance more than $\delta$. We can naturally transform this coupling into a traversal $T_a$ of $P$ and $Q$: Let $(x_i,y_i),(x_{i+1},y_{i+1})$ be consecutive in $\optcoupling_a$. If $x_i=x_{i+1}$ and $y_i\neq y_{i+1}$, then $T_a$ stays at $x_i$ on $P$ while it traverses the subcurve of $Q$ from $y_i$ to $y_{i+1}$ at unit speed. If $x_i\neq x_{i+1}$ and $y_i=y_{i+1}$, then $T_a$ stays at $y_i$ on $Q$ while it traverses the subcurve of $P$ from $x_i$ to $x_{i+1}$ at unit speed. If $x_i\neq x_{i+1}$ and $y_i\neq y_{i+1}$, then $T_a$ traverses both subcurves at constant speed and takes the same time for both traversals of their respective subcurves. (Note that a traversal can go at unit speed on both subcurves if they have the same length. The only time when this is not the case is if $x_{i+1}=P_a[n_a]$ or $y_{i+1}=Q_a[m_a]$.)

    Let $P_a[i,i+1]$ be an edge of $P_a$ such that the tuples $(p_i,y_k),(p_{i+1},y_{k+1})\in\optcoupling_a$ correspond to the traversal of this edge. We call $P_a[i,i+1]$ \emph{good} if $||p_i-y_k||\leq\delta$ and $||p_{i+1}-y_{k+1}||\leq\delta$, i.e. both adjacent tuples do not correspond to the at most $\eps(n_a+m_a)$ tuples on $\optcoupling_a$ that have distance larger than $\delta$. Otherwise, we call $P_a[i,i+1]$ \emph{bad}. The definitions of good and bad edges are analogous on $Q_a$.
    If $P_a[i,i+1]$ is good, the traversal of the edge in $T_a$ is within distance $\delta+2a$ of $Q$. The same holds for a good edge $Q_a[j,j+1]$. For most edges this is even within distance $\delta+a$. Only for the edges $P_a[n_a-1,n_a]$ and $Q_a[m_a-1,m_a]$ it can be larger.

    So the only edges that can get further away from the other curve than $\delta+2a$ are the bad edges. But there can be at most $4\eps(n_a+m_a)$ bad edges, each of length at most $2a$. This yields that the total length that $P$ and $Q$ spend at distance more than $\delta+2a$ is bounded by
    \[4\eps(n_a+m_a)2a\leq8\eps a(\frac{\ell(P)}{a}-1+\frac{\ell(Q)}{a}-1)\leq8\eps(\ell(P)+\ell(Q))+4\eps\cdot2a\leq12\eps(\ell(P)+\ell(Q)).\]
    The last inequality holds because $a\leq\ell(P)$ and $a\leq\ell(Q)$.
    This implies the statement of the lemma.
\end{proof}
We adjust our query oracle such that it can access the free space matrix of $P_a$ and $Q_a$. So we can apply the Fréchet-testers to $P_a$ and $Q_a$ for an adequate choice of $a$. We show that for given $\eps'>0$ and $0<\eps<2$, we can choose $a$ such that this yields a $(1+\eps')$-approximate Fréchet-tester for the continuous Fréchet distance of $P$ and $Q$.
Let $\eps''>0$.  For $a=\eps''\delta$, we test $P_{\eps''\delta}$ and $Q_{\eps''\delta}$ for the distance $\delta'\coloneqq(1+2\eps'')\delta$. We observe that $\fsm_{\delta'}$ is $\mathcal{O}(\frac{\kappa}{\eps''})$-local if $P$ and $Q$ are $\kappa$-straight. If $\delta_{\mathcal{F}}(P,Q)\leq\delta$, we have $\df{P_{\eps''\delta}}{Q_{\eps''\delta}}\leq\delta'$ and if $P$ and $Q$ are $(\eps,\mu\delta)$-far, we have that $P_{\eps''\delta}$ and $Q_{\eps''\delta}$ are $(\frac{\eps}{12},\delta')$-far for $\mu=1+4\eps''=1+\eps'$ for $\eps''=\eps'/4$. So, if we know $\kappa$, we can apply the first Fréchet-tester for $\frac{\eps}{12}$ and $\delta'$. If $\kappa$ is unknown, we apply the second Fréchet-tester for the same values. 

\begin{theorem}\label{thm:frechet-tester-continuous}
    Let $\delta, \eps, \eps' >0$ be given and assume $\eps<1$. Let $P$ and $Q$ be $\tloc$-straight curves with $\ell(P)=\ell(Q)$. If $\tloc$ is known, there exists a $(1+\eps')$-approximate Fréchet-tester w.r.t.\ the continuous Fréchet distance that performs at most $\mathcal{O}\left(\frac{\tloc}{\eps\cdot\eps'}\log\frac{\tloc}{\eps\cdot\eps'}\right)$ queries.
    If $\tloc$ is unknown, there exists a $(1+\eps')$-approximate Fréchet-tester w.r.t.\ the continuous Fréchet distance that performs at most 
    $\mathcal{O}\left(\left((\frac{\tloc}{\eps'})^3+(\frac{\tloc}{\eps'})^2\log\left\lceil\frac{\ell(P)}{\eps'\delta}\right\rceil\right)\frac{\log\log\frac{\tloc}{\eps'}}{\eps}\right)$ queries.
\end{theorem}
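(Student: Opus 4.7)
The plan is to reduce the continuous problem to a discrete one by uniformly subsampling both curves, and then invoke the relevant discrete Fréchet-tester (Theorem~\ref{thm:frechet-tester-known-t} or Theorem~\ref{thm:frechet-tester-unknown-t}) on the subsampled curves with appropriately scaled parameters. Concretely, I would set $\eps''=\eps'/4$, $a=\eps''\delta$, and $\delta'=(1+2\eps'')\delta=(1+\eps'/2)\delta$, and then run the discrete tester on $P_a$ and $Q_a$ with distance parameter $\delta'$ and farness parameter $\eps/12$. The query oracle on $P$ and $Q$ is adapted in the natural way to answer $\delta'$-free space queries between the subsampled vertices.

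For correctness, I would argue in two directions using facts already stated in the section. If $\delta_{\mathcal{F}}(P,Q)\le\delta$, then the bound $\df{P_a}{Q_a}\le\delta_{\mathcal{F}}(P,Q)+2a\le\delta'$ mentioned in the text gives that the discrete tester returns \enquote{yes}. Conversely, if $P$ and $Q$ are $(\eps,(1+\eps')\delta)$-far, then because $(1+\eps')\delta=\delta'+2a$, Lemma~\ref{lem:far-continuous} applied with distance $\delta'$ and farness $\eps/12$ yields that $P_a$ and $Q_a$ are $(\eps/12,\delta')$-far under the discrete Fréchet distance, so the discrete tester returns \enquote{no} with probability at least $\succprob$. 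The one hypothesis of Lemma~\ref{lem:far-continuous} that needs attention is $a\le\min(\ell(P),\ell(Q))$, which I can assume without loss of generality since otherwise $\delta$ already exceeds any reasonable scale and the problem is trivial.

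To invoke the discrete testers, I need to verify their preconditions. Since $P$ is $\tloc$-straight, so is every uniform subsample $P_a$ (the subsampled vertices lie on $P$, so subcurve lengths and pairwise distances only decrease or are preserved in the required ratio). Moreover, all edges of $P_a$ and $Q_a$ have length exactly $a=\eps''\delta$, except possibly the last edge which is in $[a,2a]$; so the edge lengths are in $[\delta'/\alpha,\alpha\delta']$ for $\alpha=O(1/\eps'')=O(1/\eps')$. Lemma~\ref{lem:straight} then gives that the $\delta'$-free space matrix of $P_a,Q_a$ is $\tloc^*$-local for $\tloc^*=\mathcal{O}(\tloc/\eps')$. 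The number of vertices is $n_a=\lfloor \ell(P)/a\rfloor=\Theta(\ell(P)/(\eps'\delta))$.

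Finally, for the query bounds, I substitute these parameters into the bounds of Theorem~\ref{thm:frechet-tester-known-t} and Theorem~\ref{thm:frechet-tester-unknown-t}. In the known-$\tloc$ case the bound becomes
\[\mathcal{O}\!\left(\frac{\tloc^*}{\eps/12}\log\frac{\tloc^*}{\eps/12}\right)=\mathcal{O}\!\left(\frac{\tloc}{\eps\eps'}\log\frac{\tloc}{\eps\eps'}\right).\]
In the unknown-$\tloc$ case, replacing $\tloc^*$ and $n_a$ in Theorem~\ref{thm:frechet-tester-unknown-t} gives
\[\mathcal{O}\!\left(\left((\tloc/\eps')^3+(\tloc/\eps')^2\log\lceil \ell(P)/(\eps'\delta)\rceil\right)\frac{\log\log(\tloc/\eps')}{\eps}\right),\]
as claimed. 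The main technical obstacle I expect is a clean invocation of Lemma~\ref{lem:far-continuous} with matching constants: one must pick $\eps''$ and $\delta'$ so that the $(1+\eps')$-approximation factor promised for the continuous tester is exactly what Lemma~\ref{lem:far-continuous} gives when pulled back through the subsampling, which is why the choice $\eps''=\eps'/4$ (so that $\delta'+2a=(1+\eps')\delta$) works out. Everything else is bookkeeping.
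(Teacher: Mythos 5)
Your proposal follows essentially the same route as the paper's proof: subsample with $a=\eps'\delta/4$, test $P_a,Q_a$ at distance $\delta'=(1+\eps'/2)\delta$ with farness parameter $\eps/12$, and pull farness back through Lemma~\ref{lem:far-continuous} using $\delta'+2a=(1+\eps')\delta$. The only nuance is that invoking Lemma~\ref{lem:straight} as a black box with $\alpha=\Theta(1/\eps')$ literally yields $O(\tloc\alpha^2)$-locality; to get the claimed $O(\tloc/\eps')$ one should redo the one-line computation, noting that the subsampled edge lengths are within a factor $2$ of each other and the ratio $\delta'/a=\Theta(1/\eps')$ enters only the additive term of the locality bound --- which is exactly the observation the paper itself asserts without proof.
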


A natural consequence of the above theorem is that for $\tloc$-straight curves, the number of queries is bounded in terms of $n$ and the aspect ratio of the input, which we define as follows.

\begin{definition}[Aspect ratio]
For two polygonal curves $P$ and $Q$ and a distance threshold $\delta>0$, we define the \emph{aspect ratio} 
\[\phi\coloneqq\frac{\max_{u,v\in P\cup Q}d(u,v)}{\min\{\delta,\min_{u,v\in P\cup Q}d(u,v)\}}.\]
\end{definition}

We know that each edge of $P$ and $Q$ has length at most $\max_{u,v\in P\cup Q}d(u,v)$ and it trivially holds that $\delta\leq\min\{\delta,\min_{u,v\in P\cup Q}d(u,v)\}$. Therefore, if $P$ has $n$ edges, we have that $\frac{\ell(P_{\eps'\delta})}{\eps'\delta}\leq\frac{n\cdot \phi}{\eps'}$. This leads to the following corollary. Note that the only input constraint is that the curves are $\tloc$-straight and the aspect ratio is polynomial in $n$. There is no constraint on the edge lengths.

\begin{corollary}\label{cor:final}
    Let $\delta>0$, $0<\eps<1$ and $\eps'>0$ be given. Let $P$ and $Q$ be $\tloc$-straight curves with $n$ and $m\in\mathcal{O}(n)$ vertices, and aspect ratio $\phi \in O(\poly(n))$. There exists a $(1+\eps')$-approximate Fréchet-tester w.r.t.\ the continuous Fréchet distance that performs at most 
    $\mathcal{O}\left(\left((\frac{\tloc}{\eps'})^3+(\frac{\tloc}{\eps'})^2\log\left\lceil\frac{n}{\eps'}\right\rceil\right) \frac{\log\log\frac{\tloc}{\eps'}}{\eps}\right)$ queries.
\end{corollary}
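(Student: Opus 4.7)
The plan is to invoke Theorem~\ref{thm:frechet-tester-continuous} as a black box and then simplify the logarithmic factor $\log\lceil\ell(P)/(\eps'\delta)\rceil$ using the aspect ratio assumption. Since $P$ and $Q$ are assumed to have the same length in the statement of Theorem~\ref{thm:frechet-tester-continuous}, I first observe that this equal-length assumption can be waived at the price of constant factors when $m \in \mathcal{O}(n)$: replace $\ell(P)$ throughout by $\max(\ell(P),\ell(Q))$ in the construction of the subsampled curves $P_a, Q_a$ and note that the subsampling granularity $a = \eps''\delta$ from the preceding discussion depends only on $\eps'$ and $\delta$, not on the individual curves. The subsampled curves $P_a$ and $Q_a$ remain $\tloc$-straight (as noted above Lemma~\ref{lem:far-continuous}) and the $(\eps,(1+\eps')\delta)$-far-to-$(\frac{\eps}{12},(1+2\eps'')\delta)$-far reduction of Lemma~\ref{lem:far-continuous} goes through.

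Next, I would bound $\ell(P)$ in terms of $n$ and the aspect ratio $\phi$. Each edge of $P$ has length at most $\max_{u,v \in P \cup Q} d(u,v)$, and by definition of $\phi$ we have $\max_{u,v\in P\cup Q} d(u,v) \le \phi \cdot \min\{\delta, \min_{u,v} d(u,v)\} \le \phi\cdot\delta$. Since $P$ has at most $n-1$ edges, this gives $\ell(P) \le n\cdot\phi\cdot\delta$, and therefore
\[
\left\lceil \frac{\ell(P)}{\eps'\delta} \right\rceil \le \left\lceil \frac{n\cdot\phi}{\eps'} \right\rceil.
\]
The same bound holds for $\ell(Q)$, so plugging the larger of the two curve lengths into the bound of Theorem~\ref{thm:frechet-tester-continuous} replaces $\log\lceil\ell(P)/(\eps'\delta)\rceil$ by $\log\lceil n\phi/\eps'\rceil$.

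Finally, I use the hypothesis $\phi \in \mathcal{O}(\mathrm{poly}(n))$: writing $\phi \le c\cdot n^k$ for constants $c,k$, we obtain $\log\lceil n\phi/\eps'\rceil \le \log\lceil c\cdot n^{k+1}/\eps'\rceil = \mathcal{O}(\log\lceil n/\eps'\rceil)$. Substituting this into the query bound of Theorem~\ref{thm:frechet-tester-continuous} yields exactly the expression claimed in the corollary.

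The proof is essentially bookkeeping; no genuine obstacle is expected beyond verifying that the $\ell(P)=\ell(Q)$ assumption in Theorem~\ref{thm:frechet-tester-continuous} is not actually needed in the analysis (it was stated for notational convenience) and that the logarithmic factor dominates the contributions from the two curve lengths when $m \in \mathcal{O}(n)$. The only mildly subtle point is that $\phi$ could in principle be smaller than $1$ or $\ell(P)/(\eps'\delta)$ could be smaller than $1$, but the ceiling in the logarithm handles this case trivially.
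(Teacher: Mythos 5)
Your proposal is correct and follows essentially the same route as the paper: bound $\ell(P)\leq n\cdot\max_{u,v\in P\cup Q}d(u,v)\leq n\phi\delta$ from the definition of the aspect ratio, substitute into the bound of Theorem~\ref{thm:frechet-tester-continuous}, and absorb $\phi\in O(\poly(n))$ into the logarithm. Your extra care about the $\ell(P)=\ell(Q)$ assumption and $m\in\mathcal{O}(n)$ is a reasonable addition that the paper leaves implicit.
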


\subsection{Testing the discrete Hausdorff distance}\label{sec:hausdorff}
For two curves $\firstcurve=\langle\vertfirstcurve_1,\ldots,\vertfirstcurve_{\numvertfirst}\rangle$ and $\secondcurve=\langle\vertsecondcurve_1,\ldots,\vertsecondcurve_{\numvertsecond}\rangle$, their \emph{discrete Hausdorff distance} is defined by 
\[\hd{\firstcurve}{\secondcurve}\coloneqq\max\{\overset{\rightarrow}{\mathcal{D}_{\mathcal{H}}}(\firstcurve,\secondcurve),\overset{\rightarrow}{\mathcal{D}_{\mathcal{H}}}(\secondcurve,\firstcurve)\},\]
where  $\overset{\rightarrow}{\mathcal{D}_{\mathcal{H}}}(\firstcurve,\secondcurve)\coloneqq\max_{i\in[n]}\min_{j\in[m]}\distance(\vertfirstcurve_i,\vertsecondcurve_j)$. 
For brevity, we simply call this the Hausdorff distance between $\firstcurve$ and $\secondcurve$.
Note that the Hausdorff distance between two curves is at most $\delta$ if and only if there are no columns or rows in their free space matrix $\fsm_{\delta}$ that only contain zeros.

\begin{definition}[$(\eps,\delta)$-far]
    Given two curves $\firstcurve$ and $\secondcurve$ consisting of $n$ and $m$ vertices each, we say that $\firstcurve$ and $\secondcurve$ are $(\eps,\delta)$-far from one another w.r.t.\ the Hausdorff distance if there exist a total of at most $\eps(n+m)$ barrier-columns and barrier-rows in their free space matrix with value $\delta$.
\end{definition}

\begin{definition}[Hausdorff-tester]
    Assume we are given two curves $\firstcurve$ and $\secondcurve$, and values $\delta>0$ and $\varepsilon\in(0,1)$. If the two curves have Hausdorff distance at most $\delta$, we must return \enquote{yes},
    if they are $(\varepsilon,\delta)$-far from each other w.r.t.\ the Hausdorff distance, the algorithm must return \enquote{no}, with probability at least $\succprob$.
\end{definition}

\begin{algorithm}
\caption{Hausdorff-tester($\fsm,\eps$)}
\label{algo:hausdorff-tester}
\begin{enumerate}
    \item Sample $\frac{2}{\eps}$ columns and $\frac{2}{\eps}$ rows uniformly at random  from $\fsm$.
    \item \textbf{If} one of them is a barrier-column or barrier-row \textbf{then} \textbf{return} \enquote{no}.
    \item \textbf{else} \textbf{return} \enquote{yes}.
\end{enumerate}
\end{algorithm}

\begin{theorem}\label{thm:hausdorff-tester}
    Algorithm~\ref{algo:hausdorff-tester} is a Hausdorff-tester that performs $\mathcal{O}(\frac{1}{\eps})$ queries.
\end{theorem}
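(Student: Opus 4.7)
The plan is a short probabilistic argument in the spirit of Lemma~\ref{lem:many-empty-rows-and-cols}.

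First I would establish one-sided correctness. If $\hd{P}{Q}\leq\delta$, then by the characterization of the Hausdorff distance through the free space matrix, every row and every column of $\fsm_\delta$ contains at least one zero-entry. Hence there is no barrier-row and no barrier-column, so the algorithm never enters the rejection branch and always returns \enquote{yes}.

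Next I would bound the failure probability when $P$ and $Q$ are $(\eps,\delta)$-far. Let $b_c$ and $b_r$ denote the numbers of barrier-columns and barrier-rows in $\fsm_\delta$. The $(\eps,\delta)$-far condition yields $b_c+b_r\geq \eps(n+m)$, and a pigeonhole argument then gives either $b_c/n\geq \eps$ or $b_r/m\geq \eps$; assume without loss of generality the former. The probability that a single uniformly sampled column is not a barrier is at most $1-\eps$, so the probability that none of the $2/\eps$ sampled columns is a barrier is at most $(1-\eps)^{2/\eps}\leq e^{-2}<\tfrac{1}{5}$. Therefore the algorithm returns \enquote{no} with probability at least $\tfrac{4}{5}$, meeting the required success probability.

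The query bound is immediate: each sampled row or column incurs exactly one oracle query to $\fsm_\delta$, so the total number of queries is $O(1/\eps)$.

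I do not anticipate any real obstacle. The only step that requires a moment of care is the pigeonhole conversion from the mixed-denominator bound $b_c+b_r\geq\eps(n+m)$ to a marginal barrier-density of at least $\eps$ on one side; once that is in place, the standard estimate $(1-\eps)^{1/\eps}\leq e^{-1}$ closes the argument.
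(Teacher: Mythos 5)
Your proposal is correct and follows essentially the same route as the paper's proof: the same one-sided correctness observation, the same pigeonhole split of $b_c+b_r\geq\eps(n+m)$ into a barrier-density of at least $\eps$ among columns or among rows, and the same estimate $(1-\eps)^{2/\eps}\leq e^{-2}<\frac{1}{5}$. No gaps.
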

\begin{proof}
    If the Hausdorff distance of $\firstcurve$ and $\secondcurve$ is at most $\delta$, the free space matrix $\fsm$ has no barrier-column or barrier-row. Hence, the algorithm always returns \enquote{yes} in this case.
    If $\firstcurve$ and $\secondcurve$ are $(\eps,\delta)$-far, there are at least $\eps(\numvertfirst+\numvertsecond)$ barrier-columns and barrier-rows. In that case, we either have at least $\eps\numvertfirst$ barrier-columns or at least $\eps\numvertsecond$ barrier-rows. W.l.o.g.\ we assume that there are at least $\eps\numvertfirst$ barrier-columns. When we sample one column, the probability that this is not a barrier-column is at most $1-\eps$. So the probability that the algorithm returns \enquote{yes} is at most $(1-\eps)^{2/\eps}\leq e^{-2}\leq\frac{1}{5}$. Hence, the probability that the algorithm returns \enquote{no} is at least $\frac{4}{5}$. 
\end{proof}

Note that we implicitly call a Hausdorff-tester in Algorithm~\ref{algo:frechet-tester-exact} when we test for barrier-columns and barrier-rows in the Fréchet-tester and hence the prior proof is very similar to the proof of Lemma~\ref{lem:many-empty-rows-and-cols}.

\subsection{A simple and fast \texorpdfstring{$t$}{t}-approximate Fréchet-tester}\label{subsec:apx-frechet-tester}

The goal of this section is to obtain a Fréchet-tester from the Hausdorff-tester from the previous section. 
While the algorithm is very simple (see Algorithm~\ref{algo:hausdorff-tester}), the analysis turns out to be more technically involved.
Since the proof needs the continuous Fréchet distance as well as the continuous Hausdorff distance, we introduce the latter here.
We consider $\firstcurve$ and $\secondcurve$ to be polygonal curves, i.e. the consecutive points $p_i,p_{i+1}$ are connected by the straight line segment from $p_i$ to $p_{i+1}$. Their \emph{continuous Hausdorff distance} is then defined by
\[\delta_{\mathcal{H}}(\firstcurve,\secondcurve)\coloneqq\max\{\overset{\rightarrow}{\delta_{\mathcal{H}}}(\firstcurve,\secondcurve),\overset{\rightarrow}{\delta_{\mathcal{H}}}(\secondcurve,\firstcurve)\},\]
where  $\overset{\rightarrow}{\delta_{\mathcal{H}}}(\firstcurve,\secondcurve)\coloneqq\max_{p\in\firstcurve}\min_{q\in\secondcurve}d(\vertfirstcurve,\vertsecondcurve)$. 
Note that it holds
$\df{P}{Q}\leq\delta_{\mathcal{F}}(P,Q)+E$ and $\delta_{\mathcal{H}}(P,Q)\leq\hd{P}{Q}+\frac{E}{2}$, where $E$ is the longest edge in $P$ and $Q$.

In~\cite{alt2004comparison} it is shown that $\delta_F\left(\firstcurve,\secondcurve\right)\leq(\kappa+1)\delta_H\left(\firstcurve,\secondcurve\right)$, if $\firstcurve$ and $\secondcurve$ are $\kappa$-bounded and $\max(d(p_1,q_1),d(p_n,q_m))\leq\delta_H\left(\firstcurve,\secondcurve\right)$.
We will show that a similar statement holds for the discrete Hausdorff distance and the discrete Fréchet distance.  

\begin{lemma}\label{lem:hausdorff-vs-frechet}
    For any pair of $\kappa$-straight polygonal curves $\firstcurve,\secondcurve$, we have
    \[\df{\firstcurve}{\secondcurve}\leq\left(\frac{3}{2}\kappa+\frac{5}{2}\right)\hd{\firstcurve}{\secondcurve},\]
    if $\max\{d(\vertfirstcurve_1,\vertsecondcurve_1),d(\vertfirstcurve_{\numvertfirst},\vertsecondcurve_{\numvertsecond})\}\leq\hd{\firstcurve}{\secondcurve}$ and all edges in $\firstcurve$ and $\secondcurve$ have length at most $\hd{\firstcurve}{\secondcurve}$.
\end{lemma}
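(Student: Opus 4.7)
The plan is to deduce the discrete statement from the continuous analogue of Alt, Knauer and Wenk, which asserts $\delta_{\mathcal{F}}(P,Q)\le(\kappa+1)\delta_{\mathcal{H}}(P,Q)$ whenever $P,Q$ are $\kappa$-bounded polygonal curves with endpoints within $\delta_{\mathcal{H}}$, and then converting back to the discrete distances using the two standard estimates recalled just before the lemma statement:
\[\df{P}{Q}\le\delta_{\mathcal{F}}(P,Q)+E\quad\text{and}\quad\delta_{\mathcal{H}}(P,Q)\le\hd{P}{Q}+E/2,\]
where $E$ denotes the longest edge of $P\cup Q$; by hypothesis $E\le\hd{P}{Q}$.

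First I would check the two prerequisites of the Alt--Knauer--Wenk bound. The endpoint hypothesis follows from $d(p_1,q_1),d(p_{\numvertfirst},q_{\numvertsecond})\le\hd{P}{Q}\le\delta_{\mathcal{H}}(P,Q)+E/2$. For the $\kappa$-boundedness of the continuous polygonal curves, note that a $\kappa$-straight vertex sequence already satisfies the boundedness condition at the level of vertices (using $\ell(P[p_i,p_k])+\ell(P[p_k,p_j])=\ell(P[p_i,p_j])\le\kappa\, d(p_i,p_j)$ combined with $d\le\ell$), so any intermediate vertex $p_k$ lies in a ball of radius $(\kappa/2)\,d(p_i,p_j)$ around $p_i$ or around $p_j$; for interior points of edges a triangle-inequality argument using the edge length $E$ adds only an additive $O(E)$ correction, which is absorbed into the final bound because $E\le\hd{P}{Q}$.

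Chaining the conversion estimates and the continuous bound then gives
\[\df{P}{Q}\le\delta_{\mathcal{F}}(P,Q)+E\le(\kappa+1)\,\delta_{\mathcal{H}}(P,Q)+E\le(\kappa+1)\bigl(\hd{P}{Q}+\tfrac{E}{2}\bigr)+E,\]
and substituting $E\le\hd{P}{Q}$ collapses the right-hand side to $\bigl[\tfrac{3}{2}(\kappa+1)+1\bigr]\hd{P}{Q}=\bigl(\tfrac{3\kappa}{2}+\tfrac{5}{2}\bigr)\hd{P}{Q}$, exactly as claimed.

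I expect the main obstacle to be in rigorously bridging the vertex-level hypothesis with the Alt--Knauer--Wenk $\kappa$-bounded premise: the cited bound is stated for arbitrary pairs of points on the curve, and if two such points lie close in space but on different edges, the intermediate subcurve must be controlled without any free parameter left, so the edge-length assumption $E\le\hd{P}{Q}$ has to be spent carefully there. Should the reduction prove too fragile to recover the exact constant, a natural fallback is to adapt the Alt--Knauer--Wenk coupling argument directly to the discrete setting: build a monotone coupling by greedily advancing the lagging curve and using vertex-level $\kappa$-straightness between closest-vertex images $\phi(i)$ and $\phi(i+1)$ (where $q_{\phi(i)}$ is a closest vertex of $Q$ to $p_i$) to bound the distance $d(p_i,q_k)$ for each intermediate $k$ by $\hd{P}{Q}+\tfrac{1}{2}\ell(Q[\phi(i),\phi(i+1)])$, which the triangle inequality and $\kappa$-straightness bound by the same $(\tfrac{3\kappa}{2}+\tfrac{5}{2})\hd{P}{Q}$.
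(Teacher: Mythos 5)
Your proof is essentially identical to the paper's: the same chain $\df{P}{Q}\le\delta_{\mathcal{F}}(P,Q)+E\le(\kappa+1)\delta_{\mathcal{H}}(P,Q)+E\le(\kappa+1)\hd{P}{Q}+\frac{\kappa+3}{2}E$, followed by substituting $E\le\hd{P}{Q}$ to get the constant $\frac{3}{2}\kappa+\frac{5}{2}$. The subtlety you flag---that the paper's vertex-level $\kappa$-straightness must be upgraded to the continuous $\kappa$-boundedness premise of the cited Alt--Knauer--Wenk bound---is not addressed in the paper at all (it applies the cited result directly), so your extra care there is a refinement rather than a deviation.
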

\begin{proof}
Let $\mu$ be the length of the longest edge in $\firstcurve$ and $\secondcurve$. Then we have the following inequalities using the result from \cite{alt2004comparison}:
\[\df{\firstcurve}{\secondcurve}\leq\delta_{\mathcal{F}}(\firstcurve,\secondcurve)+\mu\leq(\kappa+1)\delta_{\mathcal{H}}(\firstcurve,\secondcurve)+\mu
\leq(\kappa+1)\hd{\firstcurve}{\secondcurve}+\frac{\kappa+3}{2}\cdot \mu.\]
By assumption, we have $\mu\leq\hd{\firstcurve}{\secondcurve}$ and thus the lemma is implied.
\end{proof}

We also need a statement that relates the property of being $(\eps,\delta)$-far under the discrete Hausdorff distance to 
the property of being $(\eps,\delta)$-far
under the discrete Fréchet distance. This will be done in Lemma~\ref{lem:far-frechet-hausdorff}. The lemma needs the following two helper lemmas that bound the distance of points on $P$ and $Q$ if there are zero-entries close to their indices in the free space matrix.

\begin{lemma}\label{lem:one-row}
    Let $P$ and $Q$ be $\kappa$-straight. If $\fsm_{\delta}[i,j]=0=\fsm_{\delta}[i,j']$, we have $d(p_i,q_s)\leq(\kappa+1)\delta$ for all $j\leq s\leq j'$. Similarly, if $\fsm_{\delta}[i,j]=0=\fsm_{\delta}[i',j]$, we have $d(p_k,q_j)\leq(\kappa+1)\delta$ for all $i\leq k\leq i'$.
\end{lemma}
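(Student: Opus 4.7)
The plan is to prove the row case (the column case is completely symmetric, obtained by swapping the roles of $P$ and $Q$). The setup gives us $d(p_i,q_j) \leq \delta$ and $d(p_i,q_{j'}) \leq \delta$, and for an arbitrary index $s$ with $j \leq s \leq j'$ we want to bound $d(p_i,q_s)$.

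My first step is to apply the triangle inequality in two ways: either $d(p_i,q_s) \leq d(p_i,q_j) + d(q_j,q_s) \leq \delta + d(q_j,q_s)$, or $d(p_i,q_s) \leq \delta + d(q_{j'},q_s)$. Hence $d(p_i,q_s) \leq \delta + \min\{d(q_j,q_s), d(q_{j'},q_s)\}$. The key is to exploit the freedom to use the closer of the two endpoints $q_j, q_{j'}$ on the subcurve, which ultimately saves a factor of $2$.

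Next I would bound each of $d(q_j,q_s)$ and $d(q_{j'},q_s)$ by the corresponding subcurve length $\ell(Q[j,s])$ and $\ell(Q[s,j'])$. Since these two lengths sum to $\ell(Q[j,j'])$, at least one of them is at most $\ell(Q[j,j'])/2$. By $\kappa$-straightness of $Q$ we have $\ell(Q[j,j']) \leq \kappa \cdot d(q_j,q_{j'})$, and a further triangle inequality through $p_i$ gives $d(q_j,q_{j'}) \leq 2\delta$. Chaining these,
\[
\min\{d(q_j,q_s), d(q_{j'},q_s)\} \;\leq\; \tfrac{1}{2}\ell(Q[j,j']) \;\leq\; \tfrac{1}{2}\kappa \cdot 2\delta \;=\; \kappa\delta,
\]
so $d(p_i,q_s) \leq (\kappa+1)\delta$, as desired.

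The main (minor) obstacle is simply the constant: a naive one-sided triangle inequality yields $(2\kappa+1)\delta$, and one needs the observation that at least one of $\ell(Q[j,s])$, $\ell(Q[s,j'])$ is at most half of the total. After that, the column case follows from the analogous argument with the roles of $P$ and $Q$ reversed, invoking $\kappa$-straightness of $P$ instead of $Q$.
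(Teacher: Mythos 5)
Your proof is correct and follows essentially the same route as the paper's: two applications of the triangle inequality through $p_i$, bounding $\min\{d(q_j,q_s),d(q_{j'},q_s)\}$ by $\min\{\ell(Q[j,s]),\ell(Q[s,j'])\}\leq\tfrac{1}{2}\ell(Q[j,j'])\leq\kappa\delta$ via $\kappa$-straightness. The only difference is that you spell out the ``the minimum of the two subcurve lengths is at most half the total'' step, which the paper leaves implicit.
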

\begin{proof}
    We only prove the first statement. The second statement can be proven analogously. For a visualization of the proof we refer to Figure~\ref{fig:apx-hausdorff}.
    It is given that $d(p_i,q_j)\leq\delta$ and $d(p_i,q_{j'})\leq\delta$. So by triangle inequality it holds $d(q_j,q_{j'})\leq2\delta$ and by $\kappa$-straightness, we have $\ell(Q[j,j'])\leq2\kappa\delta$. Let $j\leq s\leq j'$. Then, we have 
    \[\min\{d(q_j,q_s),d(q_{j'},q_s)\}\leq\min\{\ell(Q[j,s]),\ell(Q[s,j'])\}\leq\kappa\delta.\]
    Together with triangle inequality this implies    \[d(p_i,q_s)\leq\min\{d(p_i,q_j)+d(q_j,q_s),d(p_i,q_{j'})+d(q_{j'},q_s)\}\leq(\kappa+1)\delta.\qedhere\]
\end{proof}

\begin{figure}
    \centering
    \includegraphics[width=0.2\linewidth, page=1]{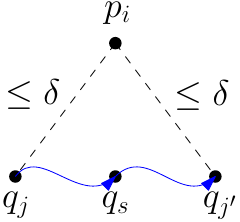}
    \hspace{5em}
    \includegraphics[width=0.25\linewidth, page=2]{dist-in-curves.pdf}
    \caption{Visualizations of the proofs of Lemma~\ref{lem:one-row} (left) and Lemma~\ref{lem:apx-hausdorff} (right).}
    \label{fig:apx-hausdorff}
\end{figure}

\begin{lemma}\label{lem:apx-hausdorff}
    Let $P$ and $Q$ be $\kappa$-straight and all edges in $P$ and $Q$ have length at most $\delta$. If $\fsm_{\delta}[i,j]=0=\fsm_{\delta}[i+1,j']$, we have $d(p_k,q_{\ell})\leq(\frac{3}{2}\kappa+2)\delta$ for all $k\in\{i,i+1\}$ and $j\leq\ell\leq j'$ or $j'\leq\ell\leq j$. Similarly, if $\fsm_{\delta}[i,j]=0=\fsm_{\delta}[i',j+1]$, we have $d(p_k,q_{\ell})\leq(\frac{3}{2}\kappa+2)\delta$ for all $\ell\in\{j,j+1\}$ and $i\leq k\leq i'$ or $i'\leq k\leq i$.
\end{lemma}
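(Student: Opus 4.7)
The plan is a direct triangle-inequality argument, structurally parallel to the proof of Lemma~\ref{lem:one-row}, but with one extra ``step across the edge'' $p_i p_{i+1}$. The key intuition is that both $p_i$ and $p_{i+1}$ are within $2\delta$ of both $q_j$ and $q_{j'}$, and $\kappa$-straightness of $Q$ forces every intermediate vertex $q_\ell$ to be near one of the two endpoints $q_j$ or $q_{j'}$.

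First I would establish four baseline distances. The hypotheses give $d(p_i, q_j) \leq \delta$ and $d(p_{i+1}, q_{j'}) \leq \delta$ directly. Combining these with the edge-length bound $d(p_i, p_{i+1}) \leq \delta$ and the triangle inequality yields $d(p_i, q_{j'}) \leq 2\delta$ and $d(p_{i+1}, q_j) \leq 2\delta$. So for every $k \in \{i, i+1\}$ and $m \in \{j, j'\}$ we have $d(p_k, q_m) \leq 2\delta$.

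Second, I would bound the length of the subcurve $Q[j, j']$ (assume WLOG $j \leq j'$). Chaining the baseline distances through $p_i$ and $p_{i+1}$ gives $d(q_j, q_{j'}) \leq 3\delta$, and then $\kappa$-straightness of $Q$ yields $\ell(Q[j, j']) \leq 3\kappa\delta$. For any index $\ell$ with $j \leq \ell \leq j'$, splitting this subcurve at $q_\ell$ shows that at least one of $\ell(Q[j, \ell])$, $\ell(Q[\ell, j'])$ is at most $\tfrac{3}{2}\kappa\delta$, so there exists $m \in \{j, j'\}$ with $d(q_m, q_\ell) \leq \tfrac{3}{2}\kappa\delta$. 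A final triangle inequality through this nearer endpoint gives, for any $k \in \{i, i+1\}$,
\[ d(p_k, q_\ell) \;\leq\; d(p_k, q_m) + d(q_m, q_\ell) \;\leq\; 2\delta + \tfrac{3}{2}\kappa\delta \;=\; \bigl(\tfrac{3}{2}\kappa + 2\bigr)\delta, \]
as claimed. The ``similarly'' statement with $\fsm_{\delta}[i,j] = 0 = \fsm_{\delta}[i',j+1]$ follows by swapping the roles of $P$ and $Q$ throughout the argument.

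I do not expect any serious obstacle: the proof is a careful accounting of constants using triangle inequality and a single invocation of $\kappa$-straightness, directly generalizing Lemma~\ref{lem:one-row}. The only mild subtlety is the appearance of $\tfrac{3}{2}\kappa$ rather than $\kappa$ or $3\kappa$, which arises precisely because we are free to route the triangle inequality through whichever of $q_j$, $q_{j'}$ is closer to $q_\ell$ along $Q$, and this halves the length bound on the relevant subcurve.
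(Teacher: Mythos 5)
Your proposal is correct and follows essentially the same argument as the paper: establish $d(p_k,q_m)\leq 2\delta$ for all $k\in\{i,i+1\}$, $m\in\{j,j'\}$ via the edge-length bound, use $\kappa$-straightness to get $\ell(Q[j,j'])\leq 3\kappa\delta$, and route the final triangle inequality through whichever of $q_j,q_{j'}$ is within $\tfrac{3}{2}\kappa\delta$ of $q_\ell$ along $Q$. The constant accounting matches the paper's exactly.
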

\begin{proof}
    We only prove the first statement. The second statement can be proven analogously. For a visualization of the proof we refer to Figure~\ref{fig:apx-hausdorff}.
    It is given that $d(p_i,q_j)\leq\delta$, $d(p_{i+1},q_{j'})\leq\delta$ and $d(p_i,p_{i+1})\leq\delta$. So by triangle inequality it holds $d(q_j,q_{j'})\leq3\delta$ and by $\kappa$-straightness, we have $\ell(Q[j,j'])\leq3\kappa\delta$. Let $j\leq s\leq j'$. Then, we have 
    \[\min\{d(q_j,q_s),d(q_{j'},q_s)\}\leq\min\{\ell(Q[j,s]),\ell(Q[s,j'])\}\leq\frac{3}{2}\kappa\delta.\]
    Together with triangle inequality this implies    \[d(p_i,q_s)\leq\min\{d(p_i,q_j)+d(q_j,q_s),d(p_i,p_{i+1})+d(p_{i+1},q_{j'})+d(q_{j'},q_s)\}\leq(\frac{3}{2}\kappa+2)\delta\]
    and
    \[d(p_{i+1},q_s)\leq\min\{d(p_{i+1},p_i)+d(p_i,q_j)+d(q_j,q_s),d(p_{i+1},q_{j'})+d(q_{j'},q_s)\}\leq(\frac{3}{2}\kappa+2)\delta.\qedhere\]
\end{proof}

The next lemma shows that if two $\kappa$-straight curves $P$ and $Q$ are $(\eps,\delta)$-far w.r.t.\ the discrete Fréchet distance, then they are also $(\mathcal{O}(\frac{\eps}{\kappa}),\mathcal{O}(\kappa\delta))$-far w.r.t.\ the discrete Hausdorff distance. 
\begin{lemma}\label{lem:far-frechet-hausdorff}
    Suppose $P$ and $Q$ are $\kappa$-straight with $\n$ and $\numvertsecond$ vertices, all edges in $P$ and $Q$ have length at most $\delta$ and the free space matrix $\fsm_{\delta}$ is $\tloc^*$-local.
    If there are at most $\eps'(\n+\numvertsecond)$ barrier-columns and barrier-rows in $\fsm_{\delta}$ and $\fsm_{\delta}[1,1]=0=\fsm_{\delta}[\n,\numvertsecond]$, the minimum cost coupling path from $(1,1)$ to $(\n,\numvertsecond)$ in $\fsm_{(\frac{3}{2}\kappa+2)\delta}$ has cost at most $4\tloc^*\eps'(\n+\numvertsecond)$.
\end{lemma}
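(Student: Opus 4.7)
The plan is to construct an explicit monotone Manhattan path $\pi$ from $(1,1)$ to $(\n,\numvertsecond)$ in $\fsm_K$, where $K=(\tfrac{3}{2}\kappa+2)\delta$, and to bound its cost by charging each one-entry of $\fsm_K$ visited by $\pi$ to a nearby barrier. The main ingredients are Lemma~\ref{lem:one-row} and Lemma~\ref{lem:apx-hausdorff}: the former shows that in every non-barrier column $i$ the entire vertical strip $\{i\}\times[j_i^{\mathrm{low}},j_i^{\mathrm{high}}]$ between the lowest and highest zero-entries of $\fsm_\delta$ in that column consists of zero-entries of $\fsm_{(\kappa+1)\delta}\subseteq\fsm_K$, with a symmetric statement for non-barrier rows; the latter provides safe two-column bridges in $\fsm_K$ between zero-entries of $\fsm_\delta$ in consecutive columns (and symmetrically, safe two-row bridges between zero-entries in consecutive rows). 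Observation~\ref{obs:ones-in-same-row-close} caps each such safe strip at width $2\tloc^*+1$.

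First I would fix, for every non-barrier column $i$, a witness zero-row $w(i):=j_i^{\mathrm{high}}$. By $\tloc^*$-locality of $\fsm_\delta$ applied to $(i,w(i))$ and $(i',w(i'))$ for consecutive non-barrier columns $i<i'$, one has $|w(i)-w(i')|\leq \tloc^*(2+i'-i)$, so the witness rows are monotone up to deviations proportional to the length $i'-i-1$ of the intervening barrier run. I then build $\pi$ left-to-right while maintaining a current row $\rho$: at a non-barrier column $i$ do vertical moves in column $i$ up to $\max(\rho,w(i))$, then a single horizontal step; at a barrier column perform only the horizontal step, leaving $\rho$ unchanged. Monotonicity holds by construction, and after a final adjustment near $(\n,\numvertsecond)$ the path terminates correctly.

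The cost analysis splits one-entries of $\fsm_K$ visited by $\pi$ into (a) cells inside a barrier column, which contribute at most one each since $\pi$ takes only a single step at each barrier column, and (b) cells inside a non-barrier column $i$ but outside its safe strip $[j_i^{\mathrm{low}},j_i^{\mathrm{high}}]$, which can occur only when monotonicity has pushed $\rho$ above $j_i^{\mathrm{high}}$ or when the entry row is below $j_i^{\mathrm{low}}$. The cumulative vertical overshoot across a run of barriers between consecutive non-barrier columns $i<i'$ is at most $|w(i)-w(i')|\leq \tloc^*(i'-i+2)$, so it amortizes to $O(\tloc^*)$ per barrier column. Barrier rows are handled symmetrically by detouring through a non-barrier row's safe horizontal strip via the bridges of Lemma~\ref{lem:apx-hausdorff} whenever the column-only accounting falls short. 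Summing both contributions yields cost at most $4\tloc^*(|B_c|+|B_r|)\leq 4\tloc^*\eps'(\n+\numvertsecond)$.

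The main obstacle will be making the charging airtight in the presence of long runs of barriers: one has to show that no barrier is charged more than $4\tloc^*$ times, that column-side and row-side deviations are absorbed simultaneously, and that whenever the running-max construction drives $\pi$ far out of some column's safe strip one can reroute through a nearby non-barrier row's safe strip (and safely return via another two-column bridge) without incurring additional cost. The factor $4$ rather than $2$ in the final bound reflects exactly this symmetric charging between column- and row-directed deviations.
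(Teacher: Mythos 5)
Your construction of the path is essentially the paper's: tracking the running maximum of the highest zero-row over non-barrier columns reproduces the paper's function $h(i)$ and its ``critical values'', and the resulting staircase is the same coupling. The problem is in the cost accounting, and it is precisely the part you defer to ``the main obstacle''. Your amortization claims that the vertical overshoot between consecutive non-barrier columns $i<i'$ is at most $\tloc^*(i'-i+2)$ and therefore ``amortizes to $O(\tloc^*)$ per barrier column''. This breaks when $i'=i+1$: locality still permits a jump of up to $3\tloc^*$ rows, but there are zero barrier columns in between to charge it to, so the charging scheme as stated does not close. The resolution (which is how the paper argues) is not an amortization at all in this case: when the column preceding a critical column is \emph{not} a barrier column, Lemma~\ref{lem:apx-hausdorff} applied to a zero-entry of $\fsm_{\delta}$ in that preceding column and the new highest zero-entry shows that every cell of the vertical segment is a genuine zero-entry of $\fsm_{(\frac{3}{2}\kappa+2)\delta}$, so it contributes nothing. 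Cost is incurred only across runs that contain at least one barrier column, and there the bound $\tloc^*(3+(\text{run length}-1))$ per run sums to $4\tloc^* b_c$ because the additive $3\tloc^*$ is absorbed by the run's first barrier. Your proposal cites Lemma~\ref{lem:apx-hausdorff} as an ingredient but never deploys it in the column-side accounting, which is where it is indispensable.

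The second gap is the row side. The cells you must still control are the ones on the horizontal runs of the path (including the single cell $(i,\rho)$ when the running max has overshot column $i$'s safe strip); these cannot be charged to barrier columns at all. Your plan to ``detour through a non-barrier row's safe horizontal strip'' changes the path, which both threatens monotonicity and invalidates the column-side accounting you have already done. The paper instead keeps the path fixed and analyzes each horizontal run at row $h(i_a)$ by a case split on whether row $h(i_a)+1$ is a barrier row: if not, the row-version of Lemma~\ref{lem:apx-hausdorff} shows the whole run consists of zero-entries of $\fsm_{(\frac{3}{2}\kappa+2)\delta}$; if so, $\tloc^*$-locality bounds the run length by $\tloc^*(2+\text{number of consecutive barrier rows}+1)$ and the cost is charged to that barrier-row run, summing to $4\tloc^* b_r$. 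Until you replace the detour idea with an analysis of this kind and fix the adjacent-column case above, the proof is a plan rather than an argument; the two fixes together are exactly where the factor $4\tloc^*$ comes from.
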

\begin{proof}
    For a visualization of this proof, we refer to Figures~\ref{fig:coupling-fr-hd} and~\ref{fig:cases-hd-fr}.
    Define $\fsm\coloneqq\fsm_{\delta}$ and $\fsm'\coloneqq\fsm_{(\frac{3}{2}\kappa+2)\delta}$. Denote by $b_r$ the number of barrier-rows and by $b_c$ the number of barrier-columns in $\fsm$. Note that $b_r+b_c\leq\eps'(\n+\numvertsecond)$.
    We first define a coupling $\optcoupling$ and show that the corresponding coupling path visits at most $4\tloc^*(b_r+b_c)$ one-entries in $\fsm'$.

    Define $h(i)\coloneqq\max\{j\colon\exists i'\leq i\text{ with }\fsm[i',j]=0\}$. The value $h(i)$ represents the row with the \enquote{highest} zero-entry that exists to the left of $i$ in $\fsm$. We call $i$ a \emph{critical value} if $i=1$ or $h(i)\neq h(i-1)$ for $i>1$. We observe that $\fsm[i,h(i)]=0$ if $i$ is a critical value.
    Let $1=i_1<\ldots<i_k\leq\n$ be the critical values. Note that $h(i_k)=\numvertsecond$ since $\fsm[\n,\numvertsecond]=0$.
    We now construct the coupling $\optcoupling$ to visit all critical values $(i,h(i))$ and connect them via paths that go straight to the right from $(i_a,h(i_a))$ to $(i_{a+1},h(i_a))$ and then up to $(i_{a+1},h(i_{a+1}))$. More formally, we first take the sequence $(i_1,h(i_1)),\ldots,(i_k,h(i_k))$ to be $\optcoupling$. This is not necessarily a coupling yet. In the beginning we add the sequence $(1,1),\ldots,(1,h(i_1)-1)$, in the end we add the sequence $(i_k+1,h(i_k)),\ldots,(\n,h(i_k))$ and 
    in between $(i_a,h(i_a))$ and $(i_{a+1},h(i_{a+1}))$ we add the sequence
    \[(i_a+1,h(i_a)),\ldots,(i_{a+1},h(i_a)),\ldots,(i_{a+1},h(i_{a+1}-1).\]
    For a visualization, we refer to Figure~\ref{fig:coupling-fr-hd}.
    \begin{figure}
        \centering
        \includegraphics[width=0.35\linewidth]{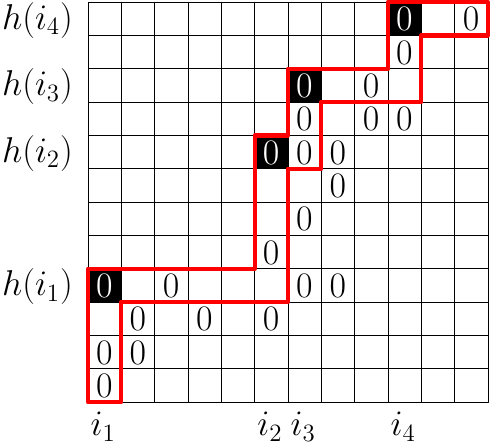}
        \caption{Visualization of the proof of Lemma~\ref{lem:far-frechet-hausdorff}: The zero-entries $(i_a,(h_{i_a}))$ for the critical values $i_a$ ($a\in\{1,2,3,4\}$) are depicted in black boxes. The coupling $\optcoupling$ is visualized in red.}
        \label{fig:coupling-fr-hd}
    \end{figure}
    We now show that this coupling satisfies the desired properties.
    Let $(i_a,h(i_a))$ be a critical tuple for some $1\leq a\leq k$.
    We first want to look at the entries of $\optcoupling$ below $(i_a,h(i_a))$. If $a=1$, we can apply Lemma~\ref{lem:one-row} to see that $d(p_1,q_j)\leq(\kappa+1)\delta$ for all $1\leq j\leq h(i_1)$ since $\fsm[1,1]=0=\fsm[1,h(i_1)]$ and hence, they are all zero-entries in $\fsm'$.
    So, consider $a>1$. For this, we distinguish two cases. Either column $i_a-1$ is a barrier-column in $\fsm$ or it is not. These cases are depicted in the upper row of Figure~\ref{fig:cases-hd-fr}.

    \begin{figure}[t]
    \centering
    \includegraphics[width=0.4\linewidth, page=3]{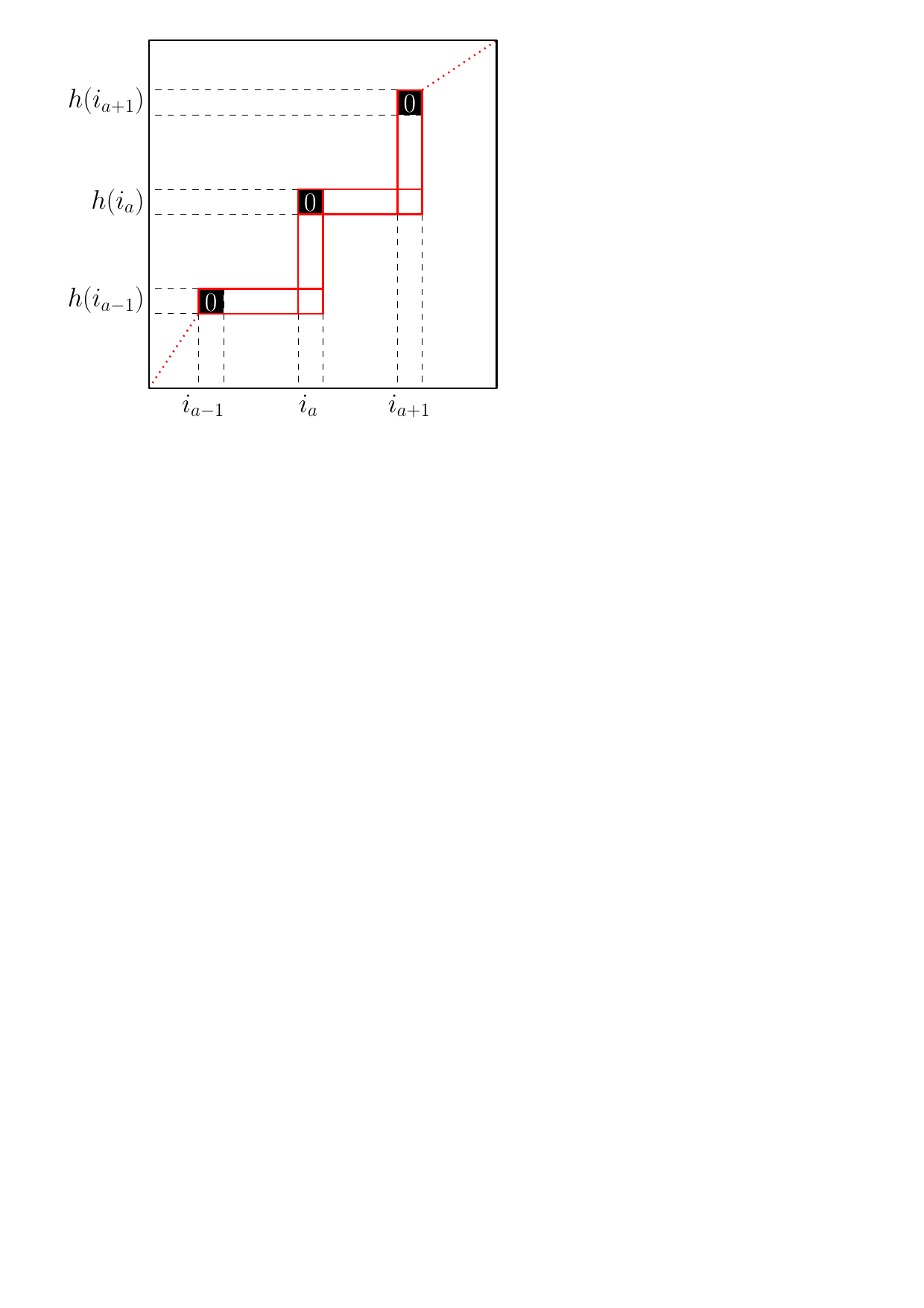}
    \hspace{1em}
    \includegraphics[width=0.4\linewidth, page=5]{cases-hausdorff-frechet.pdf}
    
    \vspace{1em}
    \includegraphics[width=0.4\linewidth, page=2]{cases-hausdorff-frechet.pdf}
    \hspace{1em}
    \includegraphics[width=0.4\linewidth, page=4]{cases-hausdorff-frechet.pdf}
    \caption{Visualization of the proof of Lemma~\ref{lem:far-frechet-hausdorff}: From upper left to lower right: The settings when $i_a-1$ is no barrier-column, $i_a-1$ is a barrier-column, $h(i_a)+1$ is no barrier-row and $h(i_a)+1$ is a barrier-row.
    Zero-entries of the critical values are depicted as black boxes. Cells shaded in pink are zero-entries in $\fsm'$. The gray columns and rows are barrier-columns and barrier-rows in $\fsm$.}
    \label{fig:cases-hd-fr}
\end{figure}

    If $i_a-1$ is not a barrier-column in $\fsm$, there is a zero-entry $(i_a-1,j)$ in $\fsm$ with $j\leq h(i_a-1)=h(i_{a-1})$. 
    Since any tuple $(i_a,\ell)\in\optcoupling$ satisfies $h(i_{a-1})\leq\ell\leq h(i_a)$,
    we can apply Lemma~\ref{lem:apx-hausdorff} to $(i_a-1,j)$ and $(i_a,h(i_a))$ to see that $d(p_{i_a},q_{\ell})\leq(\frac{3}{2}\kappa+2)\delta$ for all $(i_a,\ell)\in\optcoupling$. So these entries are zero-entries in $\fsm'$.

    If $i_a-1$ is a barrier-column in $\fsm$, let $c(i_a)<i_a-1$ be the largest index such that $c(i_a)$ is not a barrier-column in $\fsm$. Note that $i_{a-1}\leq c(i_a)$. So we know that there are at least $i_a-c(i_a)-1$ barrier-columns between $i_{a-1}$ and $i_a$ in $\fsm$. Let $(c(i_a),j')$ be a zero-entry in $\fsm$. By definition of $h(i_{a-1})$, we have $j'\leq h(i_{a-1})$. We can bound $h(i_a)-h(i_{a-1})\leq h(i_a)-j'\leq\tloc^*(2+i_a-c(i_a))$ with $\tloc^*$-locality. So we know that there are at most $\tloc^*(3+(i_a-c(i_a)-1))$ one-entries of $\fsm'$ below $(i_a,h(i_a))$ in $\optcoupling$. 
    Let $I_c$ be the set of critical values $i$ such that $i-1$ is a barrier-column in $\fsm$. Then, the number of one-entries of $\fsm'$ in $\optcoupling$ that are below an entry $(i,h(i))$ with $i\in I_c$ is at most
    \[\sum_{i\in I_c}\tloc^*(3+(i-c(i)-1))\leq 3\tloc^* b_c+\sum_{i\in I_c}\tloc^*(i-c(i)-1)\leq3\tloc^* b_c+\tloc^* b_c=4\tloc^* b_c.\]
    The first inequality holds because $\abs{I_c}\leq b_c$ and the second inequality holds because the sum counts pairwise different barrier-columns and there is at most $b_c$ of them.

    Now we look at the entries in $\optcoupling$ to the right of $(i_a,h(i_a))$. 
    If $a=k$, we can apply Lemma~\ref{lem:one-row} to see that $d(p_i,q_{\numvertsecond})\leq(\kappa+1)\delta<(\frac{3}{2}\kappa+2)\delta$ for all $i_k\leq i\leq\n$ since $\fsm[i_k,\numvertsecond]=0=\fsm[\n,\numvertsecond]$. So these entries are zero-entries in $\fsm'$. Consider $a<k$.
    As above, we distinguish the cases that $h(i_a)+1$ is a barrier-row in $\fsm$ or that it is not. These cases are depicted in the lower row of Figure~\ref{fig:cases-hd-fr}.

    If $h(i_a)+1$ is not a barrier-row in $\fsm$, there exists a zero-entry $(i,h(i_a)+1)$ in $\fsm$ with $i\geq i_{a+1}$ by the definition of $h(i_{a+1})$.
    Since any tuple $(k,h(i_a))\in\optcoupling$ satisfies $i_a\leq k\leq i_{a+1}$, we can apply Lemma~\ref{lem:apx-hausdorff} on $(i_a,h(i_a))$ and $(i,h(i_a)+1$ that $d(p_k,q_{h(i_a)})\leq(\frac{3}{2}\kappa+2)\delta$ for all $(k,h(i_a))\in\optcoupling$.
    So these entries are zero-entries in $\fsm'$.

    If $h(i_a)+1$ is a barrier-row in $\fsm$, let $r(i_a)> h(i_a)+1$ be the smallest index such that row $r(i_a)$ is not a barrier-row in $\fsm$. Let $(i',r(i_a))$ be a zero-entry in $\fsm$. By the definition of $h(i_{a+1})$, we have $i'\geq i_{a+1}$. So we know that there are at least $r(i_a)-h(i_a)-1$ barrier-rows in $\fsm$ between $h(i_a)$ and $h(i_{a+1})$. We can also bound $i_{a+1}-i_a\leq i'-i_a\leq\tloc^*(2+r(i_a)-i_a)$ using $\tloc^*$-locality. So we know that there are at most $\tloc^*(3+(r(i_a)-i_a-1))$ one-entries in $\fsm'$ to the right of $(i_a,h(i_a))$ in $\optcoupling$. 
    Let $I_r$ be the set of critical values $i$ such that $h(i)+1$ is a barrier-row in $\fsm$. Then, the number of entries one-entries to the right of any $(i,h(i))$ with $i\in I_r$ that $\optcoupling$ visits in $\fsm'$ is at most
    \[\sum_{i\in I_r}\tloc^*(3+(r(i)-i-1))\leq 3\tloc^* b_r+\sum_{i\in I_r}\tloc^*(r(i)-i-1)\leq3\tloc^* b_r+\tloc^* b_r=4\tloc^* b_r.\]
    The first inequality holds because $\abs{I_r}\leq b_r$ and the second inequality holds because the sum counts pairwise different barrier-rows and there is at most $b_r$ of them.
    So in total, the number of one-entries that $\optcoupling$ visits in $\fsm'$ is at most $4\tloc^*(b_r+b_c)\leq4\tloc^*\eps'(\n+\numvertsecond)$.
\end{proof}

\begin{theorem}\label{thm:apx-frechet-tester}
    Given $\delta>0$ and $\eps>0$, Algorithm~\ref{algo:hausdorff-tester} is an $\mathcal{O}(\kappa)$-approximate Fréchet-tester performing $\mathcal{O}(\frac{\kappa}{\eps})$ queries, if $\firstcurve$ and $\secondcurve$ are $\kappa$-straight, have edges with length at most $\delta$, have a $\mathcal{O}(\kappa)$-local free space matrix and satisfy $\max\{d(\vertfirstcurve_1,\vertsecondcurve_1),d(\vertfirstcurve_{\numvertfirst},\vertsecondcurve_{\numvertsecond})\}\leq\delta$.
\end{theorem}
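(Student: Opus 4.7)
The plan is to show that Algorithm~\ref{algo:hausdorff-tester}, when invoked with parameter $\eps'' = \Theta(\eps/\kappa)$, satisfies the requirements of an $\mathcal{O}(\kappa)$-approximate Fréchet-tester with approximation factor $\mu = \frac{3}{2}\kappa+2$. The correctness proof splits into two claims: a trivial \enquote{yes}-case and a \enquote{no}-case that leverages Lemma~\ref{lem:far-frechet-hausdorff} by contraposition.

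For the \enquote{yes}-case, suppose $\df{P}{Q}\leq\delta$. Then there is a monotone Manhattan path in $\fsm_\delta$ of cost $0$, and any such path visits at least one entry in every column index $i\in[\n]$ and every row index $j\in[\numvertsecond]$. Hence no row or column of $\fsm_\delta$ can consist entirely of one-entries, so there are no barrier-rows or barrier-columns, and Algorithm~\ref{algo:hausdorff-tester} returns \enquote{yes} with probability $1$ regardless of the sampling.

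For the \enquote{no}-case, suppose $P$ and $Q$ are $(\eps,\mu\delta)$-far w.r.t.\ the discrete Fréchet distance. Then every monotone Manhattan path from $(1,1)$ to $(\n,\numvertsecond)$ in $\fsm_{\mu\delta}$ has cost strictly greater than $\eps\n$. I would verify that the hypotheses of Lemma~\ref{lem:far-frechet-hausdorff} are met: the curves are $\kappa$-straight, all edges have length at most $\delta$, $\fsm_\delta$ is $t^*$-local for some $t^*=\mathcal{O}(\kappa)$ by assumption, and $\fsm_\delta[1,1]=0=\fsm_\delta[\n,\numvertsecond]$ because $\max\{d(p_1,q_1),d(p_n,q_m)\}\leq\delta$. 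Arguing by contraposition: if $\fsm_\delta$ contained at most $\eps'(\n+\numvertsecond)$ barrier-columns and barrier-rows, Lemma~\ref{lem:far-frechet-hausdorff} would yield a monotone Manhattan path in $\fsm_{\mu\delta}$ of cost at most $4t^*\eps'(\n+\numvertsecond)$. Choosing $\eps' = \Theta(\eps/\kappa)$ small enough makes this upper bound strictly less than $\eps\n$, which contradicts the $(\eps,\mu\delta)$-farness. Therefore $\fsm_\delta$ must have more than $\eps'(\n+\numvertsecond)$ barrier-columns and barrier-rows, i.e.\ $P$ and $Q$ are $(\eps',\delta)$-far w.r.t.\ the discrete Hausdorff distance with $\eps' = \Omega(\eps/\kappa)$.

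Finally, I would invoke Algorithm~\ref{algo:hausdorff-tester} with parameter $\eps'' = \eps'$. By Theorem~\ref{thm:hausdorff-tester}, it returns \enquote{no} with probability at least $4/5$ in this case while using only $\mathcal{O}(1/\eps'') = \mathcal{O}(\kappa/\eps)$ queries, concluding the proof. Since the heavy lifting has already been done in Lemma~\ref{lem:far-frechet-hausdorff}, no individual step is particularly deep; the main care needed is tracking the constants coming out of Lemma~\ref{lem:far-frechet-hausdorff} together with $t^*=\mathcal{O}(\kappa)$ so that $\eps''$ is large enough to contradict the \enquote{far} assumption yet small enough only by a factor $\mathcal{O}(\kappa)$, giving both the stated approximation ratio and query bound.
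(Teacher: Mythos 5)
Your proposal is correct and matches the route the paper intends: the paper leaves the proof of Theorem~\ref{thm:apx-frechet-tester} implicit, and it is exactly your assembly --- the trivial \enquote{yes}-case, the contrapositive of Lemma~\ref{lem:far-frechet-hausdorff} to convert $(\eps,(\tfrac{3}{2}\kappa+2)\delta)$-farness under the Fréchet distance into $\Omega(\eps/\kappa)$-farness under the Hausdorff distance, and then the analysis of Algorithm~\ref{algo:hausdorff-tester} with parameter $\Theta(\eps/\kappa)$. Your tracking of the hypotheses of Lemma~\ref{lem:far-frechet-hausdorff} (corner entries zero via $\max\{d(p_1,q_1),d(p_n,q_m)\}\leq\delta$, and $t^*=\mathcal{O}(\kappa)$) and of the resulting query count $\mathcal{O}(\kappa/\eps)$ is exactly what is needed.
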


Some of assumptions in the theorem above can be relaxed using the techniques from Lemma~\ref{lem:straight} and Section~\ref{sec:ext}. Since this is not our main result, we leave this as an exercise for the avid reader.

\section{Conclusions}
We proposed two property testing algorithms for the discrete Fréchet distance that work via query access to the  $\delta$-free space matrix of the input curves. For the analysis, we introduced the notion of $\tloc$-locality, a crucial property for our algorithms. The first algorithm assumes knowledge of the value of $\tloc$, and uses $\mathcal{O}(\frac{\tloc}{\eps}\log\frac{\tloc}{\eps})$ queries (see Theorem~\ref{thm:frechet-tester-known-t}). Surprisingly the number of queries is independent of $n$, the length of the input curves. The second algorithm does not require explicit knowledge of $\tloc$ and uses $\mathcal{O}(\frac{\log\log\tloc}{\eps}(\tloc^3+\tloc^2\log\n))$ queries (see Theorem~\ref{thm:frechet-tester-unknown-t}). We argued that the locality assumption is reasonable for approximate shortest paths. Moreover, we show that our algorithms can easily be extended to also test the continuous Fréchet distance (Theorem~\ref{thm:frechet-tester-continuous} and Corollary~\ref{cor:final}). 
We also introduced a simple Hausdorff-tester that performs $\mathcal{O}(\frac{1}{\eps})$ queries (Theorem~\ref{thm:hausdorff-tester}). Under some conditions, this Hausdorff-tester can also serve as a simple $\mathcal{O}(\tloc)$-approximate Fréchet-tester (Theorem~\ref{thm:apx-frechet-tester}), which can be proven using the fact that the Hausdorff distance approximates the Fréchet distance for approximate shortest paths.

For future work, it would be interesting to study other notions of realistic input curves, such as $c$-packed curves~\cite{cpacked}. Alas, the free space matrix of two $c$-packed curves is not necessarily $t$-local for any reasonable value of $t$, so a new approach may be needed. Another interesting question is if one can prove lower bounds for the problem. There are linear one-way communication complexity lower bounds for the DTW distance~\cite{dtw-lower} and there are  known reductions from variants of the set disjointness problem to the decision problem of the Fréchet distance~\cite{DriemelP21, MeintrupMR19}. It would be interesting to see if these results can be extended to show lower bounds in our setting.
Our algorithms may also lead to faster randomized algorithms for approximating the Fréchet distance for $t$-straight curves and in this setting can be compared to previous results~\cite{AronovHKWW06}. However, the error model, which allows an $\eps$-fraction of the input to be ignored or modified, may significantly distort the results and more ideas are needed to circumvent this. 

\bibliography{references}

\end{document}